\documentclass[11pt]{article}

\usepackage{wrapfig}
\usepackage{amsmath,amsthm, amssymb}
\usepackage{booktabs,graphicx}
\usepackage{algorithm2e}[ruled]
\usepackage{xcolor}
\usepackage{comment}
\usepackage{subcaption}
\usepackage{multirow}
\usepackage{authblk}
\usepackage{fullpage}
\usepackage{hyperref}
\newcommand{\ji}[1]{\textcolor{red}{ JI: #1}}
\newcommand{\arc}[1]{\textcolor{red}{ ARC: #1}}

\def\threshresphybrid{{m + \sqrt{2\rho n \ln \frac{8n}{\delta}}}}

\newif\ifpaper
\newtheorem{thm}{Theorem}
\newtheorem{defn}{Definition}
\newtheorem{lemma}{Lemma}
\newtheorem{fact}{Fact}
\newtheorem{claim}{Claim}

\def\calA{\mathcal{A}}

\def\calG{\mathcal{G}}
\def\calH{\mathcal{H}}
\def\calI{\mathcal{I}}

\def\calM{\mathcal{M}}

\def\calP{\mathcal{P}}

\def\calR{\mathcal{P}}
\def\calS{\mathcal{S}}

\def\calX{\mathcal{X}}
\def\calY{\mathcal{Y}}

\def\E{\mathbb{E}}
\def\R{\mathbb{R}}


\def\hd{\hat{d}}

\def\bern{\textnormal{Bernoulli}}
\def\ldp{\textsc{LDP}}
\def\bottom{\perp}

\def\rr{\textit{RR}}
\def\DO{\textit{U}}
\def\DegRRCheck{\textit{RRCheck}}
\def\DegRRNaive{\textit{SimpleRR}}
\def\DegHybrid{\textit{Hybrid}}
\def\DP{DP}
\def\DegCheck{\textit{DegCheck}}
\def\RLap{$\textit{R}_{Lap}$}

\def\tO{\tilde{O}}

\newcommand{\squishlist}{
	\begin{list}{$\bullet$}
		{
			\setlength{\itemsep}{1pt}
			\setlength{\parsep}{2pt}
			\setlength{\topsep}{4pt}
			\setlength{\partopsep}{0pt}
			\setlength{\leftmargin}{1.5em}
			\setlength{\labelwidth}{1em}
			\setlength{\labelsep}{0.5em} } }
	
\newcommand{\squishend}{
\end{list}  }
\usepackage{xcolor} 


\begin{document}

\title{Robust Locally Differentially Private Graph Analysis}

\author[1]{Amrita Roy Chowdhury\thanks{The first two authors made equal contributions.}}
\affil[1]{University of Michigan, Ann Arbor, USA}
\author[2]{Jacob Imola$^*$}
\affil[2]{University of Copenhagen, Denmark}
\author[3]{Kamalika Chaudhuri}
\affil[3]{University of California, San Diego, USA}




\maketitle

\begin{abstract}
 Locally differentially private (LDP) graph analysis allows private analysis on a graph that is distributed across multiple users.  However, such computations are vulnerable to poisoning attacks where an adversary can skew the results by submitting malformed data. In this paper, we formally study the impact of poisoning attacks for graph degree estimation protocols under LDP and make three key contributions. First, we show that existing LDP protocols are highly susceptible to poisoning. To address this, we propose novel robust protocols that exploit the natural redundancy in graphs—each edge is shared between two users—to ensure accurate degree estimation even under poisoning. Our protocols are more robust when the adversary is restricted to manipulating their inputs rather than their (noisy) responses. Second, we prove matching lower bounds, establishing that our protocols achieve optimal robustness against both input and response poisoning. These bounds also demonstrate a fundamental separation between the two attack models, consistent with observations from prior work.
Third, we conduct extensive experiments on real-world graphs across a range of practically motivated attacks, showing that our protocols are effective in practice.
\end{abstract}

\section{Introduction}\label{sec:intro}
A distributed graph is defined over a set of users, where each user only knows the edges involving them—that is, each user has access to their own adjacency list. This means each user has a local view of the graph, and no single entity has knowledge of the entire graph. A real-world example of this can be found in decentralized social media platforms, such as Mastodon, Diaspora, PeerTube, where each user (account holder) represents a node, and an edge between two users indicates they are "friends" (i.e., they follow each other). In this scenario, an untrusted aggregator, such as Mastodon itself, may attempt to compute statistics for the entire graph. However, since the edges represent sensitive information (e.g., edges reveal users' personal social connections), users cannot submit their data to the aggregator directly. Instead, they add noise to their data to achieve a local differential privacy (\ldp) guarantee before sharing it with Mastodon. \ldp~has already been deployed by major commercial organizations such as Google~\cite{Rappor1} and Apple~\cite{Apple}. 

\begin{figure}
     \centering
    \includegraphics[width=0.7\columnwidth]{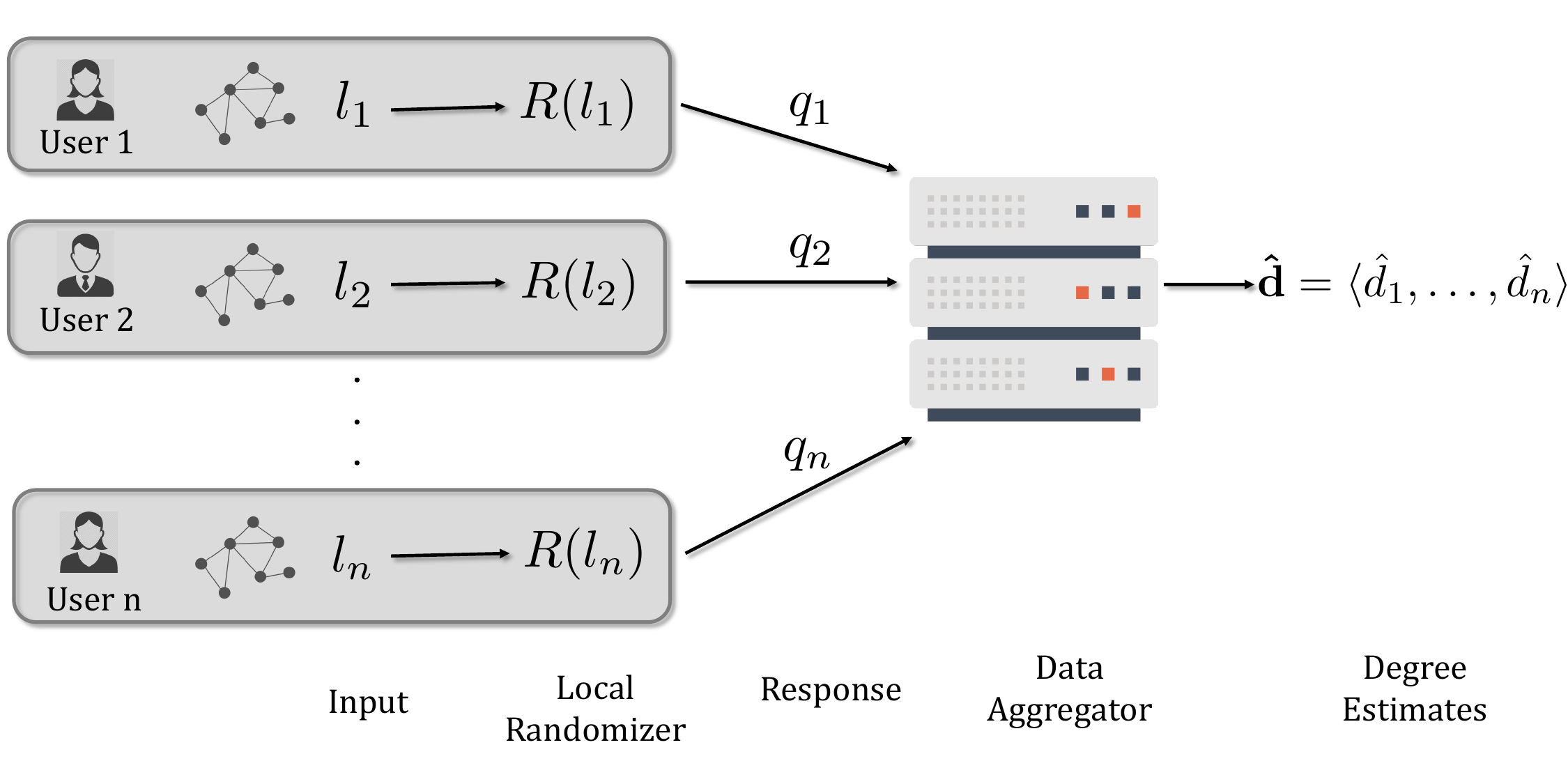}
  \caption{Analysis in the \ldp~setting}
    \label{fig:setting}   
\end{figure}
The distributed nature of \ldp, however, makes it vulnerable to poisoning attacks. For instance, it is both easy and realistic for an adversary to inject fake users into the system (e.g., by creating fake accounts on Mastodon) or compromise the accounts of real users (by hacking) to run untrusted applications on user devices. Consequently, there is no guarantee that these users will comply with the \ldp~protocols. The adversary can send carefully crafted malformed data from these non-compliant users and skew estimates, including those involving only honest users. 

Prior work, which focuses on tabular data~\cite{Cheu21,Cao21,Li22}, finds that  poisoning attacks can be carried out against \ldp{} protocols. However, the impact of poisoning under \ldp{} for graph analysis is largely unexplored. In this paper, we initiate a formal study on the impact of poisoning on \ldp{} protocols for graph statistics. We focus on the task of degree vector estimation, one of the most fundamental tasks in graph analysis~\cite{Graph11}. 

A real-world use-case for a poisoning attack is as follows -- suppose a company is interested in hiring the most influential nodes (users) of a graph for marketing its product on Mastodon and uses a node's degree as its measure of influence.  An adversary might want to promote a specific malicious node to be selected as an influencer or prevent an honest node from being selected as an influencer; concretely, suppose a single malicious user wants to be selected as influential. If the \ldp~protocol used is the Laplace mechanism, where each user directly submits their (noisy) degree to the analyst, then the malicious user can lie flagrantly and report their degree to be $n-1$, the maximum degree possible!



We  address this challenge and design degree estimation protocols that are robust to poisoning attacks. Our algorithms are based on the key observation that graph data is naturally redundant -- the information about an edge $e_{ij}$ is shared by both users $\DO_i$ and $\DO_j$. Importantly, the users do \textit{not} explicitly share this information; rather, it is implicitly shared by the structure of the graph itself. For example, in a social media graph, both users are aware of their mutual "friend" connection (i.e., the edge between them). Leveraging this observation, we propose robust protocols based on two new ideas. First, we use \textit{distributed information} -- we collect the information about each edge from \textit{both} users. The second idea is to \textit{verify} that the collected information is consistent. As long as at least one of $\DO_i$ or $\DO_j$ is honest, the analyst can check for consistency
between the two edge reports and detect malicious behavior. 

A key challenge to our mechanism design is that LDP forces all consistency checks to be probabilistic---edge inconsistencies may arise from both malicious behavior and random noise. Consequently, our protocols will flag malicious users using confidence intervals based on the expected number of inconsistent edges. We must set these intervals precisely, so that malicious users who fabricate many edges are caught, while honest users whose users are never falsely flagged as malicious. This requires us to carefully define what it means for a protcol to behave robustly.
In summary, we are the \textit{first} to study the impact of poisoning  on \ldp~degree estimation for graphs. Our main contributions are: 

\squishlist
    \item \textbf{Novel Formal Framework.} We propose a formal framework for analyzing the robustness of a protocol. 
Specifically, we measure the robustness along two dimensions, honest error and malicious error, which apply to honest and malicious users, respectively. It is important for us to delineate the error guarantees for the two types of users because the meaning of a flag is different (only malicious users should be flagged) and because it is often desirable to have a separate, better error guarantee for users who choose to follow the protocol.
\item \textbf{Lower Bounds on Poisoning Attacks.}  Based on the proposed framework, we study the impact of poisoning on degree estimation under \ldp. The attacks can be classified into two types: $(1)$ input poisoning  where the adversary does not have access to implementation of the \ldp~protocol and can only falsify their input (Fig. \ref{fig:input}), and $(2)$ response poisoning where the adversary can tamper with the \ldp~implementation and directly manipulate the (noisy) responses of the \ldp~protocol (Fig. \ref{fig:response}). 
We provide a \textit{lower bound} for input poisoning that holds for any \ldp~protocol. For response poisoning, we provide a stronger, conditional lower bound for a natural class of protocols. 

\item \textbf{Novel Robust Degree Estimation Protocols.} Leveraging the natural redundancy in graphs, we design robust degree estimation protocols under \ldp~that significantly reduce the impact of poisoning and compute degree estimates with high utility. Our robustness guarantees are \textit{attack-agnostic} -- they work for all attacks on all graphs. When $\epsilon < 0.5$
(corresponding to high privacy), our results  match our lower bounds (up to $\log(\frac{1}{\delta})$ factors) for input poisoning and response poisoning. Our results demonstrate that \ldp~makes a degree estimation protocol more vulnerable to poisoning --- response poisoning is more powerful than input poisoning.

\item  \textbf{Comprehensive Attack Evaluation.} 
We conduct a comprehensive empirical evaluation to validate our theoretical results. First, we assess the threat of poisoning attacks through \textbf{16} real-world motivated scenarios. Our findings reveal that even a small number of malicious users $(m=1\%)$ can inflict significant damage. Next, we demonstrate the robustness of our degree estimation protocols against these attacks. Our results show that our protocols effectively mitigate attacks even with a larger number of malicious users 
$(m=33\%)$ in real-world datasets.

\squishend

\ifpaper The full version of the paper is available in~\cite{Fullpaper}. \fi We have open-sourced our code in~\cite{code}.
\begin{table*}
\centering
\scalebox{0.7}{\begin{tabular}{ccc@{\extracolsep{\fill}} ccccc}
\toprule 
\multirow{2}{*}{\textbf{Protocol}} & \multicolumn{3}{c}{\textbf{Response Poisoning}} & \multicolumn{3}{c}{\textbf{Input Poisoning}} & \multirow{2}{*}{\textbf{Privacy Guarantee} }\\\cline{2-4}\cline{5-7}
& \textbf{Honest Error} & \textbf{Malicious Error} & & \textbf{Honest Error} & \textbf{Malicious Error} & & \\ 
\midrule\\
\RLap & $\tO(\frac{1}{\epsilon})$ & $n-1^*$ & Thm.~\ref{thm:response:laplace}& $\tO(\frac{1}{\epsilon})$ & $n-1 (\delta = \frac{1}{2})$ & Thm.~\ref{thm:input:laplace} & $\epsilon$-Edge \ldp\\ 
\DegRRNaive & $\tO(m + \frac{m}{\epsilon} + \frac{\sqrt{n}}{\epsilon})$ & $n-1^*$   & Thm.~\ref{thm:response:naive} & $\tO(m + \frac{\sqrt{n}}{\epsilon})$ & $n-1$ ($\delta = \frac{1}{2})$  & Thm.~\ref{thm:input:naive} & $\epsilon$-Edge \ldp\\ 
\DegRRCheck~(Ours) & $\tO(m + \frac{m}{\epsilon} + \frac{\sqrt{n}}{\epsilon})$ & $\tO(m + \frac{m}{\epsilon} + \frac{\sqrt{n}}{\epsilon})$ & Thm.~\ref{thm:response:check} & $\tO(m + \frac{\sqrt{n}}{\epsilon})$ & $\tO(m + \frac{\sqrt{n}}{\epsilon})$ & Thm.~\ref{thm:input:check} & $\epsilon$-Edge \ldp\\ 
\DegHybrid~(Ours) & $\tO(\frac{1}{\epsilon})$ & $\tO(m + \frac{m}{\epsilon} + \frac{\sqrt{n}}{\epsilon})$ & Thm.~\ref{thm:response:hybrid} & $\tO(\frac{1}{\epsilon})$ & $\tO(m + \frac{\sqrt{n}}{\epsilon})$ & Thm.~\ref{thm:input:hybrid} & $\epsilon$-Edge \ldp \\ \bottomrule
\end{tabular}}
\caption{The $\tilde{O}$ notation asymptotically holds for $\epsilon<1$, and hides factors of $\log \frac{1}{\delta}$ where $\delta$ is the probability of failure. \RLap~refers to a naive baseline approach where the users report their noisy degree using the Laplace mechanism. \DegRRNaive~refers to a naive baseline approach based on the randomized response. $^*$  indicates that there exists a worst-case attack that can \textit{always} skew the degree estimates by $n-1$. \DegRRCheck~and \DegHybrid~refer to the two robust degree estimation protocols proposed in this paper -- these protocols are \textit{optimal}, i.e., match the corresponding lower bounds (see Sec. \ref{sec:opti}). All results are attack-agnostic. 
}\label{tab:results}
\end{table*}

 \section{Preliminaries}\label{sec:background}
\textbf{Notation.}
Let $G = (V, E)$ be an undirected 
graph with $V$ and $E$ representing the set of nodes (vertices) and edges, respectively. We assume a graph with $n \in \mathbb{N}$ nodes, i.e., $V = [n]$ where $[n]$ denotes the set $\{1,2,\cdots,n\}$. Let $\calG_n$ denote the domain of all graphs with $n$ nodes. Each node $i \in V$ corresponds to a user $\DO_i$. Let $l_i \in \{0,1\}^n$ be the adjacency list for $\DO_i, i \in [n]$ where bit $l_i[j], j \in [n]$ encodes the edge $e_{ij}$ between a pair of users $\DO_i$ and $\DO_j$. Specifically, $l_i[j]=1$ if $e_{ij}\in E$ and $e_{ij}=0$ otherwise. 
Let $\textbf{d}=\langle d_1, \ldots, d_n\rangle \in \R^n$ denote the vector of degrees in $G$. $m$ denotes the number of malicious users. 
\subsection{Local Differential Privacy for Graphs}\label{sec:ldp}
Our paper focuses on the \textit{local} model of \DP~which consists of a set of individual users (\DO) and an untrusted data aggregator (analyst); each user perturbs their data using a (local) \DP~algorithm and sends it to the aggregator which uses these noisy data to estimate certain statistics of the entire dataset. 
 The most popular privacy guarantee for graphs in the local setting is known as \textit{edge \ldp}~\cite{Graph7,Graph11,qin2017generating} which protects the existence of an edge between any two users. In other words, on observing the output, an adversary cannot distinguish between two graphs that differ in a single edge. 
\begin{defn}\label{def:eldp}
    Let $R: \{0,1\}^n\mapsto \mathcal{X}$ be a randomized algorithm that takes an adajcency list $l \in \{0,1\}^n$
as input. We say $R(\cdot)$ provides $\epsilon$-edge LDP if for any two
neighboring lists $l,l' \in \{0,1\}^
n$ that differ in one bit (i.e., one
edge) and any output $s \in \mathcal{X}$,
\begin{gather} \label{eq:ledp} \mathrm{Pr}
[R(l) = s]\leq e^{\epsilon}\mathrm{Pr}[R(l') = s]
\end{gather}
\end{defn}

Randomized Response ($RR_\rho$)~\cite{RR} releases a bit $b \in \{0,1\}$ by
flipping it with probability $\rho = \frac{1}{1+e^\epsilon}$. We extend the
mechanism to inputs in $\{0,1\}^n$ by flipping each bit independently with
probability $\rho$ which satisfies
 $\epsilon$-edge \DP. 
 
The Laplace mechanism(\RLap) is a standard algorithm to achieve \DP~\cite{Dwork}. For degree estimation, each user $\DO_i$ simply reports $\tilde{d}_i=d_i+\eta, \eta \sim Lap(\frac{1}{\epsilon})$ where $Lap(b)$ represents the Laplace distribution with scale parameter $b$. This mechanism satisfies $\epsilon$-edge \DP. 

\section{Problem Overview}\label{sec:overview}
\subsection{System Setting}  

We consider \textit{ single round,  non-interactive protocols} in which each user $\DO_i, i \in [n]$ runs a local randomizer
$R_i : \{0,1\}^n \rightarrow \calX$ for some output space $\calX$ on their adjacency lists $l_i$ (Fig. \ref{fig:setting}). By \emph{non-interactive}, we mean that the local randomizers are applied independently by each user and their outputs are sent to the data aggregator in a one-shot communication. The data aggregator then collects the noisy responses and applies a post-processing function 
$D : \calX^n \rightarrow (\mathbb{N} \cup \{\bottom\})^d$ to produce final degree estimates  $\hat{\textbf{d}}=\langle \hd_1, \ldots, \hd_n\rangle$. Here, $\hd_i$ denotes the aggregator's estimate for $d_i$ for user $\DO_i$. The aggregator is also allowed to output a special symbol $\bottom$ for a user $\DO_i$ if it determines that the estimate $\hat{d}_i$ is invalid (due to suspected malicious behavior).
\subsection{Threat Model} \label{sec:threat}
  We consider a set of malicious users who may mount a data poisoning attack $\calA$. Such an attack is characterized by a subset $\calM \subseteq [n]$ of user indices corresponding to the malicious users, and a function $A$ that has access to the adjacency lists ${l_i : i \in \calM}$. The function $A$ may generate tampered inputs or response for each malicious user, depending on the specific type of poisoning attack. All malicious users are allowed to \textit{collude} together and coordinate their actions. Additionally, they have complete knowledge of the protocol, and are free to launch arbitrary, potentially adaptive attacks based on this knowledge. We do not place restrictions on how malicious users are instantiated in practice—they may be fake accounts created by the adversary, compromised legitimate accounts, or a mix of both.
\\ We refer to $\calH = [n] \setminus \calM$ as the set of indices corresponding to honest users who strictly follow the protocol.


\par Based on the specifications of the practical implementation of the \ldp, there is an important distinction between the way in which the malicious users may carry out their poisoning attacks. We outline them as follows:
\squishlist
    \item \textbf{Input Poisoning.} Here the users do not have access to the implementation of the \ldp{} randomizer. For instance, mobile applications might run proprietary code which the users do not have permission to tamper with. Consequently, an attack is only allowed to falsify input underlying input data, i.e., the function $A$ produces arbitrary false adjacency list $l_i'$ for each $i \in \calM$, and then the user's report is $q_i = R_i(l_i')$ (Fig. \ref{fig:input-atk}).
    \item \textbf{Response Poisoning.} This is a stronger threat model where a malicious user has direct control over the implementation of the \ldp{} randomizer. For instance, the user could hack into the mobile application collecting their data. Consequently, here the function $A$ produces arbitrary responses $r_i$ for $i \in \calM$, and each malicious user sends $r_{i}$ (Fig. \ref{fig:response}) to the aggregator, bypassing the application of $R_i$.
\squishend

Note that input poisoning  applies to \textit{any} protocol, private or not, because a user is free to change their input anytime. However, response poisoning attacks are unique to \ldp{} -- the distinction between an user's input and their response is a characteristic of \ldp{} which results in a separation in their efficacy (Sec. \ref{sec:input-attacks}). 
\begin{figure}
  \begin{subfigure}[b]{0.49\linewidth}
        \centering
        \includegraphics[scale=0.23]{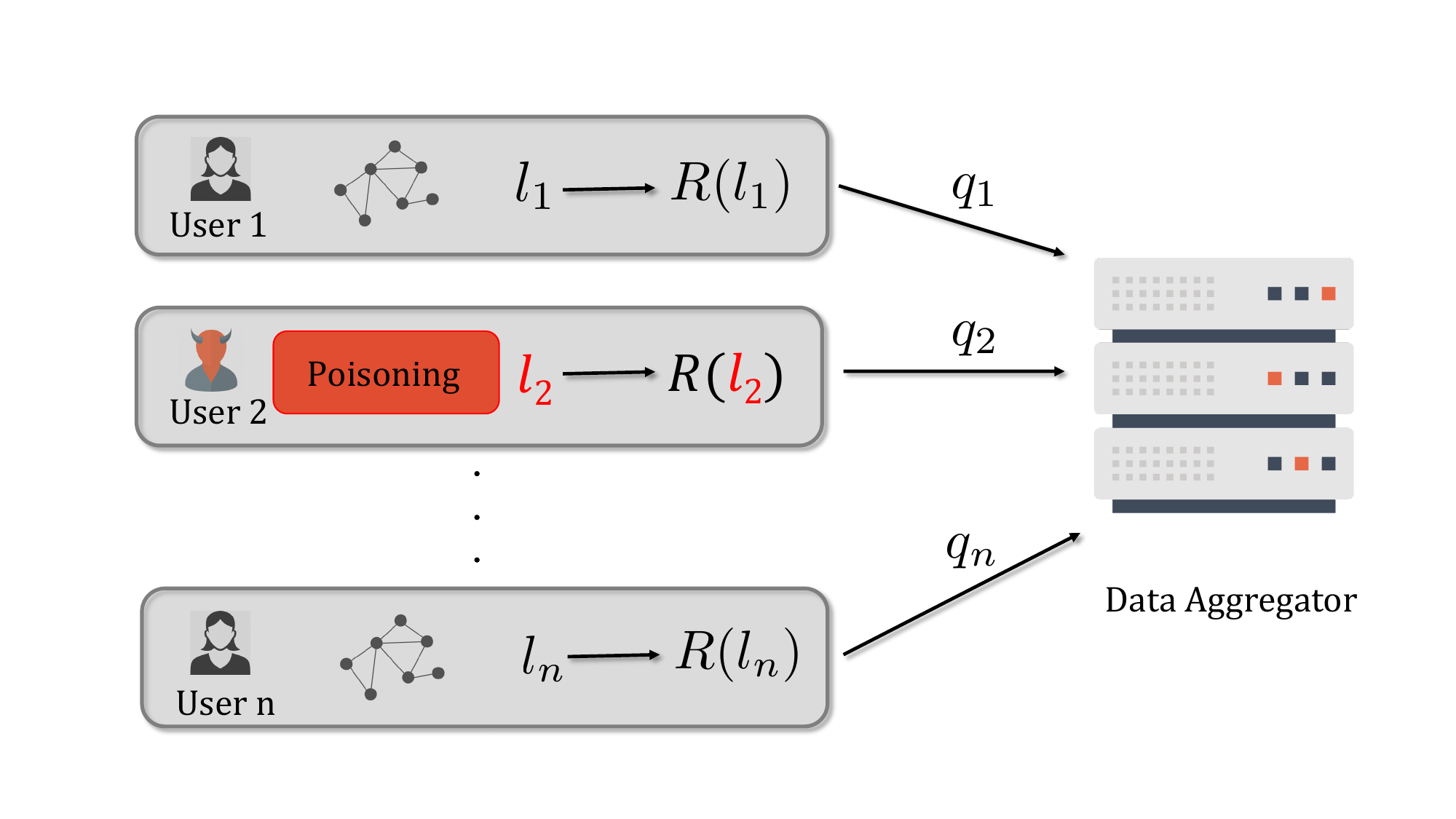}  
       \caption{Input Poisoning Attack}
    \label{fig:input-atk} 
        \end{subfigure} 
         \begin{subfigure}[b]{0.49\linewidth}
    \centering \includegraphics[scale=0.23]{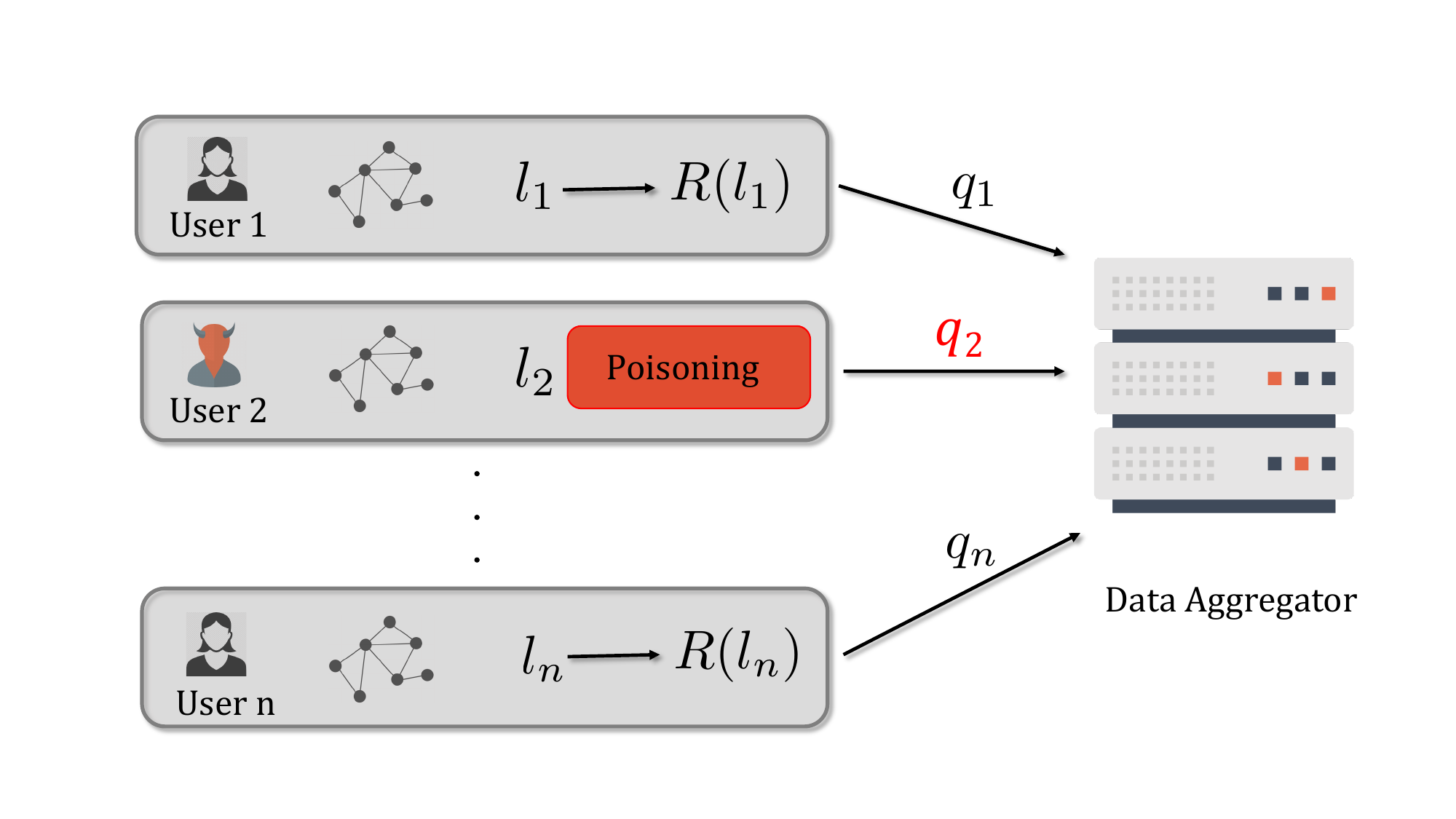} 
      \caption{Response Poisoning Attack}
    \label{fig:response}\end{subfigure}

\end{figure}

\subsection{Motivating Attacks}\label{sec:attacks} 

For the Laplace mechanism, \RLap, and randomized response mechanism, $RR_\rho$,  outlined in Sec.~\ref{sec:ldp}, we present two concrete motivating attacks. We consider the attacks in the context of the task of influential node identification, where the goal of the data aggregator is to identify certain nodes with a high measure of ``influence". This is useful for various applications where the influential nodes are selected for disseminating information/advertisement. Here, we consider degree, a simple yet often effective measure of importance. For example, nodes with highest are often selected as influencers or representatives. With the goal of modifying the degrees of different users, a malicious user may  carry out the following attacks:\\
\noindent\textbf{Degree Inflation Attack.} In this attack, a target malicious user $\DO_t$ wants to get themselves selected as an high-degree node. For \RLap, since each user sends their degree $d_i$ plus Laplace noise, the target user maximizes their degree estimate by simply sending $n-1$. 

For $RR_\rho$, the target malicious user $\DO_t$ colludes with a set of other users  (for instance, by injecting fake users) as follows. All the non-target malicious users $\DO_i, i \in \calM\setminus t$ report $1$ for the edges corresponding to $\DO_t$. Additionally, $\DO_t$ reports an all-one list.    
\\\noindent\textbf{Degree Deflation Attack.}   In this attack, the target is an honest user $\DO_t \in \calH$ who is being victimized by a set of malicious users (for instance, the adversary compromises a set of real accounts with an edge to $\DO_t$) -- $\calM$ wants to ensure that $\DO_t$ is \textit{not} selected as a high-degree node. The attack strategy is to decrease the aggregator's degree estimate $\hd_t$ for $\DO_t$. For \RLap, there is no way for the malicious party to influence $\hd_t$. However, for randomized response, the malicious users may report $0$ for each edge in $\calM$ connected to $\DO_t$, reducing their degree by this amount. 

\section{Quantifying Robustness}\label{sec:framework}

In this section, we present our formal framework for analyzing the robustness of a degree estimation protocol. Specifically, we measure the robustness along two dimensions, \textit{honest error}  and \textit{malicious error}. These assess how much the degree estimate (of either an honest or malicious user) may tampered with by an attack. We make distinguish the two types of errors because they have slightly different definitions (only malicious users should be flagged), and protocols may possibly yield smaller errors for users who follow the protocol honestly.
\\\noindent\textbf{Honest Error.} The \textit{honest error} of a protocol quantifies the manipulation of an honest user's estimator. Specifically,  malicious users can adversely affect an honest user $\DO_i \in \calH$ by $(1)$ tampering with the value of $\DO_i$'s degree estimate $\hat{d}_i$ (by introducing additional loss in accuracy), or $(2)$ attempting to mislabel $\DO_i$ as malicious (by influencing the aggregator to report $\hat{d}_i=\bot$. 


In order to account for both the above scenarios, for every honest user $\DO_i \in \calH$, we define the define the error function as the absolute difference between their true degree estimate $d_i$ and the noisy estimate $\hat{d}_i$. If $\hat{d}_i = \bot$, i.e., the user is misflagged, the error is maximized ($\infty$), since this outcome is undesirable. 
\[ 
    \mathrm{err}^{hon}(d_i, \hat{d}_i) = \begin{cases} \infty & \hat{d}_i = \bot \\ |d_i - \hat{d}_i| & \text{otherwise} \end{cases}.
\]
We seek to design protocols which are attack and graph agnostic, and thus we define honest error to consider all possible graphs and all possible attacks of size $m$, as follows:

\begin{defn}\label{def:correct}(\textbf{Honest Error}) Let 
  $\calR =\langle D, R_1, \ldots, R_n\rangle$ be a degree estimation protocol, let $m$ be a positive integer such that $m < n$, and let $\delta$ be a failure parameter. We say $\calR$ achieves $\alpha$-honest error for input (resp. response) poisoning if for all input (resp. response) poisoning attacks $\calA$ with $|\calM| \leq m$, and all graphs $G \in \calG_n$,
  \begin{gather}  
    \mathrm{Pr}\left[\textstyle{\max_{i\in \calH}} \mathrm{err}^{hon}(d_i, \hat{d}_i) \leq \alpha \right]\geq 1-\delta,\label{eq:correct1}
  \end{gather}
where the above probability is taken w.r.t  the randomness in $\calR$ and $\calA$ (both of which influence $\hat{\textbf{d}}$).
\end{defn}
\noindent\textbf{Malicious Error.} The \textit{malicious error} of a protocol quantifies the manipulations of a malicious user's estimator. 
If the protocol flags a malicious user at index $\DO_i \in \calM$ and returns $\hat{d}_i = \bottom$, this is an acceptable (and possibly desirable) outcome, and thus it incurs the minimum possible error. Otherwise, again we use the absolute difference between two degree estimates as in the following function:
\[
    \mathrm{err}^{mal}(d_i, \hat{d}_i) = \begin{cases} 0 & \hat{d}_i = \bot \\ |d_i - \hat{d}_i| & \text{otherwise} \end{cases}.
\]
Now, we are able to define malicious error in a similar way:
\begin{defn}\label{def:sound}(\textbf{Malicious Error}) Let $\calR = \langle R_1, \ldots, R_n\rangle$ be a degree estimation protocol, let $m$ be a positive integer such that $m < n$, and let $\delta$ be a failure parameter. We say $\calR$ achieves $\alpha$-malicious error for input (resp. response) poisoning if for all input (resp. response) poisoning attacks $\calA$ with $|\calM| \leq m$, and all graphs $G \in \calG_n$,
  \begin{gather}  
    \mathrm{Pr}\left[\textstyle{\max_{i\in \calM}} \mathrm{err}^{mal}(d_i, \hat{d}_i) \leq \alpha \right]\geq 1-\delta,\label{eq:sound1}
  \end{gather}
  where the above probability is taken w.r.t randomness in $\calR$ and in $\calA$.
\end{defn}

\textbf{Note.} As evident from Defns.~\ref{def:correct} and~\ref{def:sound}, our results are \emph{attack-agnostic}. In other words, they hold against \textit{all} types of attacks, including those involving arbitrary collusion strategies and even adaptive adversaries. Moreover, our results are completely general-purpose and make no assumptions about the structure of the graphs or data distribution.

\section{Robustness Lower Bounds for LDP Protocols}\label{sec:lb}

Here, we present lower bounds on the error of \ldp~protocol under poisoning. We start with input poisoning. We reduce the problem of degree vector estimation to a task of distinguishing between two scenarios, and then appeal to information-theoretic lower bounds in \ldp. In our first scenario, consider an ``honest world'' where user $\DO_n$ follows the protocol honestly, but the other malicious users manipulate their inputs to erase any edge to $\DO_n$. In the second scenario, or the ``malicious world'', user $\DO_n$ behaves maliciously and inflates his degree (using input poisoning) by $C(m + \frac{\sqrt{n}}{\epsilon})$, for a correctly chosen constant $C$, such that it matches the degree of $\DO_n$ in the honest world. The other malicious users manipulate their inputs to accordingly. The key idea is to design the input poisoning of $\DO_n$ such that his output is identical to what it would be in the honest world. Thus, the only feature left to distinguish the two worlds are the responses from $\DO_1, \ldots, \DO_{n-1}$, which are subject to information-theoretic lower bounds on LDP~\cite{duchi2013local}. This is the key to obtaining the $\frac{\sqrt{n}}{\epsilon}$ term in the error bound---if the output of $\DO_{n}$ in the malicious world were not crafted to be indistinguishable from the output in the honest world, the well-known statistical LDP bounds would not apply. Formally, our lower bound is:

\begin{thm}\label{thm:input-lb}
    For any $n, m$ such that $m < \frac{n}{10}$, $\epsilon < 0.5$, $\delta < 0.1$, there is no non-interactive, $\epsilon$-edge LDP protocol achieving both $\frac{m}{80} + \frac{\sqrt{n}}{80\epsilon}$ honest error and $\frac{m}{80} + \frac{\sqrt{n}}{80\epsilon}$ malicious error against input poisoning.
\end{thm}

The $\frac{\sqrt{n}}{\epsilon}$ term comes from the noise due to LDP, and the $m$ term comes from the malicious users modifying their inputs. Since input poisoning is a subset of response poisoning, this lower bound applies to both input and response poisoning. This theorem applies for $\epsilon < 0.5$, and typically small values of $\epsilon$ are of the most interest (corresponding to high privacy). With more careful bookkeeping of constants, the maximum value of $\epsilon$ can be increased. 

However, a response poisoning attack can do worse since malicious users are no longer constrained to follow the protocol. To carry out such an attack, these users send responses that are not valid applications of the edge-\ldp~protocol they are supposed to follow, but rather exploit the protocol's structure. In general, reasoning about arbitrary protocols that satisfy Definition~\ref{def:eldp} is challenging. Therefore, we introduce an additional structural assumption for edge-\ldp~protocols -- instead of applying the protocol globally to their entire adjacency list as in Definition~\ref{def:eldp}, users apply independent \ldp~protocols  to \textit{each} bit of their adjacency list.

\begin{defn}\label{def:efact}
    An $\epsilon$-edge LDP protocol $R$ is \emph{bitwise factorable} if it may be written as an independent (non-interactive) combination $S_1(l[1]), \ldots, S_n(l[n])$, where each $S_i:\{0,1\} \mapsto \calX_i$ satisfies \[
    \Pr[S_i(l[i]) = s_i] \leq e^\epsilon \Pr[S(l[i]') = s_i)]
    \]
    for any $l[i], l[i]' \in \{0, 1\}$, and $s_i \in \calX_i$.
\end{defn}

Randomized response, when applied to each bit, satisfies Definition~\ref{def:efact}. However, if each user releases their approximate degree using the Laplace mechanism, this is not bitwise factorable as the degree is a function of the entire adjacency list.

We provide a lower bound for response poisoning for bitwise factorable protocols as follows. 

\begin{thm}\label{thm:output-lb}
    For any $n, m$ such that $m \leq \frac{n}{10}$, $\epsilon < 0.5$, and $\delta < 0.1$, there is no non-interactive, bitwise factorable $\epsilon$-edge LDP protocol achieving $\frac{m}{80\epsilon} + \frac{\sqrt{n}}{80\epsilon}$ honest error and $\frac{m}{80\epsilon} + \frac{\sqrt{n}}{80 \epsilon}$ malicious error against response poisoning. 
\end{thm}

Intuitively, the attack uses a technical result from~\cite{kairouz2015composition} which shows that any DP mechanism with two inputs can be viewed as a post-processing of the randomized response mechanism. This allows the extension of an attack on randomized response, where malicious users always send $1$ to inflate the degree of a target, to all bitwise factorable protocols. This is key to obtaining the larger $\frac{m}{\epsilon}$ error term introduced by the malicious users, which separates input and response poisoning. We show in the next section that there are protocols which asymptotically match them for our $\epsilon < 0.5$ regime. This demonstrates that response poisoning is more powerful than input poisoning for the natural class of bitwise factorable protocols. We conjecture that the separation holds for all protocols, as well. Proofs of the above theorems are in \ifpaper the full paper~\cite{Fullpaper}\else App. \ref{app:lb-proofs} \fi.
\SetKwComment{Comment}{/* }{ */}
\RestyleAlgo{ruled} 
\SetKwInOut{Parameter}{Parameter}
\begin{algorithm}
  \caption{\DegRRNaive$: \{0,1\}^{n\times n}\mapsto \{\mathbb{N}\cup \{\bot\}\}^n$}\label{alg:degrrnaive}
  \Parameter{$\epsilon$ - Privacy parameter}
  \KwData{$\{l_1,\cdots,l_n\}$ where $l_i \in \{0,1\}^n$ is $\DO_i$'s adjacency list}
  \KwResult{$(\hat{d}_1,\cdots, \hat{d}_n)$ where $\hat{d}_i$ is $\DO_i$'s degree estimate}
  \Comment{Users}
  $\rho = \frac{1}{1+e^{\epsilon}}$\;
    \lFor{$i \in [n]$}{
      $q_i = \rr_\rho(l_i)$
    }
  \Comment{Data Aggregator}
    \For{$i \in [n]$}{
      $count_i^1 = \sum_{j < i} q_j[i] + \sum_{i < j} q_i[j]$\;
     $\hd_i = \frac{1}{1-2\rho}(count^1_i - \rho (n-1))$\;
    }
    \KwRet{$(\hd_1, \hd_2, \ldots, \hd_n)$}
\end{algorithm}

Now, we show that naive, baseline protocols  fall far from these lower bounds. 
\\\textbf{Laplace Mechanism.} The simplest mechanism for estimating degree is the Laplace mechanism, \RLap, where each user directly reports their degree estimates plus $Lap(\frac{\log(1/\delta)}{\epsilon})$ noise. Consequently, the degree estimate of an honest user \textit{cannot} be tampered with at all, and each user attains just $\frac{\log(1/\delta)}{\epsilon}$ error per user. Thus, the Laplace mechanism has $\frac{\log(1/\delta)}{\epsilon}$ honest error, and the per-user loss in accuracy is comparable with the Laplace mechanism in central DP. On the flip side, a malicious user can flagrantly lie about their estimate without detection, meaning the protocol does \emph{not} attain $\alpha$-malicious error for any $\alpha < n-1$ and $\delta < 1$. In other words, there
exists a graph and an attack against \RLap{} in
which a malicious user is \textit{guaranteed} to manipulate their true degree by $n -1$. 
\\\textbf{Randomized Response.} Now, consider the mechanism where users release their edges via randomized response. As each edge is shared by two users, edge $(i,j)$ is reported by just one of the users based on their index. We will refer to this approach as $\DegRRNaive$ (Alg. \ref{alg:degrrnaive}). The aggregator counts the total number of edges to user and then debias the estimate of the degree. Since up to $m$ of a user's edges may be reported by malicious user, we can show this protocol has $m + \frac{m+\sqrt{n \log(1/\delta)}}{\epsilon}$ honest error (with the $\sqrt{n}$ term coming from the error of randomized response). However, since a malicious user may in the worst case fabricate \textit{all} of their edges, this protocol again does \textit{not} attain $\alpha$-malicious error for any $\alpha < n-1$ and $\delta < 1$. In other words, a malicious user can always get away with the worst-case $n-1$ manipulation. Details are in App.~\ref{app:baselines}.

\section{Improving Malicious Error with Verification}\label{sec:robust-rr-checks}  

In this section, we present our proposed protocol for robust degree estimation. 
As discussed in the previous section, the naive \DegRRNaive{} protocol offers poor malicious error. To tackle
this, we propose a new protocol, \DegRRCheck, that enhances
\DegRRNaive{} with a consistency check based on the
redundancy in graph data and flags users if they fail the check. Consequently, \DegRRCheck{} improves the malicious error significantly. We observe that with higher privacy (lower $\epsilon$), the protocol is less robust. 

\subsection{\DegRRCheck{} Protocol} \label{sec:protocol:check}   
The \DegRRCheck{} protocol is described in Alg.~\ref{alg:degrrcheck} and works as follows. \DegRRCheck{} enhances the data collected by \DegRRNaive{} with verification -- for edge $e_{ij} \in E$, instead of collecting a noisy response from just one of the users
$\DO_i$ or $\DO_j$, $\DegRRCheck{}$ collects a noisy response from \emph{both} users. This creates
data redundancy which can then be checked for consistency. Specifically, the estimator counts only those edges $e_{ij}$ for which \textit{both} $\DO_i$ and $\DO_j$ are consistent and report a $1$. 
 The count of noisy edges involving user $\DO_i$ is then given by:
\[  \textstyle{count_i^{11} = \sum_{j\in [n]\setminus i} q_{i}[j] q_{j}[i].}\]
The unbiased degree estimate of $\DO_i$ is computed as follows:
\begin{equation}\label{eq:deg-est}
    \hd_i = \frac{count_i^{11} - \rho^2(n-1)}{1-2\rho}.
\end{equation}

For robustness against malicious users, \DegRRCheck{} imposes a check on the number of instances of
inconsistent reporting ($\DO_i$ and $\DO_j$ differ in their respective bits reported for their mutual
edge $e_{ij}$). For every user $\DO_i$,  the protocol has an
additional capability of returning $\bot$ whenever the  consistency check fails, indicating
that the aggregator believes that user $\DO_i$ is malicious. The intuition is that if the user $\DO_i$ is malicious and attempts to poison a lot of the edges, then there would be a large number of inconsistent reports corresponding to the edges to honest users. \DegRRCheck{}  counts the number of inconsistent reports for user $\DO_i$ as:
\[
count_i^{01} = \textstyle{\sum_{j=1}^n (1-q_{i}[j]) q_{j}[i],}
\]
i.e., the number of edges connected to user $\DO_i$\footnote{It is symmetric (and doesn't give additional information) to count edges for which user $\DO_i$ reports $1$ and $\DO_j$ reports $0$.} for
which they reported $0$ and user $\DO_j$ reported $1$. Intuitively, the check computes the expected number of inconsistent reports assuming $\DO_i$ to be honest and flags $\DO_i$ in case the reported number is outside a confidence interval. Formally, if
\begin{equation}\label{eq:deg-check} 
    |count_{i}^{01} - \rho(1-\rho)(n-1)| \leq \tau,
\end{equation}
then set $\hd_i = \bottom$, where $\tau=m + \sqrt{3 n \rho \ln \tfrac{2}{\delta}}$ is a threshold.  This check forces a malicious user to send a response with only a small number of poisoned edges (as allowed by the threshold $\tau$), thereby significantly restricting the impact of poisoning. For example, they are not able to indicate they are connected to all users in the graph, as this would produce a large number of inconsistent edges. 

Note that due to the randomization required for \ldp, some honest users might also fail the check. However,  we observe that for two honest users $\DO_i$ and $\DO_j$,  the product term
$(1-q_{i}[j]) q_{j}[i]$ follows the $ \bern(\rho(1-\rho))$ distribution, irrespective of whether the edge $e_{ij}$ exists. Consequently $count_{i}^{01}$  is tightly concentrated around its mean.  This ensures that the probability of mislabeling an honest user (by returning $\bottom$) is low. 
\setlength{\textfloatsep}{4pt}

\begin{algorithm}
  \caption{\DegRRCheck: $\{0,1\}^{n\times n}\mapsto \{\mathbb{N}\cup \{\bot\}\}^n$}\label{alg:degrrcheck}
  \Parameter{$\epsilon$-Privacy parameter; $\tau$ - Threshold for check.}

  \KwData{ $\{l_1,\cdots,l_n\}$ where $l_i \in \{0,1\}^n$ is $\DO_i$'s adjacency list}
  \Comment{Users}
  $\rho=\frac{1}{1+e^{\epsilon}}$\;
    \lFor{$i \in [n]$}{
      $q_i = \rr_\rho(l_i)$
    }
  \Comment{Data Aggregator}
    \For{$i \in [n]$}{
      $count_i^{11} = \sum_{j \in [n] \setminus i} q_{i}[j] q_{j}[i]$\;
      $count_i^{01} = \sum_{j \in [n] \setminus i} (1-q_{i}[j])q_{j}[i]$\;
      \uIf{$|count_{i}^{01} - \rho(1-\rho)(n-1)| \leq \tau$}{
        $\hd_i = \frac{1}{1-2\rho}(count_i^{11} - \rho^2 (n-1))$\;
      }\lElse{
        $\hd_i = \bottom$
      }
    }
    \KwRet{$(\hd_1, \hd_2, \ldots, \hd_n)$ where $\hat{d}_i$ is $\DO_i$'s degree estimate}
\end{algorithm}


\begin{thm}\label{thm:response:check}
The $\DegRRCheck$ protocol run with threshold \scalebox{0.9}{$\tau = m + \sqrt{2\rho n \ln \tfrac{4n}{\delta}}$} achieves
 \[2m \left(\frac{e^\epsilon+1}{e^{\epsilon}-1}\right) + 4\sqrt{n}\frac{ \sqrt{(e^\epsilon+1)\ln \frac{4n}{\delta}}}{e^\epsilon-1}\] honest and malicious error for response poisoning.
\end{thm}
This theorem is proved in \ifpaper the full paper \cite{Fullpaper} \else App.~\ref{app:b3a3}\fi. The additional verification of \DegRRCheck{} results in a clear improvement --- the malicious users can now skew their degree estimates only by a limited amount (as determined by the threshold $\tau$)  or risk getting detected,  which results in a better malicious error. Specifically, for $\epsilon$ less than a small constant (corresponding to high privacy), a malicious user can now only skew their degree estimate by at most $ \tilde{O}\big(m(1+\frac{1}{\epsilon}) + \frac{\sqrt{n}}{{\epsilon}}\big)$ for response poisoning attacks, respectively (as compared to $n-1$ in the naive case). 


The errors worsen with smaller $\epsilon$ since at lower privacy, the collected responses are noisier thereby making it harder to distinguish honest users from malicious ones. In particular, a protocol should not return $\bottom$ for honest users (i.e., mislabel them) to ensure good accuracy. Consequently, more malicious error is tolerated before a $\bottom$ is returned for a malicious user. This is evident in Eq.~\ref{eq:deg-check}--observe the threshold $\tau$ grows with smaller $\epsilon$. We expand on this price of privacy in Sec. \ref{sec:opti}. 

Interestingly for response poisoning, the degree deflation attack (Sec. \ref{sec:attacks}) represents a worst-case attack for accuracy -- the attack can skew an honest user's degree estimate by $\Omega\big(m(1+\frac{1}{\epsilon})+\frac{\sqrt{n}}{\epsilon}\big)$.  Similarly, the degree inflation attack (Sec. \ref{sec:attacks}) can skew a malicious user's degree estimate by $\Omega\big(m(1+\frac{1}{\epsilon})+\frac{\sqrt{n}}{\epsilon}\big)$) resulting in the worst-case malicious error. 

\section{Improving Honest Error with a Hybrid Protocol}\label{sec:hybrid}
The robustness guarantees for $\DegRRCheck{}$ contain a $\tilde{O}(\frac{\sqrt{n}}{\epsilon})$ term coming from the error in randomized response.
This is inherent in \textit{any} randomized response based mechanism  ~\cite{error1,error2,error3} since each of the $n$ bits of the adjacency list need to be independently randomized. Unfortunately, this dependence on $n$ has an adverse effect on the utility of the degree estimates. Typically, real-world graphs are sparse in nature with maximum degree $d_{max}\ll n$. Hence, the $\tilde{O}(\frac{\sqrt{n}}{\epsilon})$ noise term completely dominates the degree estimates resulting in poor accuracy for the honest users. 
On the other hand, \RLap{} provides a more accurate degree estimate for the honest users but has the worst-case malicious error -- a malicious user can perturb their input by $(n-1)$. In this section, we present a mitigation strategy. The key idea is to combine the two approaches and use a hybrid protocol, \DegHybrid, that achieves the best of both worlds -- honest error of \RLap, and malicious error of \DegRRCheck.

The $\DegHybrid$ protocol is outlined in Alg. \ref{alg:deghybrid} and described as follows. Each user $\DO_i$ prepares two responses -- the noisy adjacency list, $q_i$, randomized via $\textsf{RR}_{\rho}$, and a noisy degree estimate, $\tilde{d}^{lap}_i$, perturbed via \RLap, and sends them to the data aggregator. $\DO_i$ divides the privacy budget between the two responses according to some constant $c \in (0,1)$. The data aggregator first processes each list $q_i$ to employ the same consistency check on $count^{01}_i$ as that of the \DegRRCheck{} protocol (Step 9).
In case the check passes, the aggregator computes the unbiased degree estimate $\tilde{d}^{rr}_i$ from $count_i^{11}$, in the exact same way as \DegRRCheck. Note that $\tilde{d}^{rr}_i$ and $\tilde{d}^{lap}_i$ are the noisy estimates of the \textit{same} ground truth degree, $d_i$, computed via two different randomization mechanisms. To this end, the aggregator employs a second check (Step 11) to verify the consistency of the two estimates:
\begin{gather*}
|\tilde{d}_i^{rr} - \tilde{d}_i^{lap}| \leq \frac{2\tau}{1-2\rho} + \frac{1}{(1-c)\epsilon}\ln \frac{2n}{\delta} ,\end{gather*} 
where $\rho$ in this case is equal to $\frac{1}{1+e^{c\epsilon}}$.
This check accounts for the error from $\tilde{d}^{rr}_i$ (the $\frac{2\tau}{1-2\rho}$) term, and the error from $\tilde{d}^{lap}_i$ (the $\frac{1}{(1-c)\epsilon}\ln \tfrac{2n}{\delta}$ term).
Finally, the protocol returns $\bot$ if either of the checks fail.
In the event that both the checks pass, the aggregator uses $\tilde{d}_i^{lap}$ (obtained via \RLap) as the final degree estimate $\hat{d}_i$ for $\DO_i$.

\begin{algorithm}[bt]
  \caption{\DegHybrid: $\{0,1\}^{n\times n}\mapsto \{\mathbb{N}\cup \{\bot\}\}^n$}\label{alg:deghybrid}
  \Parameter{$\epsilon$-Privacy parameter; $\tau$ - Threshold for check.}
  
  \KwData{$\{l_1,\cdots,l_n\}$ where $l_i$ is $\DO_i$'s adjacency list}
  \Comment{Users}
  Select $c\in (0,1)$ and set $\rho=\frac{1}{1+e^{c\epsilon}}$\;
    \For{$i \in [n]$}{
      $q_i = \rr_\rho(l_i)$\;
      $\tilde{d}_i^{lap} = \|l_i\|_1 + Lap(\frac{1}{(1-c)\epsilon})$\;
    }
    \Comment{Data Aggregator}
    \For{$i \in [n]$}{
      $count_i^{11} = \sum_{j \in [n] \setminus i} q_{i}[j] q_{j}[i]$\;
      $count_i^{01} = \sum_{j \in [n] \setminus i} (1-q_{i}[j])q_{j}[i]$\;
      \uIf{$|count_{i}^{01} - \rho(1-\rho)(n-1)| \leq \tau$}{
        $\tilde{d}_i^{rr} = \frac{1}{1-2\rho}(count_i^{11} - \rho^2 (n-1))$\;
        \lIf{$|\tilde{d}_i^{rr} - \tilde{d}_i^{lap}| \leq \frac{2\tau}{1-2\rho} + \frac{1}{(1-c)\epsilon}\ln \tfrac{2n}{\delta} $}{
        $\hd_i = \tilde{d}_i^{lap}$
        }\lElse{
        $ \hd_i = \bottom$
        }
      }\lElse{$\hd_i = \bottom$
      }
    }
    \KwRet{$(\hd_1, \hd_2, \ldots, \hd_n)$ where $\hat{d}_i$ is $\DO_i$'s degree estimate}
\end{algorithm}

Each $\hd^{rr}_i$ estimate is computed identically to that of \DegRRCheck{}. \DegHybrid{} allows a user to send an even more accurate estimate of their degree -- to prevent malicious users from outright lying about this value, $\hd_i^{lap}$ is compared to $\hd_i^{rr}$. This allows \DegHybrid{} to enjoy the honest error bound of \RLap{} and the malicious error bound of \DegRRCheck{}. Formally,

\begin{thm}\label{thm:rrlapchecka3}
    For all $c \in (0,1)$, the \DegHybrid{} protocol run with $\tau = \threshresphybrid$ has $\frac{\ln \frac{2n}{\delta}}{(1-c)\epsilon}$ honest error and $4m (\frac{e^{c\epsilon}+1}{e^{c\epsilon}-1}) +$ $ 8\sqrt{n}\frac{ \sqrt{(e^{c\epsilon}+1)\ln \frac{8n}{\delta}}}{e^{c\epsilon}-1} + \frac{\ln \frac{2n}{\delta}}{(1-c)\epsilon}$ malicious error for response poisoning. \label{thm:response:hybrid}
\end{thm}

The proof is in \ifpaper the full paper~\cite{Fullpaper}\else App. \ref{app:thm:rrlapchecka3}\fi. We remark that \DegHybrid{} achieves the optimal accuracy of $\tilde{O}(\frac{1}{\epsilon})$ that is achievable under \ldp. This is due to the fact that the data aggregator uses $\tilde{d}^{lap}_i$ as its final degree estimate. For $\epsilon$ less than a small constant, the malicious error can be written as $\tO(m(1+\frac{1}{\epsilon}) + \frac{\sqrt{n}}{\epsilon})$ which is the same as that of \DegRRCheck{}. This is enforced by the two consistency checks. Hence, the hybrid mechanism achieves the best of both worlds.

\section{Results for Input Poisoning Attacks}\label{sec:input-attacks}  
So far we have only considered response poisoning where the malicious users are free to report arbitrary responses to the aggregator. However, to carry out such an attack in practice, a user would have to bypass the \ldp~data collection mechanism. Concretely, if a mobile application were used to collect a user's data, a malicious user would have to hack into the software and directly report their poisoned response. On the other hand, for input poisoning the malicious user needs to just lie about their input  to the application (for instance, by misreporting their list of friends) which is always possible. Hence clearly, input poisoning attacks are more easily realizable in practice. In fact, in certain cases the malicious users might be restricted to just input poisoning attacks due to the implementation of the \ldp~mechanism. For instance, mobile applications might have strict security features in place preventing unauthorized code tampering.  Another possibility is cryptographically verifying the randomizers~\cite{Kato21} to ensure that all the steps of the privacy protocol (such as, noise generation) is followed correctly.  Given its very realistic practical threat, here we study input poisoning. 

Note that input poisoning attacks are strictly weaker than response poisoning attacks. This is because the poisoned input is randomized  to satisfy \ldp~in the former which introduces noise in the final output, thereby weakening the adversary's signal. Hence intuitively, we hope to obtain better robustness against input poisoning attacks. We analyze the robustness of \DegRRNaive{}, and our proposed protocols, \DegRRCheck{} and \DegHybrid{}, under input poisoning attacks. We defer our discussion for \DegRRNaive{} to Appendix~\ref{app:input} For both the mechanisms here, we are able to a set a smaller value for $\tau$, the threshold for checking the number of inconsistent edges. This is because the number of inconsistent edges is more concentrated around its means, and hence, a tighter confidence interval with a smaller $\tau$ suffices. Thus, both the honest and malicious errors of the protocols are improved. Formally for \DegRRCheck{}, we have:

\begin{thm}\label{thm:input:check} 
The protocol $\DegRRCheck$ run with $\tau = m(1-2\rho) + \sqrt{8 \max\{\rho n, m\} \ln \frac{8n}{\delta}}$
achieves
  \scalebox{0.9}{$2m+4\sqrt{\max\{n, m(e^\epsilon+1)\}}\frac{\sqrt{2(e^\epsilon+1) \ln \frac{8n}{\delta}}}{e^\epsilon-1}$}-honest and malicious error with respect to any input poisoning attack.
\end{thm}
The proof is in \ifpaper the full paper \cite{Fullpaper}\else App.~\ref{app:b3a2}\fi.
For typical values of $\epsilon$, the honest and malicious errors can be written as $\tilde{O}(m + \frac{\sqrt{n}}{\epsilon})$ (because $\sqrt{m(e^\epsilon+1)} \leq \sqrt{n}$). Compared to Thm.~\ref{thm:response:check} for response poisoning attacks, there is an improvement of $\frac{m}{\epsilon}$ which is a direct consequence of a smaller $\tau$. 

For \DegHybrid{}, we have:
\begin{thm}\label{thm:rrlapchecka2} 
  For any $c \in (0,1)$, \DegHybrid{} run with threshold 
$
  \tau = m(1-2\rho) + \sqrt{8 \max\{\rho n, m\} \ln \frac{8n}{\delta}}
  $
  attains $\frac{1}{(1-c)\epsilon}\ln \tfrac{4n}{\delta}$ honest error and $4m+8\sqrt{\max\{n, m(e^{c\epsilon}+1)\}}\frac{\sqrt{2(e^{c\epsilon}-1) \ln \frac{8n}{\delta}}}{e^{c\epsilon}+1}$ malicious error under input poisoning.\label{thm:input:hybrid}
\end{thm}
This theorem is proved in \ifpaper the full paper \cite{Fullpaper}\else App.~\ref{app:thm:rrlapchecka2}\fi.
Written asymptotically for small $\epsilon$, the honest error of \DegHybrid{} is $\tilde{O}(\frac{1}{\epsilon})$, and its malicious error is $\tilde{O}(m + \frac{\sqrt{n}}{\epsilon})$.
Compared with Thm.~\ref{thm:response:hybrid} for response poisoning attacks, \DegHybrid{} offers similar honest error since the data aggregator uses the degree estimate collected via \RLap{} as its final estimate as before. However, the malicious error is improved by an additive factor of $O(\frac{m}{\epsilon})$, which comes from the smaller $\tau$. 


\section{Discussion}\label{sec:opti}

\subsection{Attack-Agnostic Nature of Our Protocols}\label{sec:discussion} It is important to emphasize that all our results (Thms. \ref{thm:response:check} to \ref{thm:input:hybrid}) are completely \textit{attack-agnostic} and hold for any graph. In what follows, we discuss specific implications of this property.
\\\noindent \textbf{Collusion.} As mentioned in Sec. \ref{sec:threat}, the malicious users are free to follow arbitrary collusion strategies. A direct consequence is that our robustness guarantees must hold even in the worst-case scenario, where \textit{all} $m$ malicious users are colluding with each other and can coordinate to consistently lie about their shared edges. For example, under a degree inflation attack, a targeted malicious user can reliably expect its reported degree to be inflated by at least $m-1$ when all other malicious users collude. Formally, this is captured by the $\Omega(m)$ term in all of our results. Notably, this term is unavoidable—it also appears in our lower bounds. While these attacks fall outside the scope of our theoretical robustness guarantees, practical detection remains possible. In addition to using our protocols for data collection, the aggregator may analyze the resulting graph structure to identify potential collusion patterns. In particular, the following structures are suggestive of collusion -- (1)
Star-like subgraphs, where non-central nodes exhibit abnormally low degrees (2) Disconnected cliques that are isolated from the rest of the graph. To support such detection, one can leverage techniques from the extensive literature on collusion detection in social networks~\cite{zhang2004making, shenEnhancing2016, arora2020analyzing, dutta2022blackmarket}.  We empirically evaluate their effectiveness in Sec.~\ref{sec:eval}.
\\\noindent\textbf{Bound on Malicious Users.} 
An upper bound on the number of adversaries ($m$) is a standard and widely adopted assumption in private and robust data analysis, spanning secure multi-party computation~\cite{Goldreich2001}, coding theory~\cite{codingtheory}, and Byzantine fault tolerance~\cite{Byzantine1,Byzantine2,Byzantine3}. Even prior work on poisoning attacks under \ldp~adopts this assumption\cite{Cao21,Wu21,10415225}. In practice, $m$ can be set based on empirical evidence~\cite{9833647}.  We provide additional experiments in Sec. \ref{sec:eval}.
\\\noindent\textbf{Computational and Communication Overhead.} Since we focus on non-interactive protocols for \ldp~graph analysis, each user is limited to a single round of communication. This constraint implies that arbitrary graph analysis must be performed using only the data provided in that one interaction. The only viable approach, then, is for users to report their local adjacency lists perturbed via \rr, enabling the construction of a noisy global adjacency matrix. This matrix can then be used for downstream graph analysis. Both of our proposed protocols rely on \rr~and operate by analyzing these noisy adjacency lists. As a result, they do \textit{not} incur any asymptotic increase in computational or communication overhead compared to other non-interactive \ldp~graph analysis protocols.
\subsection{Optimality of our Protocols} Our proposed protocols are optimal with respect to the lower bounds established in Sec.~\ref{sec:lb}. For $\epsilon < 0.5$, the lower bounds require that either the honest error or the malicious error is at least $\Omega(m + \frac{\sqrt{n}}{\epsilon})$ (for input poisoning) or $\Omega(\frac{m}{\epsilon} + \frac{\sqrt{n}}{\epsilon})$ (for response poisoning). Up to $\log \frac{1}{\delta}$ factors, \DegRRCheck{} achieves these error values for both honest and malicious errors, while \DegHybrid{} achieves this level of malicious error, but with a better honest error of just $\frac{1}{\epsilon}$, which is the optimal error under central DP. In this sense, \DegHybrid{} is tight with the lower bounds for both honest and malicious users, and both algorithms are tight if we consider the maximum error over both types of users.

The first separation established by our results is that is there is a price to pay for privacy when defending against poisoning attacks. Without a privacy requirement, the data aggregator can ask all users to release their adjacency list, ensuring that each honest user has at most $m$ inconsistent edges (where one user reports the edge exists, but the other does not). By flagging users with more than $m$ inconsistent edges, the data aggregator can ensure that the degrees of the remaining users are tampered with by at most $m$. This establishes a separation from input (or response) poisoning for $\epsilon < 0.5$, where it was shown in Thm.~\ref{thm:input-lb} that any protocol must have a $\Omega(m + \frac{\sqrt{n}}{\epsilon})$) term in either its honest or malicious error. The additional $\frac{\sqrt{n}}{\epsilon}$ term is the price one pays for privacy.

Second, for $\epsilon < 0.5$, the robust \DegRRCheck{} and \DegHybrid{} protocols have honest and malicious error guarantees that are tight with the lower bounds, up to $\log(\frac{1}{\delta})$ factors, for both input poisoning (Thms.~\ref{thm:input:check}, \ref{thm:input:hybrid}, and~\ref{thm:input-lb} imply that $\Theta(m + \frac{\sqrt{n}}{\epsilon})$ honest or malicious error is necessary) and response poisoning (Thms. \ref{thm:output-lb}, \ref{thm:response:check}, and \ref{thm:response:hybrid} imply that $\Theta(\frac{m}{\epsilon} + \frac{\sqrt{n}}{\epsilon})$ honest or malicious error is necessary). This shows a separation between input and response poisoning -- essentially, in input poisoning, the best the malicious users can accomplish is to alter any degree by up to $\Theta(m)$; for response poisoning, they can introduce an error of $\Theta(\frac{m}{\epsilon})$, which is asymptotically worse for small $\epsilon$. Although this separation only holds for bitwise factorable protocols (see \ref{thm:response:hybrid}), this still constitutes a large class of protocols, including randomized response and \DegRRCheck{}. 

Finally, the $\frac{m}{\epsilon}$ extra incurred error of bitwise factorable protocols from response poisoning demonstrates the increased susceptibility of LDP protocols to poisoning compared to non-private protocols. Adversaries can amplify their error by a factor of $\frac{1}{\epsilon}$ if they tamper directly with the noisy response.
\section{Extension to Other Tasks}\label{sec:extension}
In this section, we provide an extension of our robustness results to general, real-valued graph queries based on the degree vector of the form 
$F(d_1, \ldots, d_n) = \sum_{i=1}^n f(d_i)$, where $f$ is an arbitrary real-valued function. This function can be specialized to variety of useful graph queries. For example, the choice $\sum_{i=1}^n d_i^2$ computes the number of 2-star subgraphs in $G$, while the choice $\sum_{i=1}^n \mathbf{1}[d_i \geq 100]$ computes the number of nodes with degree larger than $100$. 

We state our  results under the assumption that users have degree bounded by $\Delta \leq n$ in the true graph---this assumption is often true in real-world graphs with an enforced maximum degree, such as a social network. We now define a notion of Lipschitz-ness restricted to functions of bounded degrees:
\begin{defn}
    We say the function $f(d) : \mathbb{R} \rightarrow \mathbb{R}$ is $(L, \ell, \Delta)$-restricted Lipschitz if 
    $F$ is $\ell$-Lipschitz for all degrees less than $\Delta$, meaning $|f(d) - f(d+1)| \leq \ell$ for all $d \in [0, \Delta]$.
\end{defn}

Given this definition, we can bound the quality of the estimate produced by any estimator with $\alpha_1$ honest error and $\alpha_2$-malicious error. As $\alpha_1, \alpha_2$ may be different from each other, we state our bound in terms of two restricted Lipschitz constants of $f$ to make it as fine-grained as possible:
\begin{thm}
    Suppose $F$ is a degree query function with an $f$ that is $(\ell_1, \Delta + \alpha_1)$ and $(\ell_2, \Delta + \alpha_2)$-restricted Lipschitz. Let $\tilde{\mathbf{d}}$ be a noisy degree vector produced by an algorithm with $\alpha_1$-honest error and $\alpha_2$-malicious error. Then, the estimator \scalebox{0.89}{$\tilde{F}(\tilde{\mathbf{d}}) = \sum_{i=1}^n f(\tilde{d}_i)$} satisfies 
    \scalebox{0.89}{$
        |\tilde{F}(\tilde{\mathbf{d}}) - F(\mathbf{d})| \leq \ell_1 (n-m) \alpha_1 + \ell_2 m\alpha_2 ,
    $}
    where $\mathbf{d}$ is the true degree vector.
\end{thm}
\begin{proof}
    For each of the $n-m$ honest users, we apply the Lipschitz property $\alpha_1$ times, and for each of the $m$ malicious users, we apply the Lipschitz property $\alpha_2$ times. 
\end{proof}
In particular, for estimating $2$-stars, $\DegHybrid{}$ achieves error 
\begin{align*} 
&2(n-m) (\Delta + \tfrac{\ln(n/\delta)}{\epsilon}) + 2m(\Delta + \tfrac{m}{\epsilon} + \tfrac{\sqrt{n}}{\epsilon}) \\ &
\leq 2n \Delta + 2n \tfrac{\ln(n/\delta)}{\epsilon} + \tfrac{2}{\epsilon} (m^2 + m \sqrt{n}). \vspace{-0.2cm}
\end{align*}
For most parameter regimes of interest, the above bound is dominated by $2n \Delta + \frac{2}{\epsilon} (m^2 + m \sqrt{n})$.
Typically, one would expect the number of 2-stars in a graph to grow linearly, and it is upper bounded by $n \Delta^2$. Thus, our estimator is accurate when the true number of 2-stars grows at a fast-enough linear rate.

\section{Evaluation}\label{sec:eval}
\subsection{Experimental Setup}\label{sec:exp-setup}  
\noindent\textbf{Datasets.} We consider two graphs -- a real-world sparse graph and a synthetically generated dense graph.
\squishlist
    \item \textit{FB.} This graph  corresponds to data from Facebook~\cite{FB} representing 4082 users. The graph has $88$K edges. 
    \item \textit{Syn.} To test a more dense regime, we evaluate our protocols on a synthetic graph generated using the Erdos-Renyi model~\cite{ER} with parameters $G(n=4000, p=0.5)$ ($n$ is the number of edges; $p$ is the probability of including any edge in the graph).  The graph has $\approx 8$ million edges. 
\squishend

\begin{figure*}[hbt!]
\begin{subfigure}[b]{0.33\linewidth}
        \includegraphics[scale=0.5]{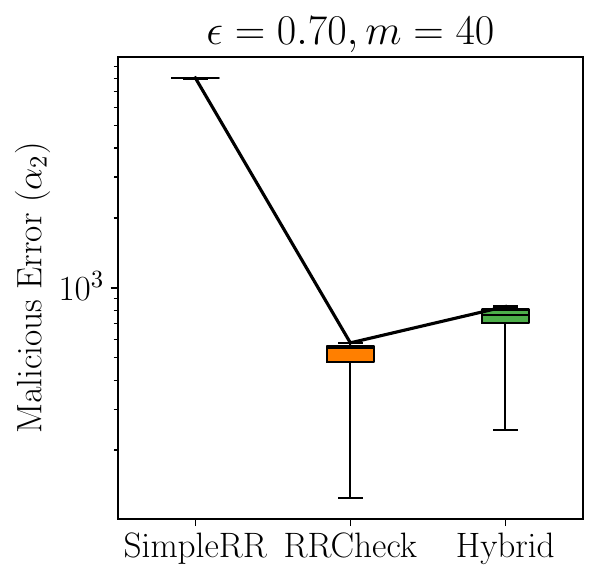}
        \caption{\textit{FB:} Degree Inflation}
        \label{fig:FB:inflation}
        \end{subfigure}
    \begin{subfigure}[b]{0.33\linewidth}
        \includegraphics[scale=0.5]{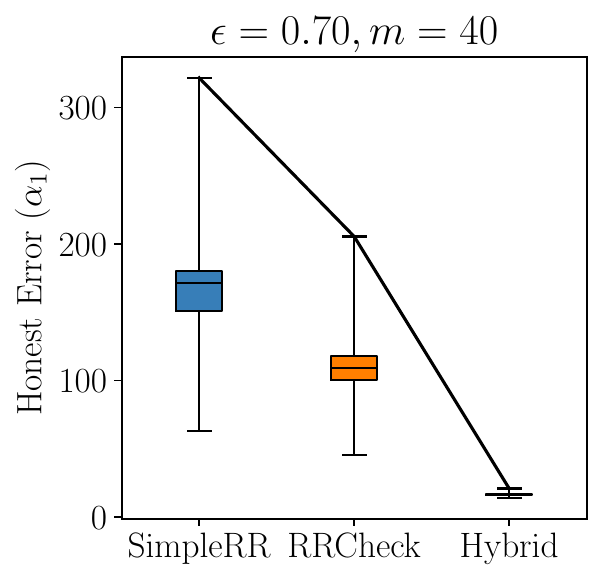}
        \caption{\textit{FB:} Degree Deflation}   \label{fig:FB:deflation}
        \end{subfigure}  
            \begin{subfigure}[b]{0.33\linewidth}
            \includegraphics[scale=0.5]{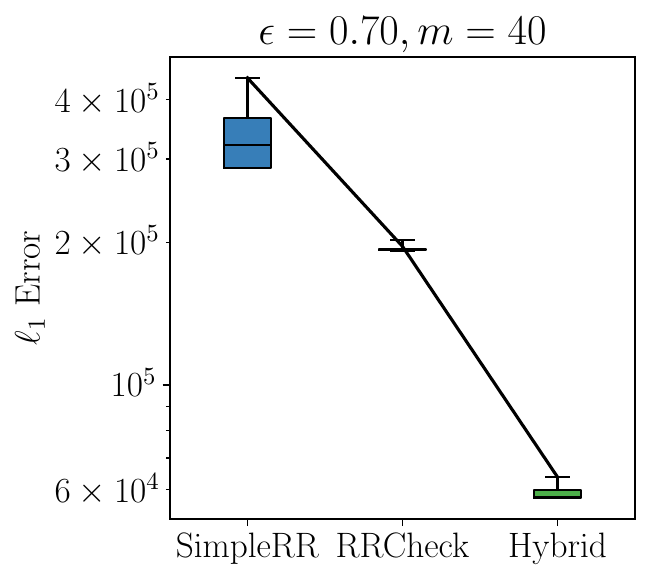}
        \caption{\textit{FB:} All}\label{fig:FB:L1}
        \end{subfigure}
         \begin{subfigure}[b]{0.33\linewidth}
            \includegraphics[scale=0.5]{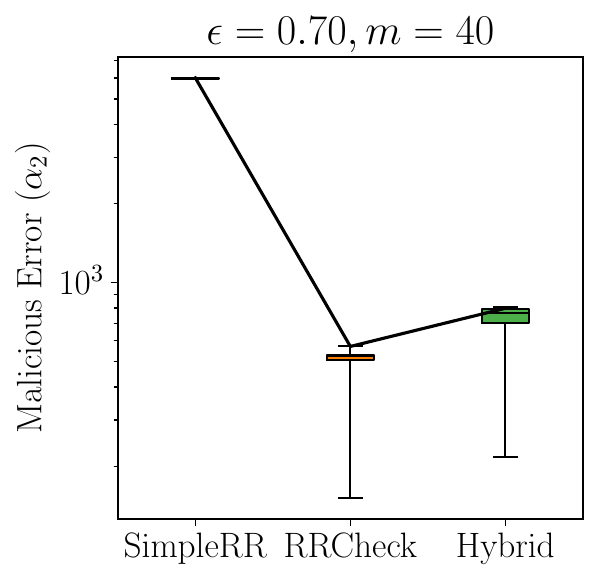}
        \caption{\textit{Syn:} Degree Inflation}\label{fig:Syn:inflation}
        \end{subfigure}
            \begin{subfigure}[b]{0.33\linewidth}
            \includegraphics[scale=0.5]{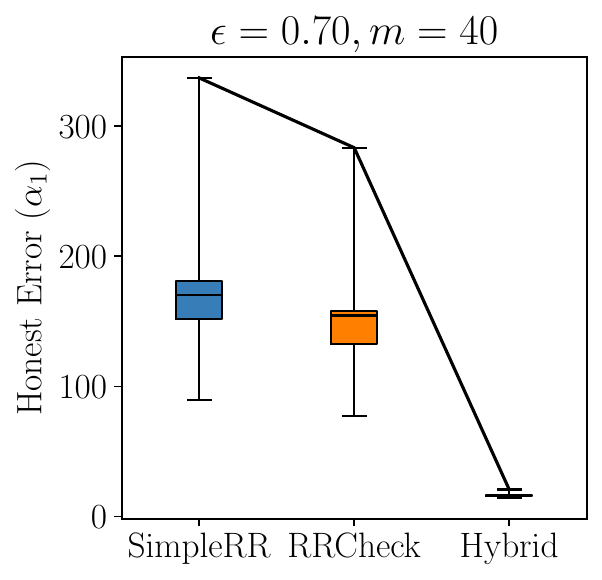}
        \caption{\textit{Syn:} Degree Deflation}\label{fig:Syn:deflation}
        \end{subfigure}
            \begin{subfigure}[b]{0.33\linewidth}
            \includegraphics[scale=0.5]{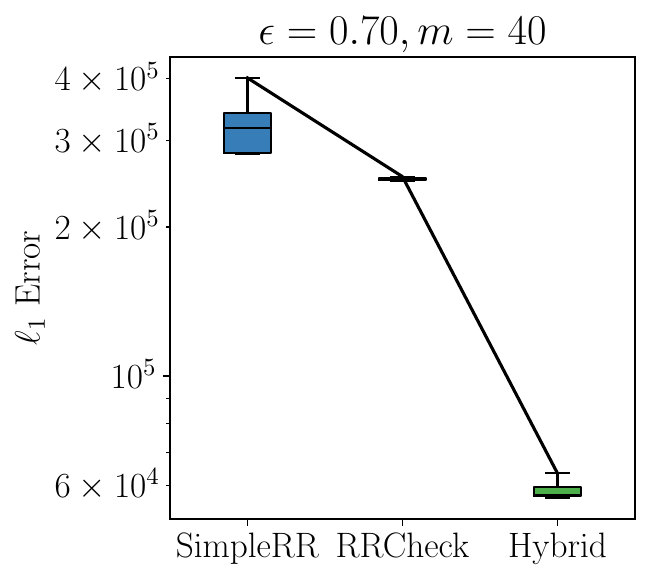}
        \caption{\textit{Syn:} All}\label{fig:Syn:L1}
        \end{subfigure}
        
    
%
%
 \small\caption{Robustness Analysis: The whiskers range from the maximum to the minimum empirical honest and malicious error observed across all the attacks of the specific type. Figs. \ref{fig:FB:L1} and \ref{fig:Syn:L1} plots the $\ell_1$ error measures of the entire degree vector across all attacks. The line corresponds to the strongest evaluated attack of the specific type.}
  \label{fig:analysis}  
\end{figure*}

\begin{figure*}[hbt!]
\centering
\includegraphics[width=0.7\linewidth]{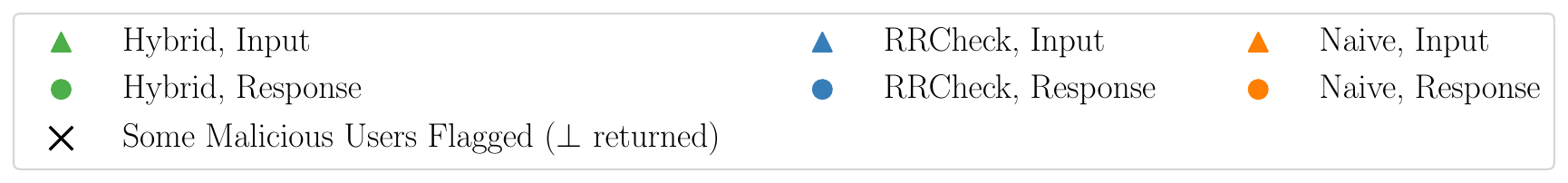}
\begin{subfigure}[b]{\linewidth}

   \includegraphics[width=0.24\linewidth]{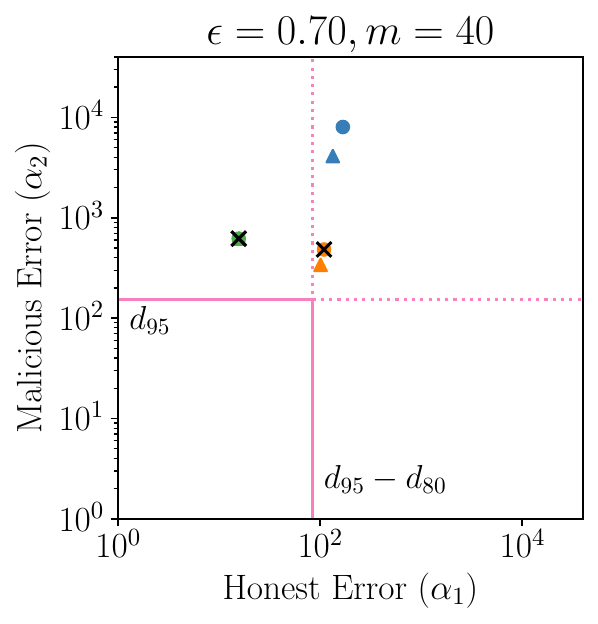}
\includegraphics[width=0.24\linewidth]{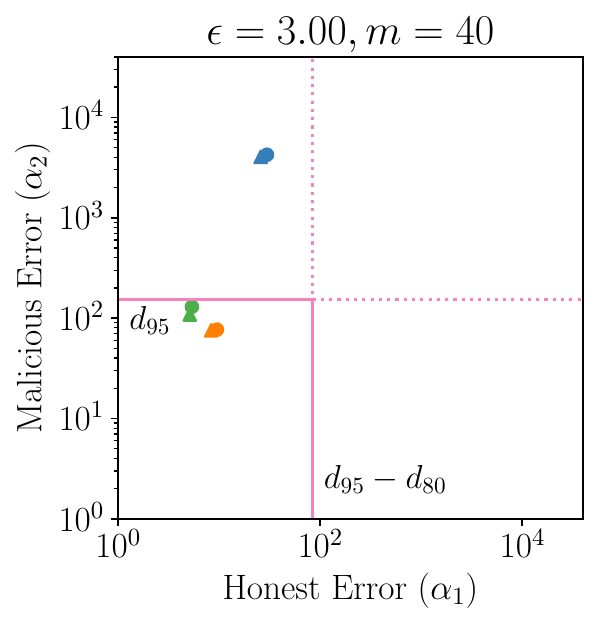}
    \includegraphics[width=0.24\linewidth]{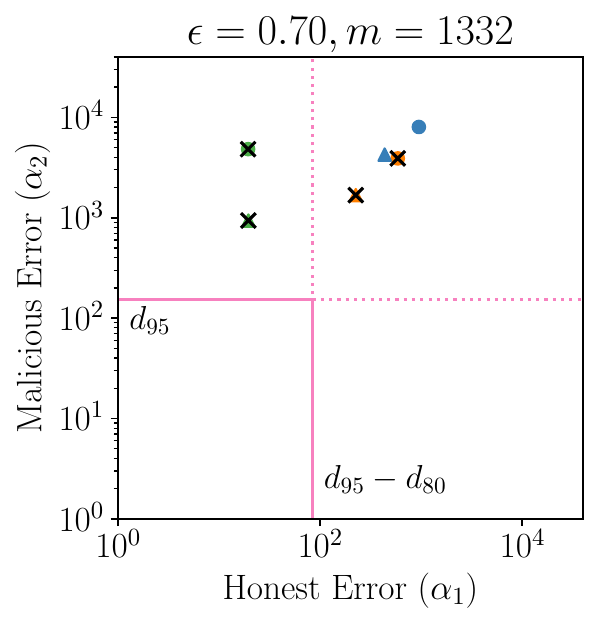}
 \includegraphics[width=0.24\linewidth]{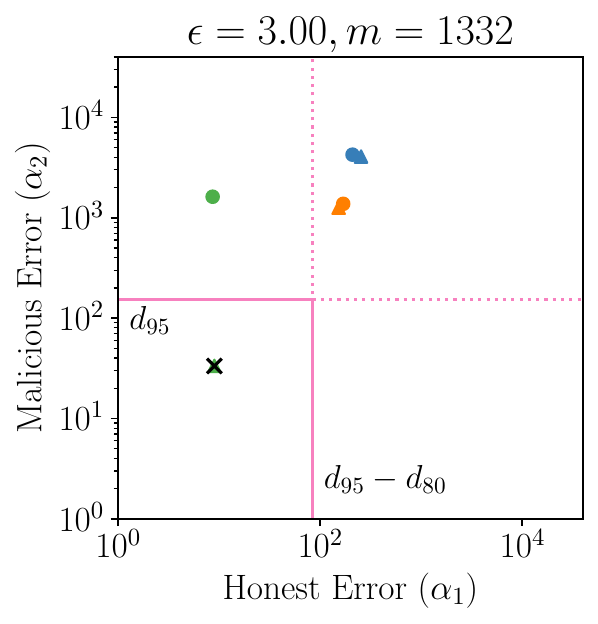}
  \caption{\textit{FB}: Combination Attack}
    \label{fig:FB:defl}
    \end{subfigure}\\
\begin{subfigure}[b]{\linewidth}
 \includegraphics[width=0.24\linewidth]{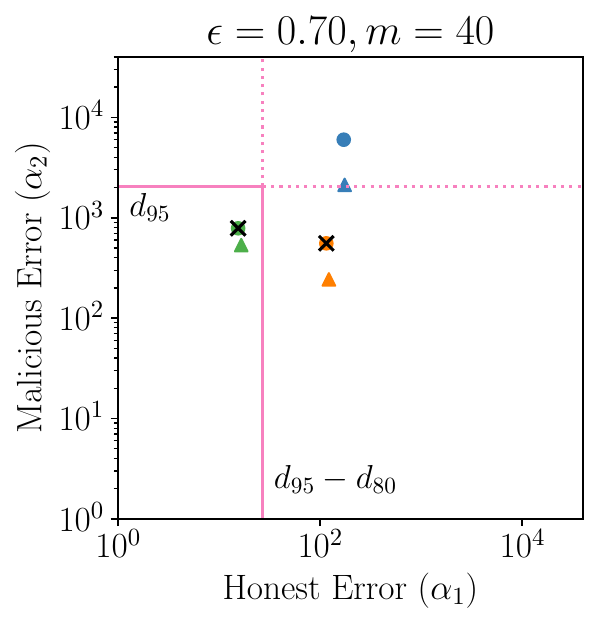}
  \includegraphics[width=0.24\linewidth]{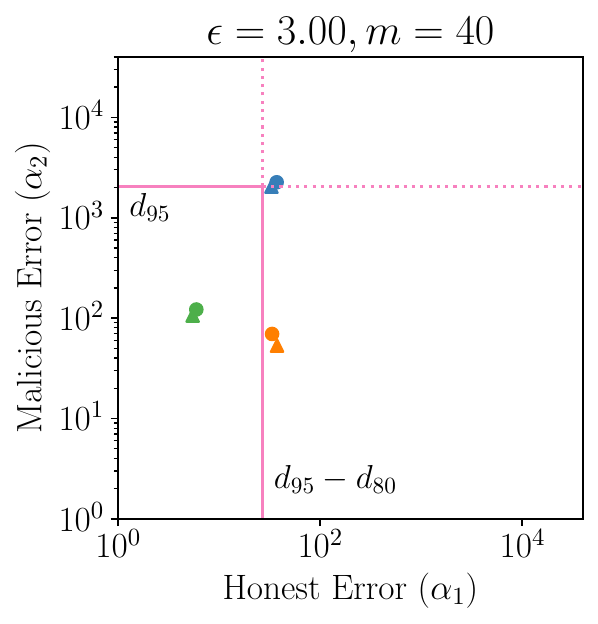}
  \includegraphics[width=0.24\linewidth]{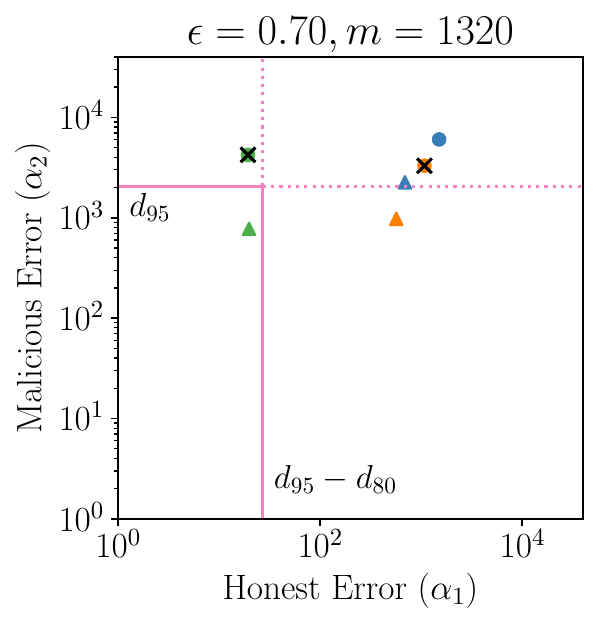}
\includegraphics[width=0.24\linewidth]{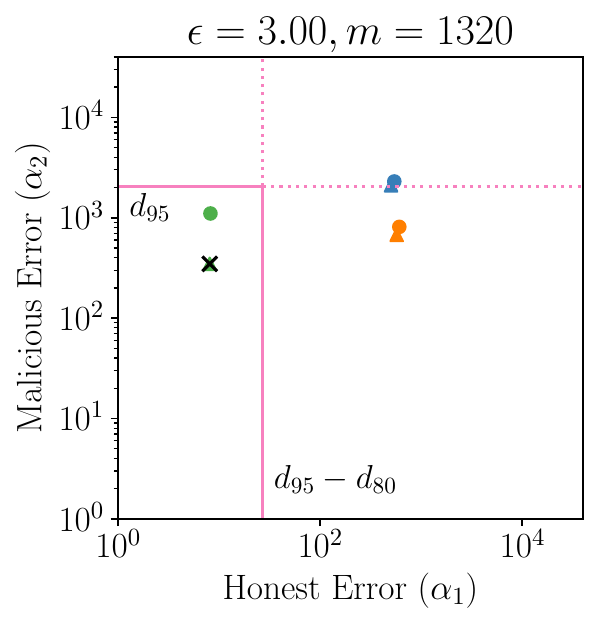}
\caption{\textit{Syn}: Combination Attack}
        \label{fig:Syn:defl}
       \end{subfigure} 

\caption{Comparison of input and response poisoning attacks: We plot the empirical accuracy (error of honest user) and soundness (error of malicious user). $d_{k}$ denotes the degree of the $k$ percentile node.}\label{fig:input} 
\end{figure*}

\noindent \textbf{Protocols.} 
We evaluate our proposed \DegRRCheck{} and \DegHybrid{}, and use \DegRRNaive{} and a defense from~\cite{Cao21} as our baselines.
\\\\
\noindent\textbf{Attacks.} We carry out an extensive analysis of the robustness of our protocols by evaluating against  \textbf{16} different attacks. The attacks are broadly classified into two types -- degree inflation and degree deflation where the goal of the malicious users is to increase (resp. decrease) the degree estimate of a target malicious (resp. honest) user by as much as possible.
 We choose these attacks because firstly, they can meet the asymptotic theoretical error bounds.
Secondly, these attacks are grounded in real-world motivations and represent practical threats (see Sec. \ref{sec:attacks}). The 16 attacks evaluated represent different configurations of the degree inflation and deflation attacks.  Specifically, they differ in $(1)$ the number of targets (both malicious or honest users), $(2)$ how the non-target malicious users are chosen 
$(3)$ collusion strategies. The different configurations capture a multitude real-world attack scenarios and adversarial goals. The attacks are summarized in Table~\ref{tab:attacks} in App. \ref{app:attacks} \ifpaper and detailed in the full paper~\cite{Fullpaper}\fi.


For each attack type, we consider both input and response poisoning versions.  In the following, let $\DO_t$ represent the target user. 
\\\noindent\DegRRCheck{}. For the degree inflation attack, the non-target malicious users always report a $1$  for the malicious user $\DO_t$ (i.e. that they are connected to $\DO_t$) in the hopes of increasing $\DO_t$'s degree estimate.
Likewise, $\DO_t$ reports $1$s for all other malicious users. For the honest users, $\DO_t$ reports extra $1$s (for non-neighbors) in the hopes of further increasing their degree estimate. The exact mechanism depends on whether it is response poisoning or input poisoning and is detailed in App. \ref{app:attacks}.
For degree deflation, we consider the worst-case scenario where $m$ of the neighbors of the honest user $\DO_t$ act maliciously. The malicious neighbors always report $0$ for their edges to $\DO_t$.
\\
\noindent\DegHybrid. 
For degree inflation, the non-target malicious users report their edges using the same strategy as in \DegRRCheck{}. For $\tilde{d}_i^{lap}$, they send their true degree estimates since their degrees are not the targets. Similarly, $\DO_t$ uses the same strategy as in $\DegRRCheck{}$ for reporting their edges. For $\tilde{d}_t^{lap}$, $\DO_t$  reports an inflated value based on the reported edges and the threshold $\tau$ (see  App. \ref{app:attacks}). For degree deflation, we consider the worst-case scenario where $m$ of the neighbors of $\DO_t$ act maliciously. The malicious users behave as they did in \DegRRCheck{} and report their true degrees for $\tilde{d}_i^{lap}$, as these are not the targeted degrees.
\\
 \noindent\DegRRNaive. For degree inflation, we consider the worst-case scenario where the target malicious user $\DO_t$ is responsible reporting all their edges, and chooses to reports all $1$s. 
For degree deflation, we again consider the worst case scenario where the malicious users are responsible for reporting the edges to $\DO_t$, and they report $0$s.
\\\\
\noindent \textbf{Configuration.} 
For every attack we report the maximum error over all the honest targets (honest error $\alpha_1$; for all experiments we take care to ensure that no honest users are mistakenly flagged) and the malicious targets (malicious error $\alpha_2$).
We run each experiment $50$ times and report the mean.  We use $\delta=10^{-6}$ and $c = 0.9$ for \DegHybrid{}. Our theoretical results suggested setting $\tau = m + C\sqrt{\rho n}$, where $C$ is a constant that is obtained from Chernoff's bounds, for the different input and response manipulation attacks. The constant $C$ is not tight, and for the practical interest of using as small a threshold as possible, we sought to set $\tau$ as small as possible so as not to falsely flag any honest user. Note that lower the threshold, lower is the permissible skew ($\alpha_1$ and $\alpha_2$  for honest  and  malicious error, respectively) introduced by poisoning, thereby improving the robustness of our protocols. We ran preliminary experiments using $50$ runs of each protocol on both graphs, and we found that at all values of $\epsilon$, setting $\tau = m + 0.4\sqrt{\rho n}$ (for $m = 40$) and $\tau = m + 0.1 \sqrt{\rho n}$ (for $m = 1500$) did not result in any false positives. Thus, we used these smaller thresholds in our experiments, and throughout the experiments there were no false positives. 

\begin{table}
    \centering
  \begin{tabular}{lcc@{\extracolsep{\fill}}cc@{\extracolsep{\fill}}cc}
    \toprule
 
        & \multicolumn{2}{c}{\textbf{\DegRRNaive{}}} & \multicolumn{2}{c}{\textbf{\DegRRCheck{}}} & \multicolumn{2}
{c}{\textbf{\DegHybrid{}}} \\ \cline{2-3} \cline{4-5}\cline{6-7}& \textit{\textbf{FB}} & \textit{\textbf{Syn}} & \textit{\textbf{FB}} & \textit{\textbf{Syn}} & \textit{\textbf{FB}} & \textit{\textbf{Syn}}\\
\midrule
        Min. & 0  & 0&$56.0\%$ & $52.0\%$ & $ \hspace{0.2cm} 54.5\%$ & $51.0\%$ \\ 
        Mean. & 0 & 0&$63.2\%$ & $61.4\%$ & $ \hspace{0.2cm} 62.1\%$ & $60.0\%$ \\ 
        Max. & 0 & 0& $75.0\%$ & $70.0\%$ & $ \hspace{0.2cm} 70.0\%$ & $70.0\%$ \\ \bottomrule
    \end{tabular}
\caption{Table of max, min, and average percentage of malicious targets flagged for degree inflation attacks.} \label{tab:flag} 
\end{table}

\subsection{Robustness Analysis} \label{sec:exp-results}

In this section, we empirically evaluate the robustness of our proposed protocols. We focus on response poisoning for every attack to analyze the worst-case scenario (since response poisoning is stronger than input poisoning). The number of malicious users is set to $m=1\%$ and $\epsilon=0.7$
. 
\\\noindent\textbf{Degree Inflation.} We report the malicious error for all the attacks that have a degree inflation component (10 out of the 16 attacks) in Figs. \ref{fig:FB:inflation} and \ref{fig:Syn:inflation} for \textit{FB} and \textit{Syn}, respectively. In other words, these are all the attacks where at least a subset of the malicious users are trying to inflate the degree of some (malicious) target.  We observe that both our proposed protocols perform significantly better than the baseline. For instance, for the strongest inflation attack we evaluated (attack A11 in Tab. \ref{tab:attacks})  -- \DegRRNaive{}  has $9.7\times$ and $13.8\times$ higher malicious error than \DegHybrid{} and \DegRRCheck, respectively, for \textit{FB}. Note that \DegRRCheck{} performs slightly better than \DegHybrid. This is because, although both the protocols have the same asymptotic soundness guarantee, the constants are higher for \DegHybrid~(see Sec. \ref{sec:hybrid}). Finally, for both datasets our protocols flag the malicious targets (returning $\bottom$) for $51\%-70\%$ of the trials (Tab. \ref{tab:flag}). This indicates that our proposed protocols are able to detect malicious users, thereby disincentivizing malicious activity. 
\\\noindent \textbf{Degree Deflation.} Figs. \ref{fig:FB:deflation} and  \ref{fig:Syn:deflation} show the results for the degree deflation attacks on  \textit{FB} and \textit{Syn}, respectively. Specifically, we report the honest error for all the attacks that have a (11  out of the 16 attacks evaluated). We observe that \DegHybrid{} performs the best. For instance, for the strongest deflation attack we evaluated (attack A8 in Tab. \ref{tab:attacks}) it has $16.2\times$ and $13.6\times$ better accuracy than \DegRRNaive{} and \DegRRCheck{}, respectively, for \textit{Syn}. Additionally, our protocols are able to flag malicious users when they target a large number of honest users. Specifically, for the strongest degree deflation attack, \DegHybrid{} flags  $4.5\%$  and $49.8\%$ of the malicious users for \textit{FB} and \textit{Syn}, respectively. \DegRRCheck{}, on the other hand, flags $3\%$ and $59.3\%$  of the malicious users for \textit{FB} and \textit{Syn}, respectively. Note that the number of actual honest users affected by a malicious user is bounded by its degree. Hence, the rate of flagging is less aggressive for \textit{FB} since it is a sparse graph. 
\\\noindent \textbf{$\boldsymbol{\ell_1}$ Error.} We report the $\ell_1$ error of the entire noisy degree vector  $\mathbf{\hat{d}}=\langle \hat{d}_1, \ldots, \hat{d}_n\rangle$ in Fig. \ref{fig:FB:L1}  and \ref{fig:Syn:L1}, for \textit{FB} and \textit{Syn}, respectively. We observe that \DegHybrid{} performs the best. This is because recall that \DegHybrid{} has the best honest error guarantee while its malicious error is comparable to that of \DegRRCheck{}.  Since the number of honest users is much higher than the number of malicious users, the $\ell_1$ error of \DegHybrid{} is significantly better than that of \DegRRCheck{} due to its better honest error. For instance, for the strongest overall attack we evaluated in terms of the $\ell_1$ error (A8 -- the same as that for the degree inflation case) \DegHybrid{} has $4.0\times$ and $6.3\times$ lower $\ell_1$ error than \DegRRCheck{} and \DegRRNaive, respectively for \textit{Syn}.
\\\noindent\textbf{The Effect of Collusion.} We empirically evaluate the efficacy of our proposed heuristics defenses against strong collusion (Sec. \ref{sec:discussion}). We consider an attack in which all $m$ malicious users collude to inflate the degree estimates of an entire subgraph (i.e., they form a clique). 
This results in highly correlated adjacency lists which can be detected, even after perturbing them under \ldp. To evaluate this, we run a greedy, correlation-based clique detection algorithm and report the minimum value of $m$ for which the injected clique is detected at least once across 4 trials (Table~\ref{tab:clique-detection:1}). What this means is that any collusion exceeding the reported values is almost certain to be detected. For instance, in the \textit{Syn} dataset, collusion involving more than 2.5\% of the users is reliably detected.

 \subsection{Comparing Input and Response Poisoning}
We plot the efficacies of input and response poisoning attacks in Fig. \ref{fig:input}. For this, we choose the strongest combination attack we evaluated (A10 in Tab. \ref{tab:attacks}). This attack considers the same number of honest and malicious targets and all the malicious users perform degree inflation and deflation simultaneously. All the users (honest targets and malicious) are selected from the same community in the graph (detected via the modularity maximization community detector~\cite{clauset2004finding}). This ensures that the malicious users have true edges with most of the honest targets which increases the potency of the deflation attack. 
  We consider two values of $m$ (number of malicious users): $m=1\%$ and $m=33\%$. While $m=1\%$ represents a realistic threat model, we also consider $m=33\%$ to showcase the efficacy of our protocols even with a large number of malicious users.
  \par We observe that input poisoning is weaker than response poisoning in terms of both honest and malicious error. Specifically, for malicious error it is worse than response poisoning for all three protocols (since response poisoning has an extra $O(\frac{m}{\epsilon}$) term). As a concrete example, for \DegHybrid, response poisoning is $2.5\times$ worse than input poisoning for \textit{FB} with $m=33\%,\epsilon=0.7$. In terms of honest error, input poisoning is worse than response poisoning (again because of the extra $O(\frac{m}{\epsilon}$) term in response poisoning). For instance, for \DegRRCheck{}, response poisoning is $2.7\times$ worse than input poisoning for \textit{Syn} with 
  $m=33\%,\epsilon=0.7$. However, the accuracy for \DegHybrid{} is comparable for both input and response poisoning which is consistent with our theoretical results (Thm. \ref{thm:input:hybrid}). This is because under both types of attacks, \DegHybrid{} uses the honest users' Laplace estimates which are not affected by the malicious users.  As expected, the separation between input and response poisoning becomes less prominent with higher $\epsilon$ and lower $m$, as it is harder to pull off strong attacks for these regimes.

\par We also mark the degree of $95^{th}$ percentile node ($d_{95}$) for the respective graphs in the plots. The way to interpret this is as follows.  If an error of a protocol falls below the line, then any malicious user can inflate their degree estimate to be in the $>d_{95}$ percentile by staging a poisoning attack. Our protocols are more performant for the dense graph \textit{Syn} (attacks are prevented for both values of $\epsilon$). This is because of the $\tilde{O}(\frac{\sqrt{n}}{\epsilon})$ term in the malicious error --- is the dominant term for sparse graphs. 

\par We plot the measure $d_{95}-d_{80}$ in Figs. \ref{fig:FB:defl} and \ref{fig:Syn:defl} where $d_{k}$ denotes the degree of the $k$ percentile node. The way to interpret this is as follows. If an error falls to the left of the line, then malicious users can successfully deflate the degree of an honest target from $>95$ percentile to $<80$ percentile. Based on our results, we observe that \DegHybrid{} is mostly effective in protecting against this attack even with a large number of malicious users of $m=33\%$. 

\subsection{Impact of Algorithmic Parameters}\label{sec:eval:param}

\begin{figure*}
    \centering\includegraphics[width=0.7\linewidth]{Plots_new/legend.pdf} \\
  \begin{subfigure}[b]{0.24\linewidth}
    \centering \includegraphics[width=\linewidth]{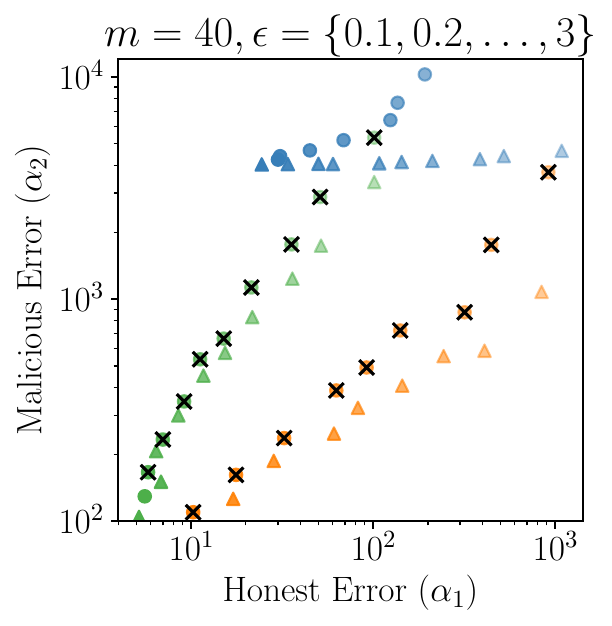}  
    \caption{\scalebox{0.8}{\textit{FB}: Combination}}
        \label{fig:FB:privacy}\end{subfigure}
        \begin{subfigure}[b]{0.24\linewidth}
        \includegraphics[width=\linewidth]{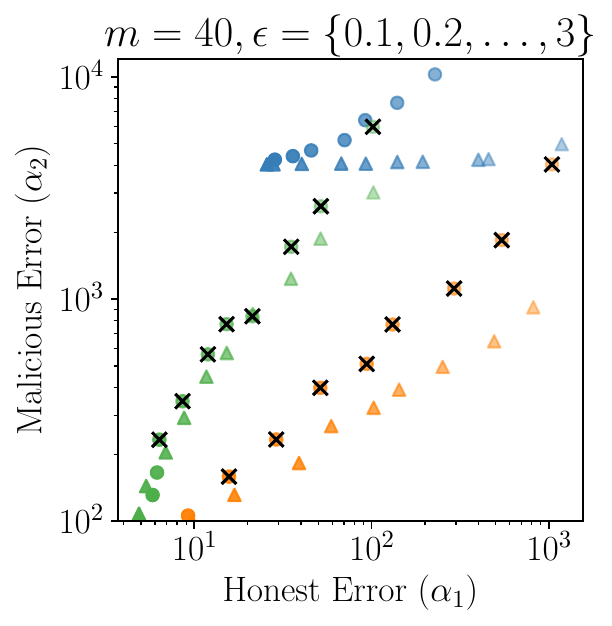} \caption{\textit{Syn}: Combination}
        \label{fig:Syn:privacy}
        \end{subfigure}
        \begin{subfigure}[b]{0.24\linewidth}
        \centering
        \includegraphics[width=\linewidth]{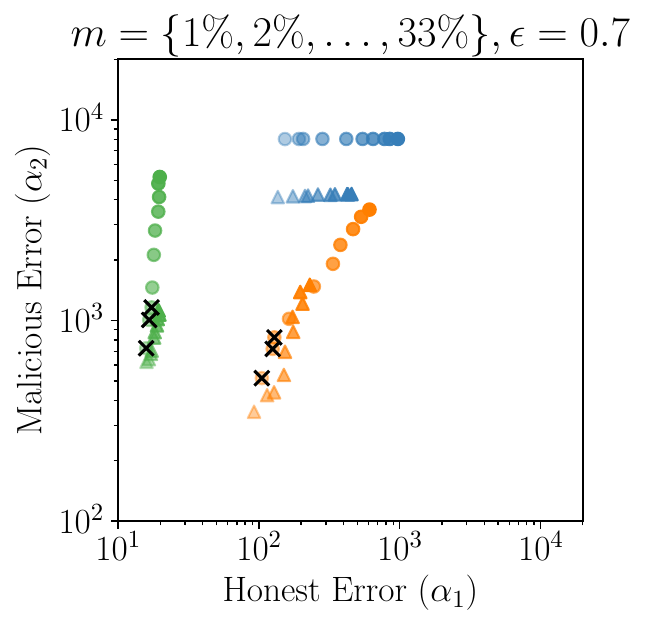}
        \caption{\textit{FB}: Combination}
        \label{fig:FB:m}
        \end{subfigure} 
         \begin{subfigure}[b]{0.24\linewidth}
    \centering \includegraphics[width=\linewidth]{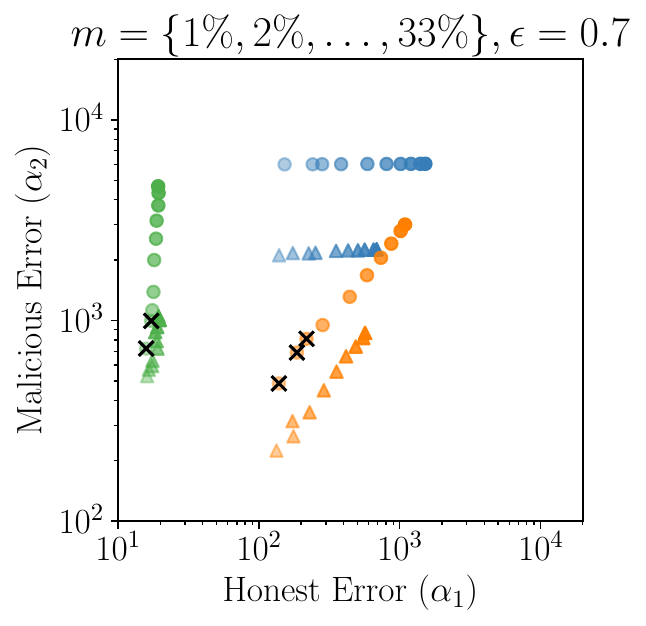}  \caption{\textit{Syn}: Combination}
        \label{fig:Syn:m}\end{subfigure}

\caption{Robustness analysis with varying $\epsilon$ (Figs. \ref{fig:FB:privacy}, \ref{fig:Syn:privacy}) and $m$ (Figs. \ref{fig:FB:m}, \ref{fig:Syn:m}): Higher brightness denotes higher $\epsilon$ ($m$).}  
\end{figure*}

\begin{table}
\centering
\begin{tabular}{lllll}
\toprule
    & \multicolumn{2}{c}{$\mathbf{\epsilon = 0.7}$} &\multicolumn{2}{c}{$\mathbf{\epsilon=3.0}$} \\ 
    & \textbf{\scalebox{0.8}{Input Poisoning}} & \textbf{\scalebox{0.8}{Response Poisoning}} & \textbf{\scalebox{0.8}{Input Poisoning}} & \textbf{\scalebox{0.8}{Response Poisoning}} \\ \midrule
 \textit{FB} & 13\% & 3.7\% & 3\% & 2.5\% \\ 
\textit{Syn} & 38.6\% & 10.4\% & 2.7\% & 2.4\% \\ \bottomrule
\end{tabular}
\caption{Minimum percentage of malicious users such that a strong collusion of all the malicious users was caught on all four poisoning attempts.}\label{tab:clique-detection:1}
\end{table}

\begin{figure*}[hbt!]
\begin{subfigure}[b]{0.33\linewidth}
        \includegraphics[scale=0.5]{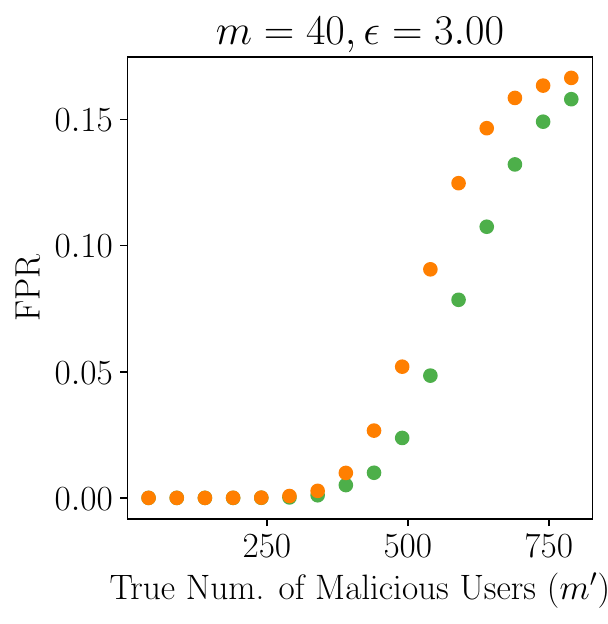}
        \caption{\textit{FB:} Degree Deflation}
        \label{fig:FB:FPR}
        \end{subfigure}
    \begin{subfigure}[b]{0.33\linewidth}
        \includegraphics[scale=0.5]{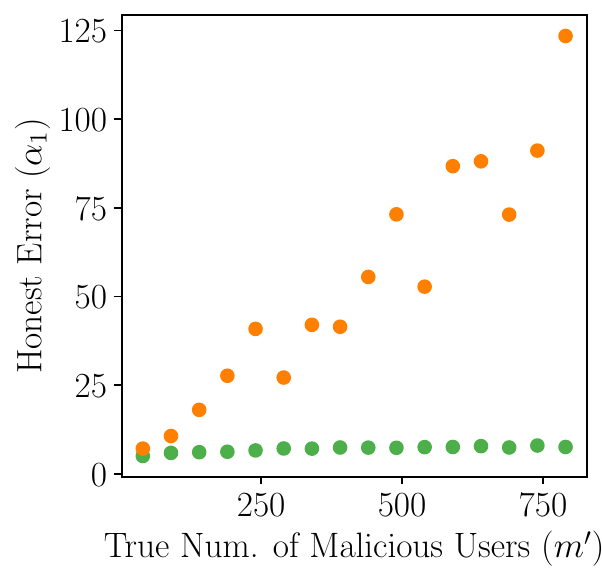}
        \caption{\textit{FB:} Degree Deflation } \label{fig:FB:honest_error}
        \end{subfigure}  
            \begin{subfigure}[b]{0.33\linewidth}
            \includegraphics[scale=0.5]{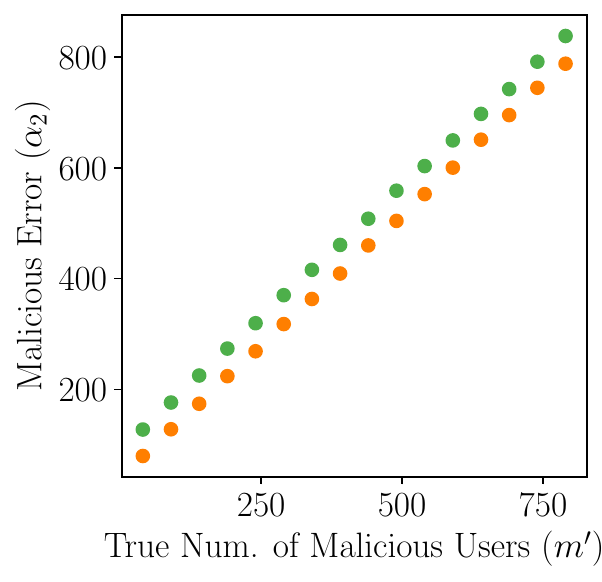}
            \caption{\textit{FB:} Degree Inflation }\label{fig:FB:malicious_error}
        \end{subfigure}
         \begin{subfigure}[b]{0.33\linewidth}
            \includegraphics[scale=0.5]{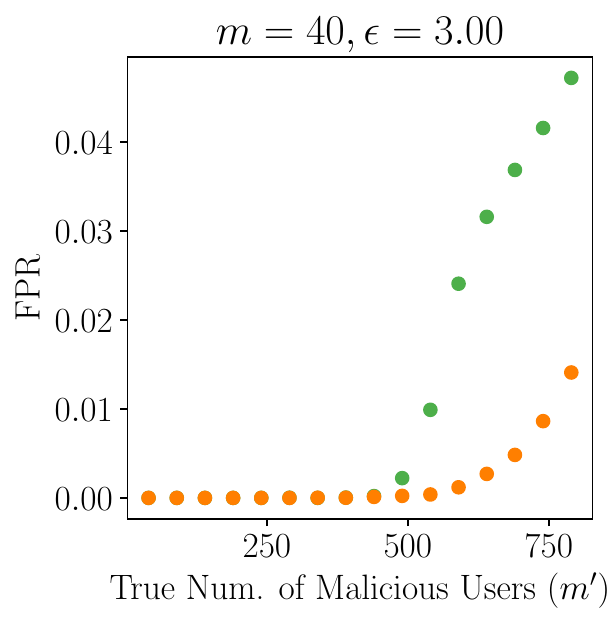}
            \caption{\textit{Syn:} Degree Deflation }\label{fig:Syn:FPR}
        \end{subfigure}
            \begin{subfigure}[b]{0.33\linewidth}
            \includegraphics[scale=0.5]{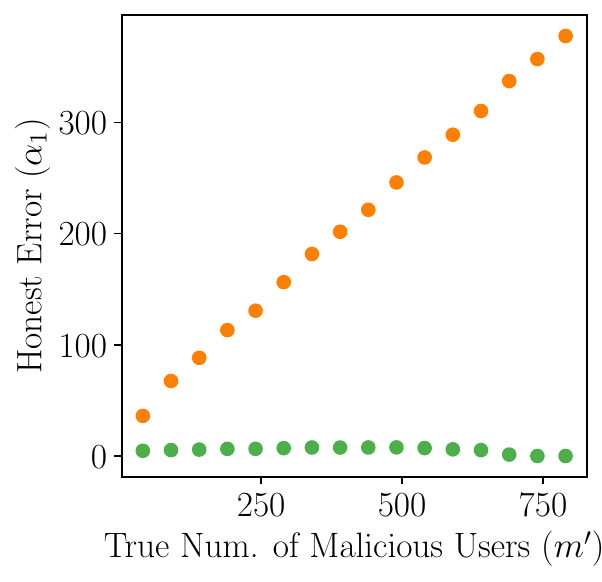}
            \caption{\textit{Syn:} Degree Deflation }\label{fig:Syn:honest_error}
        \end{subfigure}
            \begin{subfigure}[b]{0.33\linewidth}
            \includegraphics[scale=0.5]{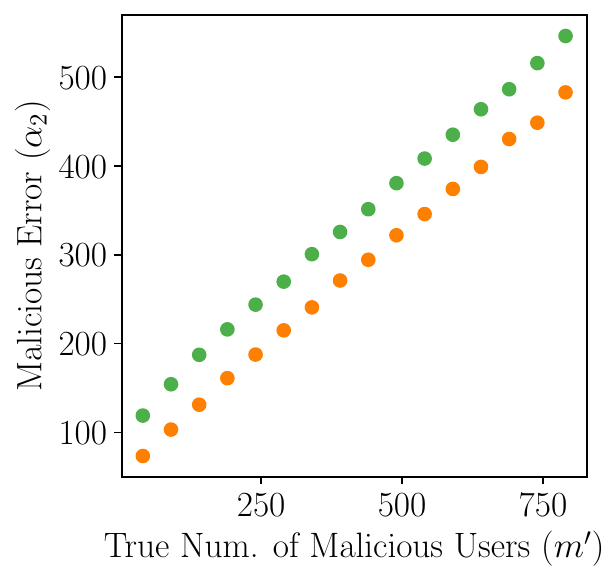}
            \caption{\textit{Syn:} Degree Inflation}\label{fig:Syn:malicious_error}
        \end{subfigure} 
 \small\caption{Robustness analysis when the true number of malicious users exceeding the theoretically assumed bound.}
  \label{fig:super:1} 
\end{figure*}

\noindent\textbf{The Effect of $\boldsymbol{\epsilon}$.} \ifpaper Fig. \ref{fig:FB:privacy} shows \else Figs. \ref{fig:FB:privacy} and \ref{fig:Syn:privacy} show\fi the impact of the attacks with varying privacy parameter $\epsilon$. We study the strongest combination attack (attack A10) which considered in the previous section.  We observe that, increasing privacy (lower $\epsilon$) leads to more skew for all attacks on all three protocols. For instance, the soundness of the response poisoning version of the degree deflation attack for \textit{FB}  $42\times$ worse for $\epsilon=0.1$ than that for $\epsilon=3$ for \DegHybrid{}. Additionally, we observe that malicious users get flagged only response poisoning since this is a stronger attack than input poisoning. 
\\\noindent\textbf{The Effect of $\boldsymbol{m}$.} 
We show how the attack efficacy varies with $m$ in \ifpaper Fig. \ref{fig:FB:m}\else Figs. \ref{fig:FB:m} and \ref{fig:Syn:m}\fi. We consider the strongest combination attack (attack A10) here as well. As expected, the impact of poisoning worsens with increasing $m$. Specifically, both honest and malicious error of \DegRRCheck{} worsen with increasing $m$. For instance, the honest error of response poisoning degrades by $8.5\times$ as we increase $m$ from $1\%$ to $33\%$ for \textit{Syn}. The honest error is uninfluenced by $m$ for \DegHybrid{} -- this is because it reports the Laplace estimates for the honest users which are not impacted by the malicious users. On the other hand, the malicious error remains unaffected for \DegRRNaive{}~since here a malicious target can always carry-out the worst-case attack regardless of $m$\ifpaper \else (Thm. \ref{thm:response:naive})\fi. 

We also consider the case when the true number of malicious users ($m'$) exceeds the assumed bound $m$. First, we consider the degree deflation attack, where the malicious users are colluding to deflate the degree of an honest target user $\DO_t$.  In \DegHybrid{}, recall that the final degree estimate—after the consistency checks—is obtained using the Laplace mechanism. Now let's take the case of the honest user $\DO_t$'s Laplace-based degree estimate in \DegHybrid{} -- since this value is completely controlled by the honest user, the malicious users \textit{cannot} increase the honest error, even when their number exceeds the assumed bound. This is demonstrated in Figs.~\ref{fig:FB:honest_error} and \ref{fig:Syn:honest_error}.  However, malicious users can falsely report their common edges with the target honest user $\DO_t$, thereby increasing the inconsistency count and causing the protocol’s consistency checks to fail. As a result, the honest user may be mistakenly flagged as malicious as demonstrated in Figs.~\ref{fig:FB:FPR} and \ref{fig:Syn:FPR}. Next we consider a degree inflation attack, where malicious users collude to inflate the degree of a target malicious user $\DO_t$. The error in $\DO_t$'s degree estimate can be increased in two ways -- (1) by falsifying edges with honest users, and (2) by having all malicious users lie consistently about their mutual edges with $\DO_t$.  The first strategy is still \textit{prevented} by our protocols, since the threshold $\tau$ used in the consistency checks is computed with respect to the lower value $m$. As a result, if $\DO_t$ attempts to report more inconsistent edges than allowed under this threshold, they will be flagged. Therefore, only the second strategy remains viable in our protocols. As a result, the malicious error in the $\DO_t$'s degree estimate increases  linearly with the number of colluding malicious users (Figs.~\ref{fig:FB:malicious_error} and \ref{fig:Syn:malicious_error}). However, as demonstrated in Table \ref{tab:clique-detection:1}, such a strong collusion can be effectively caught by our heuristic defenses.

\subsection{Comparison with Prior Work}
We present experiments comparing our protocols with the \textit{only} compatible prior defense by Cao et al.\cite{Cao21} -- originally designed for private frequency estimation (computing a histogram) of tabular data. Their defense flags users reporting suspiciously frequent target items. While their setting differs from ours (unlike a histogram, degree vector $\mathbf{\hat{d}}=\langle \hat{d}_1, \ldots, \hat{d}_n\rangle$ is \textit{not} a aggregate query  -- it is a \textit{per-user} degree estimate), their binary input format aligns with our adjacency lists, making adaptation possible. Other prior defenses are structurally incompatible, as discussed in Sec.\ref{sec:relatedwork}.
\\\noindent\textbf{Degree Inflation Attack.} Intuitively, both our protocols and the baseline defense rely on a threshold parameter to govern the flagging decision. Lowering  the threshold increases the likelihood of catching malicious users but may also raise the number of honest users that are incorrectly flagged, revealing a natural trade-off between the true positive rate (TPR) and the false positive rate (FPR). We empirically evaluate this trade-off by carrying out the strongest inflation attack (A11 from Table~\ref{tab:attacks}) in Fig.\ref{fig:FB:tau}\ifpaper \else and Fig.\ref{fig:Syn:tau}\fi. Our results show that our protocol achieves high TPR while maintaining low FPR, demonstrating its effectiveness. Furthermore, the plots clearly delineate the expected trade-off between TPR and FPR.
In contrast, the baseline defense is completely ineffective -- it fails to flag any malicious user and essentially flags honest users at random. It also exhibits no meaningful trade-off between TPR and FPR.
\\\noindent\textbf{Adaptive Attack.} There is an inherent trade-off for the adversary in our protocols -- they must balance between maximizing the error introduced and minimizing the likelihood of being flagged. Based on the knowledge of $\tau$, the adversary can construct the best possible adaptive attack by optimizing for the two objectives - this is the theoretically optimal adaptive attack. 
 We model this optimal adaptive attack and illustrate it in \ifpaper Fig. \ref{fig:FB:adaptive:1}\else Figs. \ref{fig:FB:adaptive:1} and \ref{fig:Syn:adaptive:1}\fi, where the attacker adaptively chooses how many bits to falsify based on $\tau$ during a degree inflation attack. Each point shows the trade-off between error introduced and the flagging risk. Our results reveal a clear \textit{Pareto frontier} -- more falsification increases both error and detection risk. In contrast, \DegRRNaive{} with the baseline defense from \cite{Cao21} provides no protection as the malicious users can increase error arbitrarily without being flagged.

\begin{figure*}
    \centering\includegraphics[width=0.5\linewidth]{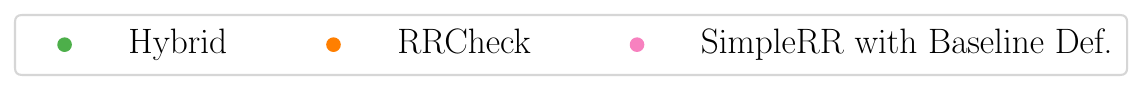} \\
  \begin{subfigure}[b]{0.24\linewidth}
\includegraphics[width=\linewidth]{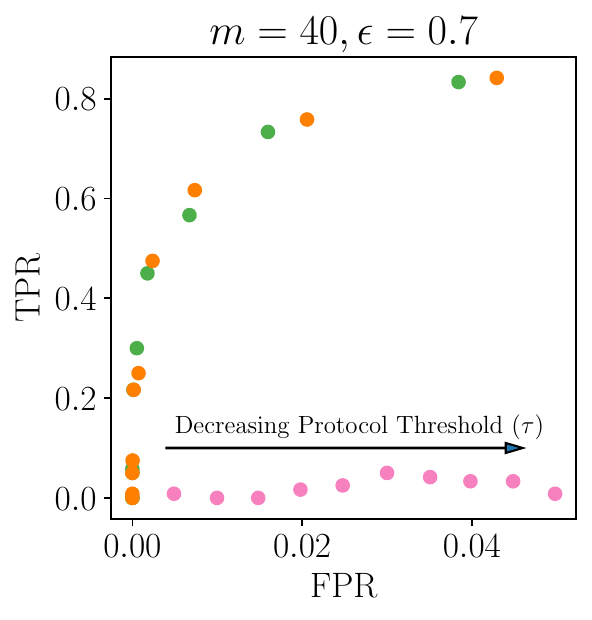}
        \caption{\textit{FB}: Degree Inflation }
        \label{fig:FB:tau}\end{subfigure}
        \begin{subfigure}[b]{0.24\linewidth}
        \includegraphics[width=\linewidth]{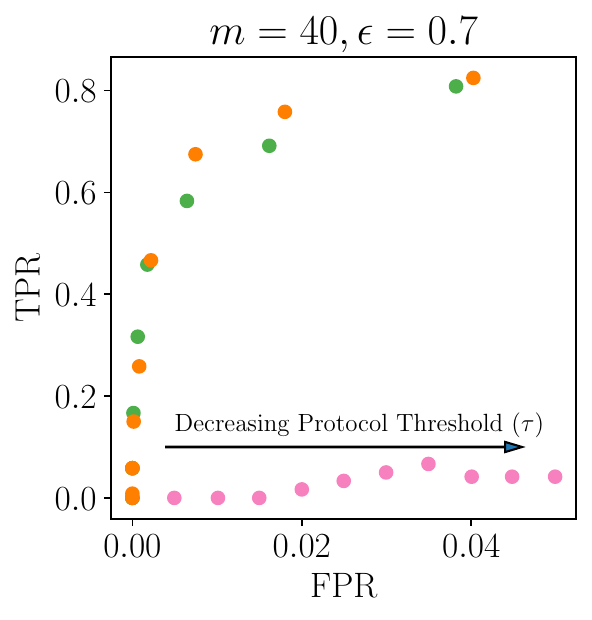}
    \caption{\textit{Syn}: Degree Inflation}
        \label{fig:Syn:tau}
        \end{subfigure}
        \begin{subfigure}[b]{0.24\linewidth}
    \centering \includegraphics[width=\linewidth]{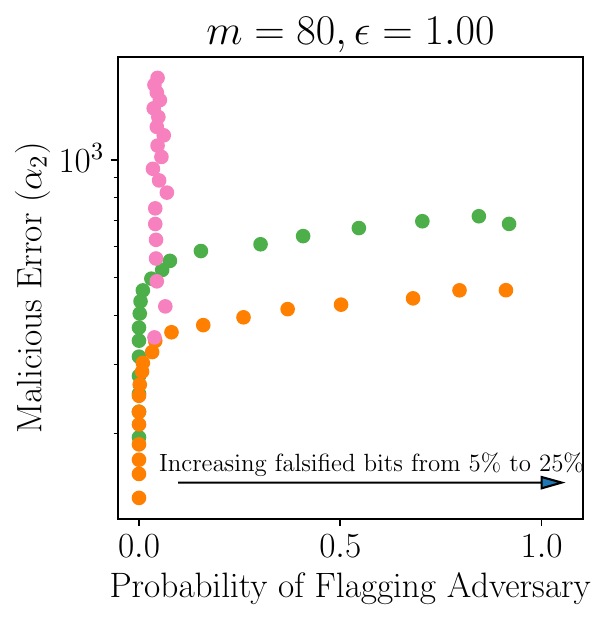}
      \caption{\textit{FB}: Adaptive }
        \label{fig:FB:adaptive:1}\end{subfigure}
        \begin{subfigure}[b]{0.24\linewidth}
        \includegraphics[width=\linewidth]{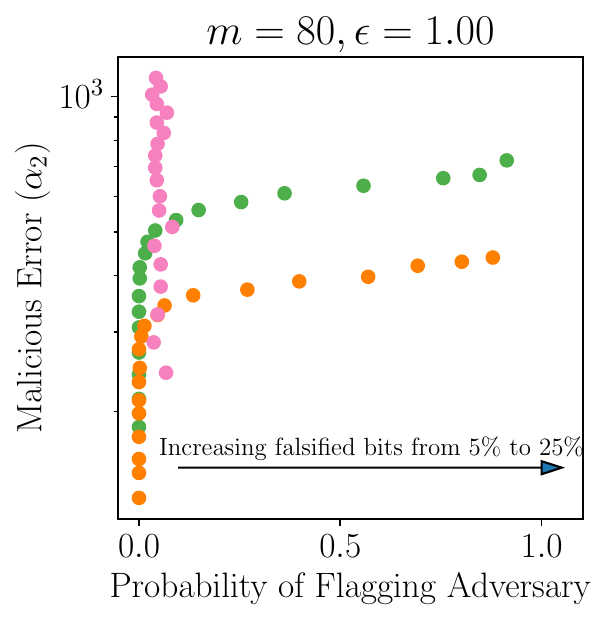}
 \caption{\textit{Syn}: Adaptive}
        \label{fig:Syn:adaptive:1}
        \end{subfigure}
\small{\caption{Comparison with the baseline from \cite{Cao21}. Figs. \ref{fig:FB:tau} and \ref{fig:Syn:tau} show the efficacy of flagging malicious users while Figs. \ref{fig:FB:adaptive:1} and \ref{fig:Syn:adaptive:1} show the \textit{Pareto frontier} for adaptive attacks.}}
\end{figure*}

\section{Related Work}\label{sec:relatedwork}
\textit{Poisoning Attacks for LDP Protocols} A recent line of work~\cite{Cheu21,Cao21,Wu21,KV1,Li22} has explored the impact of poisoning in the \ldp~setting. However, these works focused either on tabular data or key-value data. Additionally, prior work mostly focuses on the task of frequency estimation which is different from our problem of degree estimation. For the former, each user has some item (or (key,value) pair) from an input domain  and the data aggregator wants to compute the histogram over all the users’ items. Whereas, we compute the degree vector $\langle \hat{d}_1, \ldots, \hat{d}_n\rangle$ -- each user directly reports their degree $d_i$ (a count or via an adjacency list). 
A few of the works~\cite{Cao21,Wu21, 10415225} also presented some ad-hoc defenses against their proposed attacks. However, these defenses either protect against specific attacks, make assumptions about the data distribution or are interactive (recall we focus on non-interactive protocols). 
\\\textit{Cryptographic Protection.}
Prior work has also explored cryptographic strategies to defend against poisoning in the context of \ldp~\cite{Kato21, Ambainis03, Moran06, 10.1007/978-3-031-68208-7_18}. These approaches typically focus on verifying the correctness of the \ldp, ensuring that users execute the randomization mechanism as specified. However, they do \textit{not} protect against input poisoning. In contrast, our protocols are designed to defend against input poisoning  as well (see Sec.\ref{sec:input-attacks}). While recent work~\cite{bontekoe2024efficientverifiabledifferentialprivacy,Verity} also addresses input poisoning, it assumes that inputs are provided by a trusted third-party authorizer -- an assumption that does not hold in our setting of fully distributed graphs which has no authorizer.
\\\textit{Poisoning Attacks in Other Context.} A slew of poisoning attacks~\cite{biggio2021poisoning,mei2015teaching,Fang2020LocalMP,bhagoji2019analyzing, chen2017targeted,bagdasaryan2018backdoor,Xie2020DBADB} and defenses~\cite{roychowdhury2021eiffel, burkhalter2021rofl} for ML models have been proposed~\cite{kairouz2019advances}. 
These attacks are fundamentally different from the ones in our setting. For ML, the users send parameter gradient updates (multi-dimensional real-valued vector) and the attack objective is to misclassify data. 
Hence, none of the techniques from this literature are directly applicable here.

\section{Conclusion}\label{sec:conclusion}
In this paper, we have studied the impact of poisoning attacks on degree estimation under \ldp~and introduced a formal framework to quantify their effect on honest and malicious users. Additionally, we have proposed novel robust degree estimation protocols under \ldp~by leveraging the natural data redundancy in graphs, that can significantly reduce the impact of poisoning attacks.
\\\noindent \textbf{Acknowlegements.} JI partially carried out this work at
Basic Algorithms Research Copenhagen (BARC), which was supported by the VILLUM Investigator grant 54451. JI and KC would also like to acknowledge NSF grant CNS 2241100 for research support. ARC would like to acknowledge CI fellowship for research support.

\bibliographystyle{ACM-Reference-Format}
\bibliography{references}

\appendix

\section{Impact of Poisoning on Baseline Protocols}\label{app:baselines}
Within our robustness framework, we analyze the two naive private mechanisms outlined in Sec.~\ref{sec:ldp} -- the Laplace mechanism and randomized response. The shortcomings of these mechanisms motivate the design of our robust protocols discussed in the paper. 
\subsection{Laplace Mechanism} 

The simplest mechanism for estimating an user's degree is the Laplace mechanism, \RLap, where each user directly reports their noisy estimates. Consequently, the degree estimate of an honest user \textit{cannot} be tampered with at all -- the $\tilde{O}(\frac{1}{\epsilon})$\footnote{$\tilde{O}$ hides factors of $\log\frac{1}{\delta}$ } term is due to the error of the added Laplace noise. This error is in fact optimal (matches that of central \DP) for degree estimation.  On the flip side, a malicious user can flagrantly lie about their estimate without detection resulting in the the worst-case malicious error. Specifically,  there
exists a graph and an attack against \RLap{} in
which a malicious user is guaranteed to manipulate their true degree by $n -1$ --
this holds for the case where the malicious user is an isolated node but
lies that their degree is $n - 1$. The robustness of \RLap{} against response poisoning attacks is formalized as follows: 
\begin{thm}\label{thm:response:laplace}
	Under response poisoning, The \RLap{} protocol achieves $\frac{1}{\epsilon}\log\frac{n}{\delta}$-honest error. However, there is a response poisoning attack $\calA$ and a graph $G$ such that $\Pr[\exists i.\mathrm{err}^{mal}(d_i, \tilde{d}_i) = n-1] = 1$.
\end{thm}
The proof is in \ifpaper the full paper \cite{Full} \else  App. \ref{app:thm:response:laplace}\fi.
Thus, while \RLap{} has low honest error, it does not achieve $\alpha$-malicious error for any $\alpha \leq n-1$. Intuitively, this is because there is no way to verify the malicious users' reports. 
It is important to note that  \RLap{} has low honest error even with $n-1$ malicious users while the attack in Theorem~\ref{thm:response:laplace} is possible with just a single malicious user.

\subsection{Randomized Response}\label{sec:protocol:naive}
In this section, we look at an alternative mechanism where the users release their edges via randomized response. Recall that the information about an edge is shared between two users -- the idea here is to leverage this \textit{distributed information}. For our baseline algorithm, \DegRRNaive~(described in Alg.~\ref{alg:degrrnaive}), the data aggregator collects information about an edge from a \textit{single} user. Specifically, for edge $(i,j)$ with $i < j$, it simply uses the response from user $\DO_i$ to decide if the edge exists. To estimate the degree, it counts the total number of edges to user $\DO_i$ with the random variable $count_i^1$ and then computes a debiased estimate of the degree. Note that this naive approach is used by many prior works in graph algorithms~\cite{LDPGraph1, LDPGraph2,imola2021locally,imola_communication-efficient_2022}.
\setlength{\textfloatsep}{4pt}
Formally for response poisoning attacks, we have:
\begin{thm}\label{thm:b3a2_hard} 
Under response poisoning \DegRRNaive{} protocol attains $m\frac{e^\epsilon+1}{e^\epsilon-1}+\sqrt{n}\frac{\sqrt{(e^\epsilon+1)\ln\frac{2n}{\delta}}}{e^\epsilon-1}$ honest error. However, there is a response poisoning attack $\calA$ and a graph $G$ such that $\Pr[\exists i.~\mathrm{err}^{mal}(d_i, \hat{d}_i) = n-1] = 1$.
\label{thm:response:naive}\end{thm}
The above theorem is proved in \ifpaper the full paper \cite{Full} \else  App. \ref{app:thm:response:naive}\fi.
For $\epsilon < 1$, the guarantee for honest error is $\approx m(1+\frac{1}{\epsilon})+\frac{\sqrt{n}}{\epsilon}$. Intuitively, the $\frac{\sqrt{n}}{\epsilon}$ term comes from the error introduced by randomized response. The $m(1+\frac{1}{\epsilon})$ term comes from the adversarial behavior of the malicious users -- $m$ term is inevitable and 
 accounts for the worst case scenario where all $m$ malicious users are colluding, while the $\frac{1}{\epsilon}$ factor corresponds to the scaling factor required for de-biasing. This observation is in line with prior work \cite{Cheu21} that assesses the impact of poisoning attacks on tabular data.  Clearly, smaller the value of $\epsilon$, worse is the attack's impact.   

Similar to the Laplace mechanism, \DegRRNaive{} does not attain $\alpha$ malicious error for any $\alpha < n-1, \delta < 1$. This happens because of the strong attack where $\DO_n$ is an isolated node who acts maliciously and reports an all-one list. Thus, once again this  worst-case soundness is inevitable even with a single malicious user. 

\section{Corresponding Results for Input Poisoning}\label{app:input}
  We first investigate the baseline protocols \RLap{} and \DegRRNaive{} and show that while input poisoning attacks are less damaging than response poisoning attacks, the protocols still suffer from poor robustness guarantees. Next, we show that our proposed protocols, \DegRRCheck{} and \DegHybrid{}, offer improved robustness against input poisoning attacks. These results demonstrate a separation between the efficacy of response and input poisoning attacks.

Recall in the Laplace mechanism, each user simply reports a private estimate of their degree. Under input poisoning attacks, Laplace noise is added to the poisoned input before it is reported to the data aggregator. Consequently, the response poisoning attack in which a malicious user could \textit{deterministically} report their degree as $n-1$ (Thm.~\ref{thm:response:laplace}) is no longer possible -- in order to manipulate their degree by $n-1$, the malicious user needs to get lucky with the sampled Laplace noise, resulting in the following theorem:

 \begin{thm} Under input poisoning, for any $\delta > 0$, the \RLap~protocol has $\frac{1}{\epsilon}\ln\frac{n}{\delta}$ honest error. For any $\delta > \frac{1}{2}$, \RLap has $n-1$ malicious error. \label{thm:input:laplace}  
 \end{thm}
The proof of the above theorem is in \ifpaper the full paper \cite{Full}\else App.~\ref{app:thm:input:laplace}\fi.
Unsurprisingly, compared to Thm.~\ref{thm:response:laplace} for response poisoning, the honest error is unchanged because no attack is possible for honest users. However, the malicious error is different. Thm.~\ref{thm:response:laplace} delineates a strong attack which always results in $(n-1)$ malicious error. In contrast, \RLap{} has $n-1$ malicious error with $\delta = \frac{1}{2}$ with respect to any input poisoning attack. This is because the sampled Laplace noise is negative with probability $\frac{1}{2}$.  Hence, even a worst-case malicious user who sends the maximum degree of $n-1$ will only get assigned a final estimate this high if the sampled noise is non-negative. Thus, the noise in the Laplace mechanism prevents the adversary from carrying out the deterministic worst-case attack.

Next, we show the result for $\DegRRNaive{}$, our second baseline protocol. There is an improvement in both honest and malicious errors, because the adversary's signals in the poisoned data (such as, a malicious user indicating they share an edge with every other user, or $m$ malicious users intentionally deleting their edges to an honest user), are noised via randomized response which weakens them.

\begin{thm}\label{thm:b2a2_easy}  The \DegRRNaive{} protocol achieves $m+\sqrt{n}\frac{\sqrt{2(e^\epsilon+1) \ln \frac{4n}{\delta}}}{e^\epsilon-1}$ honest error against input poisoning, and $(n-1,\frac{1}{2})$ malicious error w.r.t any input poisoning from $\calM$.\label{thm:input:naive}
\end{thm}
The above theorem is proved in \ifpaper the full paper \cite{Full}\else App.~\ref{app:thm:b2a2_easy}\fi.
Written asymptotically, the honest error  of Thm.~\ref{thm:input:naive} is $\tilde{O}(m + \frac{\sqrt{n}}{\epsilon})$, which improves the guarantee over response poisoning attacks (Thm.~\ref{thm:response:naive}) by a factor of $\frac{m}{\epsilon}$. This shows  a separation between input and response poisoning attacks.  A similar case holds for soundness -- while \DegRRNaive{} is $(n-1)$-tight sound under response poisoning attacks, for input poisoning attacks, it is $(n-1, \frac{1}{2})$-sound. The implications of this observation are similar to those of Thm.~\ref{thm:input:laplace} as discussed. 



Despite exhibiting improvement over response poisoning attacks, both naive protocols still fall short of providing acceptable soundness guarantees.

\begin{table*}[tbh!]
\centering

\scalebox{0.74}{
\rotatebox{90}{
\begin{tabular}{|cccccc|}
\toprule 
\multirow{4}{*}{\textbf{Attack}} & \textbf{Number} & \textbf{Number} & \textbf{Number} & \textbf{Malicious} & \multirow{4}{*}{\textbf{Description}}\\
& \textbf{of} & \textbf{of} & \textbf{of} & \textbf{User} & \\
 & \textbf{Malicious}  & \textbf{Malicious}  & \textbf{Honest}  & \textbf{Selection}  & \\
 & \textbf{Non-targets} & \textbf{Targets} & \textbf{Targets} & \textbf{Strategy} & 
\\
\midrule
\multirow{2}{*}{A1} & \multirow{2}{*}{39} & \multirow{2}{*}{1} & \multirow{2}{*}{0} & \multirow{2}{*}{Random} & Pure degree inflation attack where the malicious  \\
& & & & & users are chosen at random from the entire graph \\\hline
\multirow{2}{*}{A2} & \multirow{2}{*}{40} & \multirow{2}{*}{0} & \multirow{2}{*}{1} & \multirow{2}{*}{Random} & Pure degree deflation attack where the malicious  \\
& & & & & users are chosen at random from the entire graph \\\hline
\multirow{2}{*}{A3} & \multirow{2}{*}{40} & \multirow{2}{*}{0} & \multirow{2}{*}{1} & \multirow{2}{*}{Neighbor} & Pure degree deflation attack where the malicious  \\
& & & & & users are neighbors of the honest target  \\\hline
\multirow{2}{*}{A4} & \multirow{2}{*}{35} & \multirow{2}{*}{5} & \multirow{2}{*}{0} & \multirow{2}{*}{Random} & Pure degree inflation attack where the malicious  \\
& & & & & users are chosen at random from the entire graph \\\hline
\multirow{2}{*}{A5} & \multirow{2}{*}{30} & \multirow{2}{*}{10} & \multirow{2}{*}{0} & \multirow{2}{*}{Random} & Pure degree inflation attack where the malicious  \\
& & & & & users are chosen at random from the entire graph \\\hline
\multirow{2}{*}{A6} & \multirow{2}{*}{40} & \multirow{2}{*}{0} & \multirow{2}{*}{5} & \multirow{2}{*}{Community} & Pure degree deflation attack where the malicious users and the \\
& & & & &  honest targets are chosen from the same community of the graph \\\hline
\multirow{2}{*}{A7} & \multirow{2}{*}{40} & \multirow{2}{*}{0} & \multirow{2}{*}{10} & \multirow{2}{*}{Community} & Pure degree deflation attack where the malicious users and  \\
& & & & & honest targets are chosen from the same community of the graph \\\hline
\multirow{2}{*}{A8} & \multirow{2}{*}{40} & \multirow{2}{*}{0} & \multirow{2}{*}{600} & \multirow{2}{*}{Community} & Pure degree deflation attack where the malicious users and the  \\
& & & & & honest targets are chosen from the same community of the graph \\\hline
\multirow{2}{*}{A9} & \multirow{2}{*}{35} & \multirow{2}{*}{5} & \multirow{2}{*}{5} & \multirow{2}{*}{Community} & Combination attacks where the malicious users and the   \\
& & & & & honest targets are chosen from the same community of the graph \\\hline
\multirow{2}{*}{A10} & \multirow{2}{*}{30} & \multirow{2}{*}{10} & \multirow{2}{*}{10} & \multirow{2}{*}{Community} & Combination attacks where the malicious users and the   \\
& & & & & honest targets are chosen from the same community of the graph \\\hline
\multirow{2}{*}{A11} & \multirow{2}{*}{(15,15)} & \multirow{2}{*}{(5,5)} & \multirow{2}{*}{(0,0)} & \multirow{2}{*}{(Community,Community)} & Degree inflation attack where two sets of 15 non-target malicious users target  \\
& & & & &   5 malicious users independently in two different communities  \\\hline
\multirow{2}{*}{A12} & \multirow{2}{*}{(10,10)} & \multirow{2}{*}{(10,10)} & \multirow{2}{*}{(0,0)} & \multirow{2}{*}{(Community,Community)} & Degree inflation attack where where two sets of 10 non-target malicious users target \\
& & & & &   10 malicious users independently in two different communities\\\hline
\multirow{2}{*}{A13} & \multirow{2}{*}{(20,20)} & \multirow{2}{*}{(0,0)} & \multirow{2}{*}{(5,5)} & \multirow{2}{*}{(Community,Community)} & Degree deflation attack where two sets of 20 malicious users target   \\
& & & & & 5 honest users independently in two different communities\\\hline
\multirow{2}{*}{A14} & \multirow{2}{*}{(20,20)} & \multirow{2}{*}{(0,0)} & \multirow{2}{*}{(10,10)} & \multirow{2}{*}{(Community,Community)} & Degree deflation attack where two sets of 20 malicious users target \\
& & & & & 10 honest targets independently in two different communities\\\hline
\multirow{2}{*}{A15} & \multirow{2}{*}{(15,20)} & \multirow{2}{*}{(5,0)} & \multirow{2}{*}{(0,5)} & \multirow{2}{*}{(Community,Community)} & Combination attacks where two sets of malicious users of sizes 15 and 20 target   \\
& & & & & 10 malicious and 10 honest targets  independently in two different communities\\\hline
\multirow{2}{*}{A16} & \multirow{2}{*}{(10,20)} & \multirow{2}{*}{(10,0)} & \multirow{2}{*}{(0,10)} & \multirow{2}{*}{(Community,Community)} & Combination attacks where two sets of malicious users of sizes 10 and 20 target  \\
& & & & & 10 malicious and 10 honest targets  independently in two different communities \\
 \bottomrule
\end{tabular}
}}
\caption{Summary of evaluated attacks}\label{tab:attacks}
 \vspace{-0.5cm}
\end{table*}
\section{Evaluation Cntd.}\label{app:attacks}
In this section, we describe the specific implementations of the attacks we use for our evaluation in Section \ref{sec:eval}.

Recall an attack consists of $m$ malicious users, where $m$ is known beforehand. Each malicious user may perform any of the following three actions: 1) lie about their own connections to changer their estimate, 2) target some subset of malicious users to change those estimates, and 3) target some subset of honest users to tamper with those estimates. We consider sixteen possible ways to do this in ways that would often occur in the real world. The methods appear in Table~\ref{tab:attacks}, and we describe them now.

In the simplest attacks (A1 - A3), we consider a set of compromised malicious users whose goal is to either inflate a target malicious user or deflate a target honest user. In A1 and A2, the compromised malicious users constitute a random subset of users. In A3, the compromised malicious users come from the neighbors of the target honest user, representing a worse attack. In these attacks, and for the rest, we assume the target malicious user lies about his connections to try to increase his degree while the target honest user follows protocol.

In A4 and A5, we scale up A1 to more malicious targets, which could happen if a group of compromised users is used to target multiple accounts. As there is no way to select friends of many targets at once, we omit A3.

In A6 through A8, we consider a more realistic selection strategy: Instead of malicious users drawn completely at random, we consider they are drawn from a \emph{community} in the graph, which is more realistic as similar accounts are often targeted together. We then consider these accounts targeting multiple honest users, and in A8 we simulate what would happen when the malicious users are used recklessly and target a large fraction of the community.

In A9 and A10, we again consider malicious users in a community, but this time they target both malicious and honest users.

Finally, in the rest of the attacks, we consider combinations of two of the previous attacks carried out independently. This is more realistic, as in the real world, different parties of malicious users may collude independently.

\ifpaper
\else
\subsection{Attacks Against \DegRRCheck{}}\label{app:attack-algos}

\subsubsection{Degree Inflation Attacks}
Let $\DO_t, t \in \calM$ denote the target malicious user. 
\\
\noindent\textbf{Input Poisoning.} In this attack, the non-target malicious users set the bit for $\DO_t$ to be $1$. The target malicious user constructs his input by setting $1$ for all other malicious users. They also report $1$ for honest users to which they share an edge. 

For honest users to which $\DO_t$ does not share an edge, $\DO_t$ flips some of the bits to $1$ with the hopes of artificially increasing his degree. He does this for a $r_1$-fraction of these neighbors. See Algorithm~\ref{alg:att-input-inf} for the details; we term this attack $A_{\DegRRCheck}^{inp}$. Note that if $r_1 = 0$, then the malicious user is being completely honest for these users and will not inflate his degree, and if $r_1 = 1$, then he lies about each of these users and will likely be caught. Thus, his strategy is to pick a value in between $0$ and $1$, and in the experiments we found that $r_1 = 15\%$ was a good tradeoff point.

\begin{algorithm}[bt]
  \caption{$A_{\DegRRCheck}^{inp}: \{0,1\}^n\mapsto\{0,1\}^n$ }\label{alg:att-input-inf}
  \Parameter{$\epsilon$ - Privacy parameter}

  \KwData{$l \in \{0,1\}^n$ - True adjacency list, $t$ - Target honest user}
  \KwResult{$q \in \{0,1\}^n$ - Reported adjacency list}
  Select $r_1\in [0,1]$\; 
  \Comment{$\calH_1$ is the set of honest users with a mutual edge}
  $\calH_1=\{i \in \calH| l[i]=1\}$\;
  \Comment{$\calH_0$ is the set of honest users without a mutual edge}
  $\calH_0=\calH\setminus\calH_1$\;
  \Comment{Randomly sample $r_1$ fraction of the users in $\calH_0$}
  $F\in_R \calH_0, |F|=r_1|\calH_0|$\;
  $l'=\{0,0,\cdots, 0\}$\;
  \lFor{ $i \in \calH_1\cup\calM \cup F$}{ 
  $l'[i]=1$
  }
  \lFor{$i \in [n]$}{
    $q[i]=\rr_\rho(l'[i])$
  }
  \KwRet $q$
\end{algorithm}

\noindent \textbf{Response Poisoning.}  For response poisoning, the non-target malicious first find a plausible response by applying $RR_\rho$ to their data. They then set the bit for $\DO_t$ to be $1$, indicating they are connected to this user.

The target malicious user constructs his response by first applying $RR_\rho$ to his data to compute a plausible response. Then, he flips his bits to malicious users to $1$, and for honest users, he takes a $r_1$-fraction of the $0$s in his response and flips them to $1$. The quantity $r_1$ is a tradeoff parameter with the same intuition as for $A_{\DegRRCheck}^{inp}$. The details of this attack appear in Algorithm~\ref{alg:att-resp-inf}, and it is termed $A_{\DegRRCheck}^{resp}$.

\begin{algorithm}[bt]
  \caption{$A_{\DegRRCheck}^{resp}: \{0,1\}^n\mapsto\{0,1\}^n$ }\label{alg:att-resp-inf}
  \Parameter{$\epsilon$ - Privacy parameter}
  
  \KwData{$l \in \{0,1\}^n$ - True adjacency list}
  \KwResult{$q \in \{0,1\}^n$ - Reported adjacency list}
  
  Select $r_1\in [0,1]$\;
  $q = \textsf{RR}_\rho (l)$ \;
  \Comment{$\calI_1$ is the set of honest users with an edge in $q$.}
$\calI_1=\{i \in \calH| q[i]=1\}$\; 
$\calI_0 = \calH \setminus \calI_1$\;
\Comment{ Randomly sample $r_1$ fraction of the users in $\calI_0$.}
$F\in_R \calI_0, |F|=r_1|\calI_0|$\;
  \lFor{ $i \in \calI_1 \cup\calM \cup F$}{$q[i]=1$}
  \KwRet $q$
\end{algorithm}

\subsubsection{Degree Deflation Attacks}
Let $\DO_t, t \in \calH$ denote the target honest user. \\\noindent\textbf{Input Poisoning.} Here, every malicious user constructs his input acting honestly for non-target users and setting a $0$ for $\DO_t$.
\\\noindent\textbf{Response Poisoning.} 
Every malicious user acts honestly for non-target users by applying randomized response to their input. They finally send a $0$ for their connection to $\DO_t$.
\subsection{Attacks Against \DegHybrid{}}

\subsubsection{Degree Inflation Attacks}
Let $\DO_t, t \in \calM$ be the target malicious user.
\\
\noindent\textbf{Input Poisoning.} The non-target malicious users flip their edge to $\DO_t$ to a $1$ as they do in $A_{\DegRRCheck{}}^{inp}$. They send an honest estimate of their degree $\tilde{d}^{Lap}$ as this does not affect the target.

The target malicious user crafts his input adjacency list $q$ as he did in $A_{\DegRRCheck}^{inp}$. For his estimate $\tilde{d}_t^{Lap}$, he 
computes the expected value of $\tilde{d}_t^{rr}$ given that he submitted $q$ while the other users either submit $RR_\rho(l_i)$ or $RR_\rho(1)$, depending if they are honest or malicious. Specifically, the expected value is given by
\[
    e^{rr,inp} = \frac{m(1-\rho)^2 + \E[\sum_{i \in \calH} q_i RR_\rho(l_i)] - \rho^2 n}{1-2\rho}.
\]
He finally sets $\tilde{d}_t^{rr} = e_t^{rr,inp} + r_2 \frac{\tau}{1-2\rho}$ where $r_2 \in [0,1]$, which again trades off between how much cheating is possible and getting flagged. During the trials, we used $q_2 = 0.1$ as this did not significantly increase the target's chance of being rejected as $\bottom$. This attack, termed $A_{\DegHybrid}^{inp}$, appears in Algorithm~\ref{alg:att-resp-inf2}.

\begin{algorithm}[bt]
  \caption{$A_{\DegHybrid}^{inp}: \{0,1\}^n\mapsto\{0,1\}^n$ }\label{alg:att-resp-inf2}
  \Parameter{ $\epsilon$ - Privacy parameter}

  \KwData{ $l \in \{0,1\}^n$ - True adjacency list}
  \KwResult{ $q \in \{0,1\}^n$ - Reported adjacency list, $\tilde{d}^{lap}$ - Reported noisy degree estimate}
Select $r_2 \in [0,1]$\;
\Comment{$c$ is the constant used in Alg. \ref{alg:deghybrid} to divide the budget between the RR and Laplace steps.}
$\rho = \frac{1}{1+e^{c\epsilon}}$\;
$q \gets A_{\DegRRCheck}^{inp}(l, c\epsilon)$\;
$\tilde{count}^{11} \gets m(1-\rho)^2 + \E[\sum_{i \in \calH} q_i RR_\rho(l_i)]$\;
$ e^{rr,inp} = \frac{\tilde{count}^{11} - \rho^2 n}{1-2\rho}$\;
$ \hat{d}^{Lap} = e^{rr,inp} + r_2 \frac{\tau}{1-2\rho} + \eta$ where $\eta \sim Lap(\frac{1}{(1-c)\epsilon})$\;
\KwResult{ $q,\tilde{d}^{Lap}$}
\end{algorithm}

\noindent\textbf{Response Poisoning.}
The non-target malicious users flip their edge to $\DO_t$ to a $1$ as they do in $A_{\DegRRCheck{}}^{inp}$. They send an honest estimate of their degree $\tilde{d}^{Lap}$ as this does not affect the target.

The target malicious user crafts his response adjacency list $q$ as he did in $A_{\DegRRCheck}^{resp}$. For his estimate $\tilde{d}_t^{Lap}$, he computes the expected value of $\tilde{d}_t^{rr}$ given that he submitted $q$ while the other users either submit $RR_\rho(l_i)$ or $1$, depending if they are honest or malicious. This expected value is given by
\[
e^{rr, resp} = \frac{m + \E[\sum_{i \in \calH} q_i RR_\rho(l_i) - \rho^2 n]}{1-2\rho}.
\]
He finally sets $\tilde{d}_t^{rr} = e^{rr,resp} + r_2 \frac{\tau}{1-2\rho}$ where $r_2 \in [0,1]$ serves a similar tradeoff purpose as for $A_{\DegHybrid}^{inp}$. 

\begin{algorithm}[bt]
  \caption{$A_{\DegHybrid}^{resp}: \{0,1\}^n\mapsto\{0,1\}^n$ }
  \Parameter{$\epsilon$ - Privacy parameter}
  
  \KwData{ $l \in \{0,1\}^n$ - True adjacency list}
  \KwResult{$q \in \{0,1\}^n$ - Reported adjacency list, $\tilde{d}^{Lap}$ - Reported noisy degree estimate}
  
  Select $r_2\in [0,1]$ \;
  $q=A_{\DegRRCheck}^{resp}(l, \epsilon)$\;
$\rho=\frac{1}{1+e^{c\epsilon}}$\;
\Comment{$c$ determines how the privacy budget is divided between the two types of response as in Alg. \ref{alg:deghybrid}}
$\tilde{count}^{11} \gets m + \E[\sum_{i \in \calH} q_i RR_\rho(l_i)]$\;
$e^{rr,resp} = \frac{m + \tilde{count}^{11} - \rho^2 n}{1-2\rho}$\;
$\tilde{d}^{Lap} = e^{rr,resp} + r_2 \frac{\tau}{1-2\rho}$\;
\KwRet{ $q,\tilde{d}^{Lap}$}
\end{algorithm}

\subsubsection{Degree Deflation Attacks}
Let $\DO_t, t \in \calH$ represent the honest target.\\
\noindent\textbf{Input Poisoning.}
For the adjacency list, all the malicious users follow the same protocol as for \DegRRCheck{}. For the degree, all the malicious users follow the Laplace mechanism truthfully as these values are immaterial for estimating the degree of the target honest user.
  
\noindent\textbf{Response Poisoning.} 
For the adjacency list, all the malicious users follow the same protocol as for \DegRRCheck$(\cdot)$. For the degree, all the malicious users follow the Laplace mechanism truthfully as these values are immaterial for estimating the degree of the target honest user.

\section{Proofs of Lower Bounds}\label{app:lb-proofs}
We will begin with definitions and preliminary results, and then prove our two lower bounds for any protocol.
\subsection{Definitions and Preliminary Results}
Our results will appeal to information theory, and will specifically need the $KL$ divergence $D_{KL}$ between two distributions. It is sufficient for us to consider probability distributions over discrete domains; our results transfer easily to continuous domains (using the relevant results from measure theory).
\begin{defn}
    For two distributions $P, Q$ on a space discrete space $\calX$, the $KL$ divergence is defined as
    \[
        D_{KL}(P \| Q) = \sum_{x \in \calX} P_x \ln (\tfrac{P_x}{Q_x}).
    \]
\end{defn}
A related, simpler measure will the the total variation distance (TVD):
\begin{defn}
    For two distributions $P, Q$ on a space discrete space $\calX$, the TVD between $P,Q$ is defined by
    \[
        TVD(P,Q) = \frac{1}{2} \sum_{x \in \calX} |P_x - Q_x|.
    \]
\end{defn}
These two quantities are both $f$-divergences, which mean they satisfy the data processing inequality:
\begin{fact}
    For two distributions $P,Q$ on $\calX$, and any (possibly random) function $f : \calX \rightarrow \calY$, we have
    \[
        D_{KL}(f(P) \| f(Q)) \leq D_{KL}(P \| Q).
    \]
    The same inequality holds for TVD.
\end{fact}
For a proof of this and many other relevant results, see~\cite{cover1999elements}.

It is important for us to have the following approximation to $KL$ divergence:
\begin{lemma}\label{lem:kl-approx}
    For $p, q \in [0,1]$ and $P = Bern(p), Q = Bern(q)$, we have $D_{KL}(P\|Q) \leq \frac{2(p-q)^2}{\min\{q, 1-q\}}$.
\end{lemma}
\begin{proof}
    By the definition of KL divergence, we have
    \begin{align*}
        D_{KL}(P \| Q) &= \ln(\tfrac{p}{q})p + \ln(\tfrac{1-p}{1-q})(1-p) \\
        &= q(\tfrac{p}{q} \ln (\tfrac{p}{q}) - \tfrac{p}{q} + 1) + (1-q)(\tfrac{1-p}{1-q} \ln (\tfrac{1-p}{1-q}) - \tfrac{1-p}{1-q} + 1).
    \end{align*}
    We have for all $x$ that $\ln(x) - x + 1 \leq (x-1)^2$ using Taylor expansion. Thus,
    \begin{align*}
    D_{KL}(P \| Q) &\leq q (\tfrac{p}{q} - 1)^2 + (1-q)(\tfrac{1-p}{1-q} - 1)^2 \\
    &\leq \frac{(p-q)^2}{q} + \frac{(p-q)^2}{1-q},
    \end{align*}
    and the result follows.
\end{proof}
Next, we will need the following, very general result from the differential privacy literature~\cite{kairouz2015composition}, which will allow to simplify our later analysis by considering just randomized response.
\begin{lemma}\label{lem:ldp-postproc}
    (From~\cite{kairouz2015composition})
    If $\calR(0), \calR(1)$ satisfy $\epsilon$-LDP, then there exist distributions $R^0, R^1$ such that
    \begin{align*}
        \calR(0) &= (1-q) R^0 + q R^1 \\
        \calR(1) &= q R^0 + (1-q) R^1,
    \end{align*}
    where $q = \frac{1}{e^\epsilon + 1}$. (Note we are slightly abusing notation, as the sum of distributions $q P + (1-q)Q$ means the distribution whose p.d.f. is the sum of the p.d.f's of $P,Q$).
\end{lemma}
In particular, Lemma~\ref{lem:ldp-postproc} implies we may view any $\epsilon$-LDP protocol with input $x \in \{0,1\}$ as a post-processing function $\calP$ applied to $RR_q(x)$, where $RR_q$ denotes randomized response with flipping parameter $q$. 

Finally, we need a result which upper bounds the maximum amount of information, measured by the KL divergence, which an LDP protocol is able to release. There is a long line of work on this problem starting with~\cite{duchi2013local}. We only need the following particular case, which we prove here for completeness:
\begin{lemma}\label{lem:kl-ub}
    If $\calR(x)$ satisfies $\epsilon$-LDP, and $x^0, x^1$ are drawn from two distributions $P,Q$ on $\{0,1\}$, then
    \[
    D_{KL}(\calR(x^0) \| \calR(x^1)) \leq \frac{(e^\epsilon - 1)^2}{e^\epsilon+1} TVD(P,Q)^2.
    \]
\end{lemma}
\begin{proof}
    We apply Lemma~\ref{lem:ldp-postproc} to $\calR$, meaning we can write $\calR(x) = \calP(RR_q(x))$ with $q = \frac{1}{e^\epsilon + 1}$. Using the post-processing inequality of $KL$ divergence, 
    \[
        D_{KL}(\calR(x^0) \| \calR(x^1)) \leq D_{KL}(RR_q(x^0) \| RR_q(x^1)).
    \]
    Define $P = Bern(a)$ and $Q = Bern(b)$. We have that $RR_q(x^0) \sim Bern(a')$, where $a' = a(1-q) + (1-a)q = \frac{ae^\epsilon - a + 1}{e^\epsilon+1}$, and $RR_q(x^1) \sim Bern(b')$ with $b' = \frac{b e^\epsilon - b + 1}{e^\epsilon+1}$. Applying Lemma~\ref{lem:kl-approx}, we have 
    \begin{align*}
        D_{KL}(P \| Q) &= \frac{(a'-b')^2}{\min \{b', 1-b'\}} \\
        &= \frac{(a-b)^2 \left(\frac{e^\epsilon-1}{e^\epsilon+1}\right)^2}{\frac{1}{e^\epsilon+1} + \min\{b, 1-b\}\left(\frac{e^\epsilon-1}{e^\epsilon+1}\right)} \\
        &\leq \frac{(a-b)^2(e^\epsilon-1)^2}{e^\epsilon+1},
    \end{align*}
    as desired.
\end{proof}

\subsection{Proof of Theorem~\ref{thm:input-lb}}

Suppose to the contrary that there was such a protocol with $\alpha$-honest and malicious error with $\alpha = \frac{m}{40} + \frac{\sqrt{n(e^\epsilon+1)}}{80(e^\epsilon-1)}$ and $\delta = 0.1$. Let $\{\calR_i\}_{i=1}^n$ be the local randomizers of the protocol. Consider an ``honest'' world where $G$ has no edges except to $\DO_n$. For $i = 1$ to $n-1$, let $z_i$ indicate whether the edge from $i$ to $n$ is present, and let $z_i \sim Bern(\frac{1}{2} + p)$ i.i.d. where $p =\frac{10\alpha}{n}$. Let $y_i$ indicate whether $\DO_i$ acts maliciously, and let $y_i \sim Bern(\frac{m}{2n})$, i.i.d. Finally, suppose that each malicious with an edge to $\DO_n$ behaves as if the edge doesn't exist. Formally, for $i = 1$ to $n-1$, the random variable 
\[ x_i^0 = 
\begin{cases}
    0 & y_i = 1 \\
    z_i & \text{otherwise}
\end{cases}
\]
defines the input of $\DO_i$ to the randomizer that concerns edges to $\DO_n$. Next,
$\DO_n$, who will behave honestly in this world, will use $w_i^0 = z_i$ for $i = 1, \ldots, n-1$, as their input. 
Thus, the responses in this world will be drawn from
\[
    \left( \calR_1(\langle \textbf{0}, x_1^0\rangle), \ldots, \calR_{n-1}(\langle \textbf{0}, x_{n-1}^0 \rangle), \calR_n(\langle w_1^0, \ldots, w_{n-1}^0 \rangle ) \right),
\]
where $\textbf{0}$ indicates a vector of $n-1$ 0s. Let $r^{0}_{(n-1)}$ indicate the first $n-1$ terms of the above tuple, and $r^{0}_n$ indicate the final term. By the honest error guarantee, the responses may be post-processed into an estimate $\hd_n(r^{0}_{(n-1)}, r^0_n)$ such that $|\hd_n - d_n| \leq \alpha$ for each graph $G$. Using Hoeffding's inequality, with probability at least $0.9$, $\DO_n$ has degree at least $\frac{n}{2} + 10 \alpha - 2 \sqrt{n}$ which is at least $\frac{n}{2} + \alpha$ since $\epsilon < 0.5$. Thus, $\Pr[\hd_n \geq \frac{n}{2}] \geq 0.8$, where the probability considers the randomness in $G$, and running the protocol described above.

In the second, ``malicious'' world, $G$ will be defined the same way, except each $z_i \sim Bern(\frac{1}{2} - p)$ i.i.d. For $i = 1$ to $n-1$, malicious users will be chosen the same way as before, and we may define $y_i$ as before. This time, each malicious user will behave as if there is an edge to user $n$, so they will use 
\[ x_i^1 = 
\begin{cases}
    1 & y_i = 1 \\
    z_i & \text{otherwise}
\end{cases}
\]
as their input to $\DO_n$. $\DO_n$ will act maliciously as well, by acting as if he has an edge to each malicious user, and with probability $s = 1-\frac{(1-2p)}{(1+2p)(1-m/2n)}$, to an honest user as well. Formally, he will define $y_i' \sim Bern(s)$, and compute
\[ w_i^1 = 
\begin{cases}
    1 & y_i = 1 \text{ or } y_i' = 1 \\
    z_i & \text{otherwise}
\end{cases}
\]
Thus, the output of the protocol in this world will be 
\[
    \left( \calR_1(\langle \textbf{0}, x_1^1\rangle), \ldots, \calR_{n-1}(\langle \textbf{0}, x_{n-1}^1 \rangle), \calR_n(\langle w_1^1, \ldots, w_{n-1}^1 \rangle ) \right).
\]
Let $r^{1}_{(n-1)}$ indicate the first $n-1$ terms of the above tuple, and $r^{1}_n$ indicate the final term. By the error guarantees, we have that in this world, $\Pr[\hd_n = \bot \vee \hd_n \leq \frac{n}{2}] \geq 0.8$. This is a disjoint event from the event in the honest world, and in particular, it implies that $TVD((r_{(n-1)}^0, r_n^0), (r_{(n-1)}^1, r_n^1)) \geq 0.6$.

However, observe that each $w_i^0$ is identically distributed to each $w_i^1$---they are both drawn from $Bern(\frac{m}{2n} + (1-\frac{m}{2n})(\frac{1}{2}+p))$. In the honest world, we have that $\Pr[x_i^0 = 0 | w_i^0 = 0] = 1$, and $\Pr[x_i^0 = 0 | w_i^0 = 1] = \Pr[y_i = 1]$ since the only way $x_i^0 = 0$ can occur if $z_i^0 = 1$ is if $y_i = 1$.

Similarly, in the malicious world we have $\Pr[x_i^1 = 0 | w_i^1 = 0] = 1$, and
\begin{align*}
    &\Pr[x_i^1 = 0 | w_i^1 = 1] \\
    &= \frac{\Pr[x_i^1 = 0, w_i^1 = 1]}{\Pr[w_i^1 = 1]} \\
    &= \frac{\Pr[y_i = 0, y_i' = 1, z_i = 0]}{1-\Pr[y_i = 0, y_i' = 0, z_i = 0]} \\
    &= \frac{2p - \frac{m}{4n} + p\frac{m}{2n}}{1-(1-m/2n)(1-s)(1/2+p)} \\
    &= \frac{2p - \frac{m}{4n} + p\frac{m}{2n}}{1/2+p}
\end{align*}

Now, we will derive a contradiction using the information between the two worlds. In either world,we have that $r_n$ is a post-processing of $w_{(n-1)}$. By the data processing inequality and chain rule of KL divergences, we have
\begin{align*}
    &D_{KL}((r_{(n-1)}^0, r_n^0) \| (r_{(n-1)}^1, r_n^1)) \\
    &\leq D_{KL}((r_{(n-1)}^0, w_{(n-1)}^0) \| (r_{(n-1)}^1, w_{(n-1)}^1)) \\ &= D_{KL}(w_{(n-1)}^0 \| w_{(n-1)}^1) + D_{KL}(r_{(n-1)}^0 | w_{(n-1)}^0 \| r_{(n-1)}^1 | w_{(n-1)}^1).
\end{align*}
The first term is clearly $0$. Using conditional independence of the $x_i$s and the protocols, we can write 
\begin{align*}
    &D_{KL}(r_{(n-1)}^0 | w_{(n-1)}^0 \| r_{(n-1)}^1 | w_{(n-1)}^1) \\
    &= \sum_{i=1}^{n-1} D_{KL}(r_{i}^0 | w_{i}^0 \| r_{i}^1 | w_{i}^1).
\end{align*}
Now, we apply Lemma~\ref{lem:kl-ub}, which states that
\[
    D_{KL}(r_i^0 | w_i^0 \| r_i^1 | w_i^1) \leq \frac{(e^\epsilon-1)^2}{e^\epsilon+1}TVD(r_i^0 | w_i^0, r_i^1 | w_i^1).
\]
We can solve
\begin{align*}
    TVD(x_i^0 | w_i^0 = 0, x_i^1 | w_i^1 = 0) &= 0 \\
    TVD(x_i^0 | w_i^0 = 1, x_i^1 | w_i^1 = 1) &= \left|\frac{2p - \frac{m}{4n} + p\frac{m}{2n}}{1/2+p} - \frac{m}{2n}\right| \\
    &= \frac{|2p - m/2n|}{1/2+p}.
\end{align*}
Plugging in $p = \frac{10 \alpha}{n} = \frac{m}{4n} + \frac{\sqrt{e^\epsilon+1}}{8\sqrt{n}(e^\epsilon-1)}$, we may bound the second equation as $\frac{\sqrt{e^\epsilon+1}}{2\sqrt{n}(e^\epsilon-1)}$ and thus $D_{KL}(r_i^0 | w_i^0 \| r_i^1 | w_i^1) \leq \frac{1}{4n} $. Thus, we have
\[
D_{KL}(r_{(n-1)}^0 | w_{(n-1)}^0 \| r_{(n-1)}^1 | w_{(n-1)}^1) \leq \frac{1}{4}.
\]

Finally, by Pinsker's inequality, we know that 
\begin{align*}
    TVD((r_{(n-1)}^0, r_n^0), (r_{(n-1)}^1, r_n^1)) \leq \sqrt{\frac{1}{32}} < 0.5, 
\end{align*}
completing the contradiction.

\subsection{Proof of Theorem~\ref{thm:output-lb}}

    Suppose to the contrary there is a protocol given by local randomizers $\calR_i$ which attains both $\alpha$ honest and malicious error for $\delta = 0.1$, and $\alpha \approx \frac{m}{4 \epsilon} + \frac{\sqrt{n}}{40\epsilon}$, whose specific value we will set later.

    Consider an ``honest'' world where $G$ has no edges except to user $n$. For $i = 1$ to $n-1$, let $z_i^0$ be a variable indicating whether the edge from $i$ to $n$ is present and drawn from $ Bern(\frac{1}{2} + p)$ i.i.d. where $p = \frac{2\alpha}{n}$. Each user will act completely honestly: thus, for each $1 \leq i \leq n-1$, the server will receive the pair $(r_i^0, s_i^0)$ about the variable $z_i^0$, where $r_i^0$ is an $\epsilon$-LDP response from user $i$, and $s_i^0$ is an $\epsilon$-LDP response from user $n$ (because of the assumption that user $n$'s protocol is edgewise factorable). Thus, $(r_i^0, s_i^0)$ is the result of a $2\epsilon$-LDP protocol $\calR_i'$ applied to $z_i^0$. By Lemma~\ref{lem:ldp-postproc}, there exist distributions $R_i^0, R_i^1$ such that $(r_i^0, s_i^0)$ are drawn from the distribution 
    \begin{align*}
        &\;(\tfrac{1}{2}-p) \calR_i'(0) + (\tfrac{1}{2}+p) \calR_i'(1) \\
        &\qquad = (\tfrac{1}{2}-p) ((1-q') R_i^0 + q' R_i^1) + (\tfrac{1}{2}+p) (q' R_i^0 + (1-q') R_i^1) \\
        &\qquad = (\tfrac{1}{2}-p+2pq')R_i^0 + (\tfrac{1}{2}+p-2pq')R_i^1 \\
        &\qquad = a^0 R_i^0 + (1-a^0) R_i^1
    \end{align*}
    where $q' = \frac{1}{1 + e^{2\epsilon}}$.

    In the malicious world, setup will be similar. We will let $z_i^1$ indicate whether the edge from $i$ to $n$ is present, and it is drawn from $Bern(\frac{1}{2}-p)$ i.i.d. Users $n$ and some users from $1$ to $n-1$ will act maliciously. In particular, the variable $y_i \sim Bern(\frac{m}{2n})$ will indicate whether user $i$ acts maliciously. If $y_i = 1$, then users $i$ and $n$ will collude so that their responses are drawn from $R_i^1$. Otherwise, user $i$ and user $n$ will behave honestly. Therefore, the responses $(r_i^1, s_i^1)$ in this world are drawn from the distribution
    \begin{align*}
        &\;(1-\frac{m}{2n})\left((\tfrac{1}{2}+p) \calR_i'(0) + (\tfrac{1}{2}-p) \calR_i'(1)\right) + \tfrac{m}{2n} R_i^1 \\
        &\qquad=(1-\frac{m}{2n})\left((\tfrac{1}{2}+p-2pq')R_i^0 + (\tfrac{1}{2}-p+2pq')R_i^1\right) + \tfrac{m}{2n} R_i^1 \\
        &\qquad=(1-\frac{m}{2n})(\tfrac{1}{2}+p-2pq')R_i^0 + \left((1-\frac{m}{2n})(\tfrac{1}{2}-p+2pq') + \tfrac{m}{2n} \right)R_i^1 \\
        &\qquad= a^1 R_i^0 + (1-a^1)R_i^1.
    \end{align*}
    By Lemma~\ref{lem:kl-approx} and the post-processing inequality, we have 
    \[
        D_{KL} (r_i^0, s_i^0 \| r_i^1, s_i^1) \leq \frac{2(a^0 - a^1)^2}{\min\{a^1, 1-a^1\}}.
    \]
    
    We can solve 
    \[
    a^0 - a^1 = \frac{m}{4n} - p(2-\frac{m}{2n})(1-2q'),
    \]
    and by setting $p = \frac{m/4n}{(2-\frac{m}{8n})(1-2q')} + \frac{1}{2(1-2q')\sqrt{n}}$, we will have $(a^0 - a^1)^2 \leq \frac{1}{16n}$. This will result in a value of $a^1$ which is nearly $\frac{1}{2}$, and we obtain $D_{KL}(r_i^0, s_i^0 \| r_i^1, s_i^1) \leq \frac{1}{4n}$. This shows that
    \[
        D_{KL}(r_1^0, \ldots, r_n^0, s_1^0, \ldots, s_n^0 \| r_1^1, \ldots, r_n^1, s_1^1, \ldots, s_n^1) \leq \frac{1}{4}.
    \]
    Applying Pinsker's inequality, we have
    \[
    TVD(r_1^0, \ldots, r_n^0, s_i^0, \ldots, s_n^0 \| r_1^1, \ldots, r_n^1, s_i^1, \ldots, s_n^1) \leq \sqrt{\frac{D_{KL}}{2}} < 0.5.
    \]
    
    However, taking $\alpha \leq \frac{n p}{10}$, in the honest world, the true degree of user $n$ is at least $\frac{n}{2} + \alpha$ with probability at least $0.9$; in the malicious world, the true degree of user $n$ is at most $\frac{n}{2} - \alpha$ with probability at least $0.9$. Following a similar argument to that of Theorem~\ref{thm:input-lb}, we can show
    \[ TVD(r_1^0, \ldots, r_n^0, s_i^0, \ldots, s_n^0 \| r_1^1, \ldots, r_n^1, s_i^1, \ldots, s_n^1) \geq 0.6,\]

    violating our TVD lower bound.

\section{Proofs of Algorithmic Guarantees}\label{app:proofs}
First, we introduce notation and preliminary results used in our proofs.
\subsection{Notation} In this section, for a graph $G$ with vertices $[n]$, we let $d_i(S)$ for $S \subseteq [n]$ denote the number of neighbors of node $i$ in the set $S$.
We will often abuse notation for a set $\calS$ of users by also letting $\calS$ be the indices of the users in the set. Thus, we may let $i \in \calS$ be the index of some user in $\calS$.
Finally, we sometimes refer to user $\DO_i$ simply as user $i$.

\subsection{Preliminary Results}
We will heavily make use of the following concentration result:

\begin{lemma}\label{lem:bern-concentration}
    Let $X_1, \ldots, X_n$ denote independent random variables such that $X_i \sim \bern(p_i)$. Let $v = \sum_{i=1}^n p_i(1-p_i)$, and $X = \sum_{i=1}^n X_i$. Then,
    \begin{align*}
        \Pr[|X - \E[X]| \geq \max\{1.5 \ln \frac{2}{\delta}, \sqrt{2v\ln \frac{2}{\delta}}\}] &\leq \delta.
    \end{align*}
\end{lemma}
\begin{proof}
    Center the random variables so that $Z_i = X_i - p_i$; the variance $v$ does not change. We know by Bernstein's inequality that for all $t \geq 0$,
    \[
        \Pr[Z \geq t] \leq \exp\left(\frac{-t^2}{2(v + t / 3)}\right) \leq \exp\left(- \max \left\{ \frac{t^2}{2v}, \frac{3t}{2}\right\}\right).
    \]
    Thus, if $t \geq \max\{\frac{3}{2} \ln \frac{2}{\delta}, \sqrt{2 v \ln \frac{2}{\delta}}\}$, then $\Pr[Z \geq t] \leq \frac{\delta}{2}$. Applying the argument to $-Z$, we obtain the two-sized bound.
    
\end{proof}

Next, we observe the following facts about randomized response.
\begin{fact}\label{fact:rr-exp}
If user $i \in \calH$, then $\E[q_i[j]] = \rho + (1-2\rho) d_i(j)$.
\end{fact}

\begin{fact}\label{fact:2rr-exp}
If users $i,j \in \calH$, then $\E[q_i[j]q_j[i]] = \rho^2 + (1-2\rho) d_i(j)$.
\end{fact}
\subsection{Proof of Theorem \ref{thm:response:laplace}}\label{app:thm:response:laplace}
Recall that in the Laplace mechanism, a user's degree estimate $\hat{d}_i$ is simply $d_i + L_i$, where $L_i \sim Lap(\frac{1}{\epsilon})$ is a Laplace random variable generated by the user.\\\\\noindent
\textbf{Honest error.} The  guarantee for honest error follows from the concentration of Laplace distribution: Each Laplace random variable $L_i$ satisfies $|\Pr[|L_i| \geq t] \leq e^{-t\epsilon}$. Setting $t = \frac{1}{\epsilon}\ln \frac{n}{\delta}$ and applying the union bound, each of the $n$ Laplace variables will satisfy $|L_i| \leq \frac{1}{\epsilon}\ln \frac{n}{\delta}$ with probability $1-\delta$, and if this holds, then $|d_i - \hat{d}_i| \leq \frac{1}{\epsilon}\ln \frac{n}{\delta}$ for honest users.
\\
\noindent\textbf{Existence of Attack} Consider the empty graph. A malicious user $\DO_i$ may report $n-1$, the maximum possible degree, and thus $\hat{d}_i = n-1$ while $d_i = 0$.

\subsection{Proof of Theorem~\ref{thm:response:naive}}\label{app:thm:response:naive}
\noindent \textbf{Honest Error.}

As defined in \DegRRNaive{}, the estimator $count_i^1$ is given by 
\begin{gather}count^1_i=(\sum_{j < i} q_j[i] + \sum_{i < j} q_i[j])\end{gather}
We may alternatively split the above sum into honest bits and malicious bits as $count^1_i = hon_i + mal_i$. Here, 
\begin{gather*}
hon_i = \sum_{j < i, j \in \calH} q_j[i] + \sum_{i < j} q_i[j] \\
mal_i = \sum_{j < i, j \in \calM} q_j[i].
\end{gather*}
Since all bits in the sum $hon_i$ are honest, by Fact~\ref{fact:rr-exp} we have $\E[hon_i] = \rho|\calH_i| + (1-2\rho) d_i(\calH_i)$, where $\calH_i = \calH \cup \{1, 2, \ldots, i-1\}$. 

Furthermore, $0 \leq mal_i \leq |\calM_i|$, where $\calM_i = [n] \setminus \calH_i$. This implies $|mal_i - E_{mal,i}| \leq |\calM_i|$, where $E_{mal,i} = \rho|\calM_i| + (1-2\rho) d_i(\calM_i)$.
By Lemma~\ref{lem:bern-concentration} and a union bound, with probability $1-\delta$, we have for all $i \in \calH_i$ that
\begin{gather*}
\left|hon_i - \E[hon_i] + mal_i - E_{mal,i} \right| \leq  \sqrt{2\rho n \ln \frac{2n}{\delta}} + |\calM_i| \\ 
\implies \left|count_i^1 - \rho n - (1-2\rho)d_i \right| \leq  \sqrt{2\rho n \ln \frac{2n}{\delta}} + m \\ 
\implies |\hat{d}_i - d_i | \leq \frac{1}{1-2\rho} \sqrt{2\rho n \ln \frac{2n}{\delta}} + \frac{m}{1-2\rho}.
\end{gather*}

\noindent\textbf{Existence of the Attack} Consider the empty graph, and suppose that user $n$ is malicious. Since this user reports all his edges, he may report $q_i[j] = 1$ for all $j < 1$. Thus, $\hat{d}_n \geq n-1$, but $d_n = 0$, showing that $\Pr[\mathrm{err}^{mal}(\hat{d}_n, d_n) = n-1] = 1$.

\subsection{Proof of Theorem~\ref{thm:response:check}} \label{app:b3a3}
Recall the key quantities defined in \DegRRCheck{} (Algorithm~\ref{alg:degrrcheck}):
\begin{gather}
      count_i^{11} = \sum_{j \in [n] \setminus i} q_{i}[j] q_{j}[i] \\
      count_i^{01} = \sum_{j \in [n] \setminus i} (1-q_{i}[j])q_{j}[i].
\end{gather}
We now prove honest error.

\noindent \textbf{Honest error.} 
It will be helpful to split $count_i^{11} = hon_i^{11} + mal_i^{11}$, where $hon_i^{11} = \sum_{j \in \calH \setminus i} q_{i}[j] q_{j}[i]$ and $mal_i^{11} = \sum_{j \in \calM \setminus i} q_{ij} q_{ji}$. We define $hon_i^{01}$ and $mal_i^{01}$ similarly such that they satisfy $count_i^{01} = hon_i^{01} + mal_i^{01}$. We break the proof into two claims: showing that honest users receive an accurate estimate and that they are not disqualified.

\begin{claim}\label{claim:honest-response-concentration-1} We have
\[
    \Pr[\forall \DO_i \in \calH.~|\hat{d}_i - d_i| \geq \tfrac{m + 2 \sqrt{\rho n \ln \frac{4n}{\delta}}}{1-2\rho}] \leq \frac{\delta}{2}.
\]
\end{claim}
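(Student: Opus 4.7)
The plan is to decompose $r_i^{11} = hon_i^{11} + mal_i^{11}$, where $hon_i^{11} = \sum_{j \in \calH_i} q_{ij}q_{ji}$ is a sum of independent Bernoullis (since both endpoints of each product are honest randomized-response outputs) and $mal_i^{11} = \sum_{j \in \calM_i} q_{ij}q_{ji}$ is a bounded but potentially adversarial sum. Since $\DO_i$ is honest, for $j \in \calH_i$ the product $q_{ij}q_{ji}$ is $1$ with probability $(1-\rho)^2$ when $e_{ij} \in E$ and with probability $\rho^2$ otherwise, so
\[
    E_{hon} := \E[hon_i^{11}] = n_i^+(1-\rho)^2 + n_i^-\rho^2.
\]
A short computation using $n_i^+ + n_i^- = n-1-m$ gives the de-biasing identity
\[
    \frac{E_{hon} - \rho^2(n-1)}{1-2\rho} = n_i^+ - \frac{m\rho^2}{1-2\rho},
\]
which is the quantity that $\hat{d}_i$ would hit in expectation if there were no malicious contribution.

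Next I would concentrate $hon_i^{11}$ with Corollary~\ref{coro:bernstein}. Each of the $n-1-m$ Bernoulli summands has bias in $\{\rho^2,(1-\rho)^2\} \subseteq [0,2\rho] \cup [1-2\rho,1]$, so taking $\kappa = 2\rho$ yields
\[
    \Pr\bigl[|hon_i^{11} - E_{hon}| \geq \sqrt{6 \rho n \ln \tfrac{2}{\delta}}\bigr] \leq \tfrac{\delta}{2},
\]
where the assumption $\tfrac{4}{3}e^\epsilon \ln \tfrac{2}{\delta} < n$ in Theorem~\ref{thm:response:check} ensures the regularity hypothesis of the corollary is satisfied. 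For the adversarial component I would simply use the trivial bound $mal_i^{11} \in [0,m]$, which forces $|mal_i^{11} - \rho^2 m| \leq m$ deterministically; no concentration is possible here because malicious users can choose $q_{ji}$ arbitrarily.

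Finally, I would combine the three ingredients. Writing $d_i = n_i^+ + m_i^+$ with $m_i^+ \leq m$, the triangle inequality gives
\[
    |\hat{d}_i - d_i| \leq \frac{|hon_i^{11} - E_{hon}|}{1-2\rho} + \frac{|mal_i^{11} - m\rho^2|}{1-2\rho} + m_i^+,
\]
and inserting the two bounds above yields
\[
    |\hat{d}_i - d_i| \leq m + \frac{m + \sqrt{6\rho n \ln \tfrac{2}{\delta}}}{1-2\rho}
\]
with probability at least $1-\tfrac{\delta}{2}$, which is exactly the claim. The only subtle step is the first one: verifying the precise form of $E_{hon}$ and the cancellation that leaves the residual $\frac{m\rho^2}{1-2\rho}$ offset absorbable into the additive $m$ from $m_i^+$ (together with the $\frac{m}{1-2\rho}$ contribution from $mal_i^{11}$). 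Everything else is Bernstein-type concentration and triangle inequalities; no further lemmas are needed.
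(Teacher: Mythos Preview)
Your proposal is correct and follows essentially the same approach as the paper: decompose $r_i^{11}=hon_i^{11}+mal_i^{11}$, concentrate $hon_i^{11}$ around $E_{hon}$ via Corollary~\ref{coro:bernstein} with $\kappa=2\rho$, use the deterministic bound $|mal_i^{11}-\rho^2 m|\le m$, and finish with the triangle inequality together with $|n_i^+-d_i|\le m$. The only cosmetic difference is that the paper combines $E_{hon}+\rho^2 m=\rho^2(n-1)+(1-2\rho)n_i^+$ directly (so the intermediate step is $|\hat d_i-n_i^+|\le\frac{m+\sqrt{6\rho n\ln(2/\delta)}}{1-2\rho}$), whereas you carry the identity $\frac{E_{hon}-\rho^2(n-1)}{1-2\rho}=n_i^+-\frac{m\rho^2}{1-2\rho}$ as a separate step; in fact that offset cancels exactly once you center $mal_i^{11}$ at $\rho^2 m$, so there is nothing left to ``absorb'' and your triangle inequality is already tight.
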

\begin{proof} 
Let $\DO_i \in \calH$. Then, $hon_i^{11}$ is a sum of $h-1$ Bernoulli random variables with $p = \rho^2$ or $(1-\rho)^2$. By Fact~\ref{fact:2rr-exp}, we have
\begin{align*}
    \E[hon_i^{11}] &= \rho^2 (h-1) + (1-2\rho) d_i(\calH)
\end{align*}
Now, $v$ defined in Lemma~\ref{lem:bern-concentration} satisfies $(h-1) \rho^2 \leq v \leq (h-1)(1-(1-\rho)^2) \leq 2(h-1)\rho$. Applying the Lemma and a union bound, we have with probability at least $1-\frac{\delta}{2}$ that for all $i \in \calH$,
\begin{equation}\label{eq:hon-player-bits}
    |hon_i^{11} - \E[hon_i^{11}]| \leq 2\sqrt{(h-1)\rho \ln \tfrac{4n}{\delta}}.
\end{equation}

On the other hand, we have that $0 \leq mal_i^{11} \leq m$, so if we let $E_{mal,i}^{11} = \rho^2 m + (1-2\rho) d_i(\calM)$ (defined for convenience later), then $|mal_i^{11} - E_{mal,i}^{11}| \leq m$.

Applying the triangle inequality, the following holds over all $i \in \calH$:
\begin{align*}
&|hon_i^{11} - \E[hon_i^{11}] + mal_i^{11} - E_{mal,i}^{11}|
\leq m + 2\sqrt{ \rho n \ln \tfrac{4n}{\delta}} \\
& \implies |count_i^{11} - \rho^2 (n-1) - (1-2\rho) d_i| \leq m + 2\sqrt{ \rho n \ln \tfrac{4n}{\delta}} \\
& \implies |\hat{d}_i - d_i| \leq \frac{m + 2\sqrt{ \rho n \ln \tfrac{4n}{\delta}}}{1-2\rho}
\end{align*}
This proves the claim.
\end{proof}
Next, we show that honest users are not likely to be disqualified.
\begin{claim}\label{claim:honest-response-concentration-2} We have
\[
    \Pr[\forall \DO_i \in \calH.~|count_i^{01} - \rho(1-\rho)(n-1)| \geq \tau] \leq \frac{\delta}{2},
\]
\end{claim}
where $\tau = m + \sqrt{2 \rho n \ln \frac{4 n}{\delta}}$

\begin{proof}
Let $\DO_i$ be honest. Then, the quantity $hon_i^{01}$ consists of $h-1$ Bernoulli random variables drawn from $\rho(1-\rho)$. We have
\[
    \E[hon_i^{01}] = \rho(1-\rho)(h-1).
\]
 As defined in Lemma~\ref{lem:bern-concentration}, $v$ satisfies $\frac{1}{2}(h-1)\rho \leq P \leq (h-1)\rho$.
Applying the Lemma and a union bound, we have with probability $1-\frac{\delta}{2}$ that for all $i \in \calH$, 
\begin{equation}\label{eq:hon-player-bits-2}
    |hon_i^{01} - \E[hon_i^{01}]| \leq \sqrt{2\rho (h-1) \ln \tfrac{4n}{\delta}}
\end{equation}
Noticing that $|mal_i^{01} - m \rho(1-\rho)| \leq m$, we have by the triangle inequality that
\[
    |count_i^{01} - \rho(1-\rho)(n-1)| \geq m + \sqrt{2\rho n \ln \tfrac{4n}{\delta}}.
\]
This concludes the proof.
\end{proof}
Putting it together,
$m (\frac{e^\epsilon+1}{e^{\epsilon}-1}) + \sqrt{n}\frac{2 \sqrt{(e^\epsilon+1)\ln \frac{4n}{\delta}}}{e^\epsilon-1}$-honest error follows.

\noindent \textbf{Malicious Error.} 

When player $i$ is a malicious player, we can still prove a tight bound on $count_{i}^{11} + count_{i}^{01}$, and this combined with the check in \DegRRNaive{} means that his degree estimate will be accurate.

\begin{claim}\label{claim:mal-response-concentration}
We have
\begin{equation*}
    \Pr[\forall i \in \calM.~|count_i^{11} + count_i^{01} 
			-(1-2\rho)d_i - \rho(n-1)| \leq \tau ] \geq 1-\delta,
\end{equation*}
where $\tau = m + \sqrt{2 \rho n \ln \frac{4 n}{\delta}}$.
\end{claim}
\begin{proof}
Observe that $count_i^{11} + count_i^{01} = \sum_{j=1, j\neq i}^n q_{j}[i]$. Let $hon_i^{1}$ denote the sum of the $q_{j}[i]$ where $j$ is honest, and $mal_i^{1}$ denote the sum of the malicious players. By Fact~\ref{fact:rr-exp}, we have $\E[hon_i^1] = d_i(\calH) (1-2\rho) + h \rho$. Applying a union bound over Lemma~\ref{lem:bern-concentration}, for all $i \in \calM$, we have with probability at most $\delta$ that
\begin{equation}\label{eq:good-event-3}
    |hon_i^1 - \E[hon_i^1]| \geq \sqrt{2\rho n \ln \tfrac{2m}{\delta}}
\end{equation}
Because $|mal_i^1 - (1-2\rho)d_i(\calM) - \rho (m-1)| \leq m$, the claim follows from the triangle inequality.
\end{proof}

To conclude the proof, consider any malicious user $i \in \calM$ is not disqualified ($\hat{d}_i \neq \bottom$),
as if he is then the maximum malicious error event trivially happens. Thus, it must be true that $|count_i^{01} - (n-1)\rho(1-\rho)| \leq \tau$. However, given this and the event in Claim~\ref{claim:mal-response-concentration} holds, it follows by the triangle inequality that
\begin{align*}
    |count_i^{11}-(1-2\rho)d_i - \rho^2(n-1)| &\leq 2 \tau \\
    |\hat{d}_i - d_i| &\leq \frac{2 \tau }{1-2\rho}
\end{align*}
This establishes $2m (\frac{e^\epsilon+1}{e^{\epsilon}-1}) + 4\sqrt{n}\frac{ \sqrt{(e^\epsilon+1)\ln \frac{4n}{\delta}}}{e^\epsilon-1}$-malicious error.

\subsection{Proof of Theorem~\ref{thm:rrlapchecka3}}
\textbf{Honest Error.}\label{app:thm:rrlapchecka3}
By Claim~\ref{claim:honest-response-concentration-2}, the first check in \DegHybrid{} will not set $\hat{d}_i = \bottom$ for any honest user with probability at least $1-\frac{\delta}{4}$. 
The variables $\hat{d}_i^{rr}$ in $\DegHybrid$ behave identically to $\hat{d}_i$ in \DegRRCheck{}. By Claim~\ref{claim:honest-response-concentration-1} 
we have for all users, $|\hat{d}_i^{rr} - d_i| \leq \frac{m + 2 \sqrt{\rho n \ln \frac{8n}{\delta}}}{1-2\rho}$,
with probability at least $1-\frac{\delta}{4}$. 

By concentration of Laplace random variables, we have for all $i \in \calH$ that $|\hat{d}_i^{lap} - d_i| \leq \frac{1}{\epsilon} \ln \frac{2n}{\delta}$ with probability at least $1-\frac{\delta}{2}$, 
and by the triangle inequality we have $|\hat{d}_i^{lap} - \hat{d}_i^{rr}| \leq \frac{m + 2 \sqrt{\rho n \ln \frac{8n}{\delta}}}{1-2\rho} + \frac{1}{\epsilon}\ln \frac{2n}{\delta}$. Thus, the second check will not set $\hat{d}_i = \bottom$ assuming these events hold, and the estimator $\hat{d}_i$ satisfies the honest error bounds of $\hat{d}_i^{lap}$.

\textbf{Malicious Error.}
Following the same argument we saw in the  proof of malicious error of Theorem~\ref{thm:response:check}, we can have that, with probability at least $1-\frac{\delta}{2}$, for all malicious users $i \in \calM$, we have $|\tilde{d}_i^{rr} - d_i| \leq \frac{2\tau}{1-2\rho}$. Suppose that $\hat{d}_i$ is not set to be $\bottom$. This implies that $|\tilde{d}_i^{rr} - \tilde{d}_i^{lap}| \leq \frac{2\tau}{1-2\rho} + \frac{1}{\epsilon}\log \frac{2n}{\delta}$.  By the triangle inequality, this implies
\[
    |\tilde{d}_i^{rr} - d_i| \leq \frac{4\tau}{1-2\rho} + \frac{1}{\epsilon} \log \frac{2n}{\delta}.
\]
This establishes $\frac{4\tau}{1-2\rho} + \frac{1}{\epsilon} \log \frac{2n}{\delta}$ malicious error.

\subsection{Proof of Theorem~\ref{thm:input:laplace}}\label{app:thm:input:laplace}
\textbf{Honest Error.} The honest error guarantee follows in the same way as Theorem~\ref{thm:response:laplace}.
\\
\noindent \textbf{Malicious Error.} Consider a malicious user $\DO_i$, and let $m_i$ be the malicious degree estimate sent by $\DO_i$, with $0 \leq m_i \leq n-1$. The estimator is given by $\hd_i = m_i+\eta, \eta \sim Lap(\frac{1}{\epsilon})$. 
Thus, $\Pr[|d_i - m_i - \eta| \geq n-1] \leq \Pr[\eta > 0] \leq \frac{1}{2}$.

\subsection{Proof of Theorem~\ref{thm:b2a2_easy}} \label{app:thm:b2a2_easy}
\noindent \textbf{Honest error .}
We follow the honest error proof of Theorem~\ref{thm:response:naive}, with the following change. Observe that $mal_i$ consists of $|\calM_i|$ Bernoulli random variables of mean either $\rho$ or $1-\rho$. Thus, with probability $1-\frac{\delta}{2}$, we have $|mal_i - \E[mal_i]| \leq \sqrt{2m\ln \frac{4m}{\delta}}$ for all $i \in \calM$. 

Thus, we can show $|mal_i - E_{mal,i}| \leq (1-2\rho) |\calM_i|$, where $E_{mal,i} = \rho |\calM_i| + (1-2\rho)d_i(M_i)$.
Finishing the proof, we can show 
\[
|\hat{d}_i - d_i | \leq \frac{1}{1-2\rho} (\sqrt{2\rho n \ln \tfrac{4n}{\delta}} + \sqrt{2m \ln \tfrac{4m}{\delta}}) + m.
\]
\noindent \textbf{Malicious Error.}

In order for $|d_i - \hat{d}_i| = n-1$, it is necessary for $|count_i^{1} - \rho(n-1) - (1-2\rho)d_i| \geq (1-2\rho)(n-1)$. We have $count_i^1$ is a sum of $n-1$ Bernoulli random variables of mean either $\rho$ or $1-\rho$, so it can be written as $\mu + Z_i$, where $Z_i$ is approximately a normal random variable of mean $0$. Observe that, since $\mu$ and $\rho(n-1) + (1-2\rho)d_i$ are in the interval $[\rho (n-1), (1-\rho)(n-1)]$, it is impossible for the difference $\mu - \rho(n-1) + (1-2\rho)d_i$ to exceed $(1-2\rho)(n-1)$ unless $Z_i$ has the correct sign, which happens with probability at most $\frac{1}{2}$. This establishes $n-1$-malicious error with $\delta = \frac{1}{2}$.
\subsection{Proof of Theorem~\ref{thm:input:check}}\label{app:b3a2}

\textbf{Honest Error.}
Our proof follows that of Theorem~\ref{thm:response:check}. We are able to prove stronger versions of the claims.

\begin{claim}\label{claim:honest-input-concentration-1}
We have
\[
    \Pr[\forall i \in \calH.~|\hat{d}_i - d_i| \geq m+\frac{\sqrt{8\max\{\rho n, m\} \ln \frac{8n}{\delta}}}{1-2\rho}] \leq \frac{\delta}{2}.
\]
\end{claim}

\begin{proof}
We can control $hon_i^{11}$ in exactly the same way as in Claim~\ref{claim:honest-response-concentration-1}, so~\eqref{eq:hon-player-bits} holds with probability $1-\frac{\delta}{4}$, for all $i \in \calH$.
On the other hand, we know that $mal_i^{11}$ is now a sum of $d_i(\calM)$ Bernoulli random variables with bias either $(1-\rho)^2$ or $(1-\rho)\rho$, plus a sum of $m - d_i(\calM)$ Bernoulli random variables with bias either $\rho(1-\rho)$ or $\rho^2$. Thus, 
\begin{multline*}
	\rho(1-2\rho)d_i(\calM) + \rho^2 m \leq \E[mal_i^{11}] \\
	\leq (1-\rho)(1-2\rho)d_i(\calM) + \rho(1-\rho) m.
\end{multline*}
From this, we can show $|\E[mal_i^{11}] - E_{mal,i}^{11}| \leq (1-2\rho)m$, where $E_{mal,i}^{11} = \rho^2 m + (1-2\rho) d_i(\calM)$.
 Applying Hoeffding's inequality, we conclude that with probability at least $1 - \frac{\delta}{4}$, for all $i \in \calH$,
\[
    |mal_i^{11} - \E[mal_i^{11}]| \geq \sqrt{2m \ln \tfrac{8n}{\delta}}
\]
Thus, $|mal_i^{11} - E_{mal,i}^{11}| \leq (1-2\rho)m + \sqrt{2m \ln \tfrac{8n}{\delta}}$. Applying the triangle inequality, we obtain
\begin{multline*}
\Pr[|hon_i^{11} + mal_i^{11} - \E[hon_i^{11}] - E_{mal,i}^{11}| \\ \geq \sqrt{2m \ln \tfrac{8n}{\delta}} + (1-2\rho)m + 2\sqrt{ \rho n \ln \tfrac{8n}{\delta}}] \leq \tfrac{\delta}{2}.
\end{multline*}

The result follows in the same way as in Claim~\ref{claim:honest-response-concentration-1}.
\end{proof}

\begin{claim}\label{claim:honest-input-concentration-2}
We have
\[
    \Pr[\forall i \in \calH.~|count_i^{01} - \rho(1-\rho)(n-1)| \geq \tau] \leq \tfrac{\delta}{2},
\]
where $\tau = m(1-2\rho) + \sqrt{8 \max\{\rho n, m\} \ln \frac{8n}{\delta}}$.
\end{claim}

\begin{proof}
We can follow the same line of reasoning as Claim~\ref{claim:honest-response-concentration-2} and conclude that~\eqref{eq:hon-player-bits-2} holds.
Similar to Claim~\ref{claim:honest-input-concentration-1}, we can show that $|mal_i^{01} - \rho(1-\rho) m| \leq (1-2\rho)m + \sqrt{2m \ln \frac{8n}{\delta}}$ with probability at least $\frac{\delta}{4}$, and applying the triangle inequality, we see
\begin{equation*}
    \Pr[|count_i^{01} - \rho(1-\rho)n| \geq m(1-2\rho) + \sqrt{2m \ln \tfrac{8n}{\delta}}  + \sqrt{2\rho n \ln \tfrac{8n}{\delta}}] \leq \tfrac{\delta}{2}.
\end{equation*}
\end{proof}

The $2m+\frac{4\sqrt{2 \max\{\rho n, m\} \ln \frac{8n}{\delta}}}{1-2\rho}$-honest error guarantee follows from the union bound over the two claims.

\textbf{Malicious Error.}
When player $i$ is a malicious player, he is still subject to the following claim:

\begin{claim}\label{claim:mal-input-concentration}
We have
\begin{equation*}
    \Pr[\forall i \in \calM.~|count_i^{11} + count_i^{01} -(1-2\rho)d_i - \rho(n-1)| \leq \tau|] \geq 1-\delta,
\end{equation*}
where $\tau = m(1-2\rho) + \sqrt{8 \max\{\rho n, m\} \ln \frac{8n}{\delta}}$.
\end{claim}
\begin{proof}
Observe that $count_i^{11} + count_i^{01} = \sum_{j=1,j\neq i}^n q_{j}[i] = hon_i^1 + mal_i^1$. With the same argument as in Claim~\ref{claim:mal-response-concentration}, we know that~\eqref{eq:good-event-3} holds. Similarly, each random variable in $mal_i^1$ comes from either $\bern(\rho)$ or $\bern(1-\rho)$, and thus with probability at least $1-\frac{\delta}{2}$, for all $i \in \calM$
\[
    |mal_i^1 - \E[mal_i^1]| \leq \sqrt{2m \ln \tfrac{4m}{\delta}}
\]
Since $\E[mal_i^1] \in [\rho m, (1-\rho) m]$, 
This implies that $|mal_i^1 - \rho m | \leq  (1-2\rho)m + \sqrt{2m \ln \frac{4m}{\delta}}$. Thus, the claim follows.
\end{proof}

Having established this claim, we can prove $2m + 4\sqrt{2 \max\{\rho n, m\} \ln \frac{8n}{\delta}}$ malicious error using an identical method as in the proof of malicious error for Theorem~\ref{thm:response:check}.

\subsection{Proof of Theorem~\ref{thm:rrlapchecka2}}\label{app:thm:rrlapchecka2}
\textbf{Honest Error.} As input manipulation attacks are a subset of response manipulation attacks, the same  guarantee for honest error as Theorem~\ref{thm:rrlapchecka3} holds.

\textbf{Malicious Error.}
The proof of this is similar to the malicious error proof of Theorem~\ref{thm:rrlapchecka3}, using previous results in Theorem~\ref{thm:input:check}.
\fi

\end{document}
\endinput